\documentclass[nofootinbib,reprint,amssymb,superscriptaddress,twocolumn]{revtex4-2}
\usepackage{graphicx}
\usepackage{subfig}
\usepackage{hyperref}
\usepackage{titletoc}
\titlecontents{section}
  [4em] 
  {} 
  {\contentslabel{4em}} 
  {\hspace*{-4em}} 
  {\titlerule*[0.5pc]{.}\contentspage} 
\usepackage{romannum}
\titlecontents{subsection}
  [7em] 
  {} 
  {\contentslabel{3.5em}} 
  {\hspace*{-3.5em}} 
  {\titlerule*[0.5pc]{.}\contentspage} 

\titlecontents{subsubsection}
  [7em] 
  {} 
  {\contentslabel{3.5em}} 
  {\hspace*{-3.5em}} 
  {\titlerule*[0.5pc]{.}\contentspage} 

\newcommand{\QFI}{\mathcal J_{\mathrm Q}}      
\usepackage{tabularx}
\usepackage{tikz}
\usepackage{amsmath,amsfonts,amssymb,mathtools}
\usepackage{amsthm}

\newtheorem{theorem}{Theorem}       \newtheorem{lemma}{Lemma}
\newtheorem{corollary}{Corollary} 
\usepackage{tasks}
\usepackage{csquotes}
\usepackage{diagbox}
\usepackage{float}
\usepackage[utf8]{inputenc}
\usepackage{csquotes}
\usepackage{float}
\usepackage{textcomp}
\usepackage{bbm,bm}
\usepackage[normalem]{ulem}
\usepackage{comment}
\usepackage{physics}
\usepackage{xcolor}
\usepackage{CJKutf8}

\usepackage{pgfplots}
\definecolor{blueprl}{RGB}{46,48,146}
\usepgfplotslibrary{groupplots}

\usepackage{lipsum}

      \def\QUICS{Joint Center for Quantum Information and Computer Science,
NIST/University of Maryland, College Park, Maryland 20742, USA}
\def\JQI{Joint Quantum Institute, NIST/University of Maryland, College Park, Maryland 20742, USA}
\def\ASTAR{Quantum Innovation Centre (Q.InC), Agency for Science, Technology and Research (A*STAR), 4 Fusionopolis Way, Kinesis \#05-01, Singapore 138635, Republic of Singapore}
\def\CQT{Centre for Quantum Technologies (CQT), National University of Singapore, Singapore 117543, Republic of Singapore}
\newcommand{\USNA}{Volgenau Department of Physics, United States Naval Academy, Annapolis, MD 21402, USA}
\begin{document}
\pagenumbering{arabic}
 \title{Super-Exponential Advantage of Squeezed Light in\\ Phase Estimation under Discrete Phase Randomisation}
 
\author{Lorc{\'a}n O. Conlon}
\email{lorcanconlon@gmail.com}
\affiliation{\JQI}
\affiliation{\QUICS}
\author{V Vijendran}
\affiliation{\CQT}
\affiliation{\ASTAR}
\author{Aritra Das}
\affiliation{ Center for Quantum Software and Innovation,
 University of Technology Sydney, Ultimo, NSW 2007, Australia}
\author{\begin{CJK}{UTF8}{gbsn}Yu-Xin Wang (王语馨)\end{CJK}}
\affiliation{\QUICS}
\author{Erfan~Abbasgholinejad}
\affiliation{\JQI}
\affiliation{\QUICS}
\author{Anthony J. Brady}
\affiliation{IonQ, Inc.}
\author{Jacob~Bringewatt}
\affiliation{\USNA}
\author{Sean~R.~Muleady}
\affiliation{\JQI}
\affiliation{\QUICS}
\author{Syed M. Assad}
\affiliation{\ASTAR}
\author{Alexey~V.~Gorshkov}
\affiliation{\JQI}
\affiliation{\QUICS}

\begin{abstract}
A paradigmatic task in quantum metrology is that of phase estimation. For this task it is commonly believed, under reasonable assumptions about prior information, that a quadratic advantage in sensitivity is the maximum possible advantage offered by the use of quantum resources. In this work, we consider a channel that implements a phase shift by an angle $2\pi k/M$, where $k$ is a uniformly random unknown integer, in addition to a small unknown phase shift that is our parameter of interest. We show that, subject to a constraint on the mean photon number per channel use, there can be a super-exponential separation between the quantum Fisher information (QFI) attainable using a squeezed vacuum state and that attainable using coherent states. This advantage also holds over any state obtained from a coherent state via phase covariant channels, including coherent states of smaller amplitude and displaced thermal states. Specifically, when the mean photon number per channel use is 1, the ratio of their QFI grows as $\Omega[(M/\sqrt{2}e)^M]$. For a mean photon number per channel use of $M/\log(M)$, the squeezed vacuum QFI is $\Omega(M^{3/2}/\sqrt{\log(M)})$, whereas for a coherent state it is at most $\exp[-M\log\log(M)+O(M)]$. We complement this QFI analysis with non-asymptotic estimation
results. For a mean photon number per channel use of $M/\log M$, an explicit
squeezed-vacuum protocol estimates the phase to accuracy $O(1/M)$ using
$O(\sqrt{M\log M})$ channel uses, whereas every coherent state protocol requires at least $\exp[M\log\log M-O(M)]$ channel uses, even allowing correlated probes and
collective measurements. We then consider the effects of noise (optical loss, thermal noise, and phase diffusion) and show that the ratio of their QFIs can still grow super-exponentially in $M$. Under fixed optical loss and thermal noise, choosing $\bar n=M^2/(\log M)^{3/2}$ allows squeezed vacuum with heterodyne detection to achieve accuracy $O(1/M)$ using sub-polynomially many channel uses, whereas every protocol in the specified classical family requires super-polynomially many. Finally, we demonstrate how experimentally feasible measurements such as variational photonic circuits or homodyne detection pave the way for a demonstration of this advantage.
\end{abstract}
\maketitle

\section{Introduction}
Quantum sensing employs quantum mechanical resources such as entanglement and squeezing to measure physical properties of systems better than is possible when restricted to classical resources~\cite{giovannetti2011advances}. Practical applications of quantum sensing have emerged in gravitational wave detection~\cite{aasi2013enhanced,tse2019quantum} and biosensing~\cite{casacio2021quantum}, alongside several other proof-of-principle demonstrations of quantum-enhanced sensing~\cite{daryanoosh2018experimental,nielsen2023deterministic,marciniak2022optimal,conlon2023approaching,conlon2023discriminating,yung2025saturating,das2026precision,yung2026beating,li2023optimal,franke_quantum-enhanced_2023,eckner_realizing_2023,robinson_direct_2024}. Conventional quantum metrology focuses on the estimation of a deterministic parameter unitarily encoded onto a quantum state~\cite{bollinger1996optimal}. A paradigmatic example is estimating an unknown rotation or phase shift. For $N$ qubits, we typically consider a rotation about the $Z$-axis~\cite{bollinger1996optimal}. In this setting, the standard quantum limit (SQL), corresponding to the mean-squared-error (MSE) attainable using unentangled probe states, scales as $1/N$. In contrast, when using a $N$-qubit entangled state, the Heisenberg limit can be achieved, whereby the MSE scales as $1/N^2$. For bosonic systems, the equivalent task is to estimate a phase shift using a state with mean photon number $\bar{n}$~\cite{demkowicz2015quantumlimits,barbieri2022optical}. In this setting, quantum resources can give a MSE that scales as $1/\bar{n}^2$ (the Heisenberg limit), whereas with classical resources the MSE scales as $1/\bar{n}$ (SQL).

While the advantage of the Heisenberg limit over the SQL is not insignificant, it is desirable to have as large a quantum advantage as possible. Along this line of thought, many works have sought out a beyond-quadratic advantage in quantum sensing. There are many different ways in which such a beyond-quadratic advantage can be engineered. These include coupling the unknown parameter of interest to many generators or high-weight generators~\cite{boixo2007generalized,roy2008exponentially}, operating near a quantum critical point~\cite{gong2008fidelity,gu2008fidelity}, coupling to time-dependent Hamiltonians~\cite{pang2017optimal}, and estimating exponentially many parameters~\cite{das2025holevo,chen2022quantum,chen2025efficient,gorecki2020optimal,brady2026precision,kwon2026universal}. In many of these proposals, the beyond-quadratic scaling is obtained by changing the encoding dynamics, exploiting many-body structure, or increasing the number of parameters. It is therefore natural to ask whether an exponential separation can arise in a single-parameter problem with a more standard encoding.

Recently, there have been several works attempting exactly this. Ref.~\cite{wang2024exponential} demonstrates an exponential advantage from entanglement in estimating correlated noise. Additionally, for estimating correlated parameters encoded unitarily across $N$ qubits, it is known that there is an exponential advantage when using entanglement~\cite{prabhu2026exponential}. In Refs.~\cite{kannan2026exponential2,cotler2026quantum,oh2024entanglement,Baiju2026Quantum}, exponential separations arise in learning or discriminating properties of bosonic channels or classical-field-induced distributions, rather than in the direct local estimation of a single parameter. These works leave open the question of whether there can exist an exponential separation in the direct estimation of a single unchanging parameter with an uncorrelated noise model. We note that Theorem 4 of Ref.~\cite{cotler2026quantum} demonstrates an exponential advantage in this setting for a decision-based problem. However, whether such an advantage can be attained for phase estimation or solely through a parameter estimation framework remains unclear.

In this work, we answer this question in the affirmative, demonstrating an exponential advantage of squeezed light over a family of classical states in phase estimation under a phase twirling channel. We consider a bosonic channel that rotates a state by an unknown angle $\theta_k=2\pi k/M$, where $k$ is an integer between $1$ and $M$. Our task is to estimate an additional rotation $\delta$, whereby every rotation angle is $\theta_k+\delta$, as shown in Fig.~\ref{fig:mpskscheme} (a) and (b). Remarkably, we demonstrate that with mean photon number up to $\bar{n}= M/\log(M)$, there is a super-exponential in $M$ separation between how well $\delta$ can be estimated with and without access to squeezing. Another interesting feature of the advantage presented here is that it displays some robustness to noise. Many quantum sensing protocols that achieve quadratic advantage in the noiseless setting display only a constant-factor advantage in the presence of noise~\cite{Huelga1997,dorner2009optimal,Escher2011,demkowicz2012elusive} unless certain conditions are satisfied~\cite{demkowicz2017adaptive,zhou2018achieving}. A similar fragility occurs in quantum learning: in many protocols where ideal quantum resources reduce the sample complexity from exponential to system-size independent, fixed physical noise can make the quantum sample complexity exponential, leaving an advantage only in the exponent~\cite{oh2024entanglement,seif2026entanglement}. Here, in contrast, we show that when $\bar{n}=M^2/(\log M)^{3/2}$, there is a super-polynomial advantage in terms of the number of samples required to achieve a certain accuracy $\epsilon=O(1/M)$ when using squeezed vacuum ($\nu_{\rm SV}^{\star}=M^{o(1)}$) compared to a family of classical states ($\nu_{\rm cl}^{\star}=M^{\omega(1)}$) in the presence of constant optical loss and thermal noise.

Taken together the features displayed by phase estimation under discrete phase randomisation make it an attractive problem for both near-term and long-term demonstrations of quantum advantage. On the practical side, this task is related to the calibration of quantum communication devices~\cite{cao2015discrete,slimane2012blind,wang2023discrete,bauml2024security,kanitschar2022optimizing} (Fig.~\ref{fig:mpskscheme} (c)). Random phase patterns are also used to calibrate spatial light modulators~\cite{schroff2024rapid}, suggesting that blind calibration of randomized phase devices may be relevant beyond communication systems.

The layout of this paper is as follows. In Section~\ref{sec:prelim} we introduce the preliminary material needed to understand our results. In Section~\ref{sec:metrologyworse} we present our main results, demonstrating super-exponential advantages both in the ideal setting and in the presence of noise. Then in Section~\ref{sec:implementation} we discuss possible near-term and future implementations of our protocols. Finally in Section~\ref{sec:discussion} we present some concluding remarks.

\section{Preliminary material}
\label{sec:prelim}

 In this work, we are interested in estimating an unknown parameter $\delta$, encoded in a quantum system $\rho=\rho(\delta)$. To extract information about $\delta$, we measure the quantum system. A measurement is described by a positive operator-valued measure (POVM) which is a set of positive operators $\{\Pi_k\}$ that sum to the identity $\sum_{k}\Pi_{k}=I_d$. The $k$-th measurement outcome occurs with probability
\mbox{$p_k=\text{Tr}[\rho\Pi_{k}]$}. Based on the observed measurement outcomes, the experimenter arrives at an estimated value of $\delta$, $\hat{\delta}$. There are many different figures of merit for characterising how good an estimation procedure is. Here we focus on two criteria that are particularly informative: 1) parameter estimation where we take as our figure of merit the MSE and 2) Probably Approximately Correct (PAC) learning where the figure of merit is the number of samples required to learn an unknown parameter to a certain accuracy with a certain probability.

\subsection{Parameter estimation}
In the parameter estimation framework, we restrict ourselves to local estimation around a known value $\delta_{0}$~\cite{fujiwara2006strong,holevo2011probabilistic}. A fundamental bound on the MSE, given a specific POVM, can be obtained from the classical Fisher information (CFI):
\begin{equation}
\label{eq:CFIdefin}
\mathcal{J}=\mathbb{E}\left[\left(\frac{\partial \log (P(\delta))}{\partial \delta}\right)^2\right],
\end{equation}
where $P(\delta)$ is the probability distribution of measurement outcomes and the expectation value is taken over all possible measurement outcomes. Defining the MSE as $v_{\delta}=\mathbb{E}[(\hat{\delta}-\delta)^2]\;$, we have the following bound:
\begin{equation}
    v_{\delta}\geq\frac{1}{\nu\mathcal{J}}\;,
\end{equation}
where $\nu$ is the number of experimental repetitions. As $\nu$ is a purely classical effect in this framework, we shall drop any dependence on $\nu$ going forward when discussing Fisher information.

While the CFI is very useful, it requires a specific POVM in order to be computed. To alleviate this requirement we turn to the quantum Fisher information (QFI), $\QFI$. Given a quantum state $\rho(\delta)$, one can compute the symmetric logarithmic derivative (SLD) operator $\mathcal{L}_{\delta}$ as
$\frac{\partial\rho}{\partial\delta}=\frac{1}{2}(\rho \mathcal{L}_{\delta}+\mathcal{L}_{\delta}\rho)\;$. From this, the QFI can be computed as
\begin{equation}
    \QFI=\text{Tr}[\rho \mathcal{L}_{\delta}^2]\;.
\end{equation}
The QFI then bounds the MSE for any POVM and unbiased estimator as
\begin{equation}
  v_{\delta}\geq\frac{1}{\mathcal{J}}\geq\frac{1}{\QFI}\;.
\end{equation}
For estimating a single unknown parameter, the measurement and classical post-processing required to achieve equality in the above bound are known.

\subsection{Probably Approximately Correct Learning}
The PAC framework quantifies estimation performance at finite sample size without the strict requirement of local unbiasedness. In PAC learning, we ask what is the minimum number of samples $\nu$ required to produce an estimate $\hat{\delta}$ such that
\begin{equation}
    \text{Pr}[|\hat{\delta}-\delta|\leq\epsilon] \geq1-\zeta\;,
\end{equation}
for every $\delta$ value in the allowed domain $\Theta$. When a protocol satisfies this definition, we will call it $(\epsilon,\zeta)$ accurate. This is a standard figure of merit in, for example, Pauli channel learning~\cite{chen2022exponential,chen2024tight}. For all instances of the PAC learning tasks considered here, we explicitly forbid quantum memory across different channel uses, i.e.~we only allow protocols where quantum states are prepared one at a time, they interact with the channel and are measured individually. Adaptively updating the quantum state based on previous measurement outcomes is allowed. Our classical lower bounds hold more generally, for classically correlated probes, collective measurements on all outputs and adaptive strategies.

\section{Exponential advantage in phase estimation with squeezing}
\label{sec:metrologyworse}
We now present our main result, demonstrating that we can modify a standard task in quantum sensing (phase estimation) with quadratic quantum sensitivity enhancement in such a way as to obtain a super-exponential separation between squeezed vacuum states and a family of classical states. 

\subsection{Channel model}

We consider phase estimation with a single bosonic mode in the presence of a uniform $\mathbb Z_M$ phase twirl. Each channel use applies an independently sampled rotation $2\pi k/M$, with the value of $k$ hidden, together with the unknown phase shift of interest $\delta$:
\begin{equation}
\begin{split}
     \mathcal E_{\delta,M}(\ket{\psi})
    &:=e^{\mathrm i\delta\hat n}\mathcal T_M(\ket{\psi})e^{-\mathrm i\delta\hat n},
    \qquad\\
    \mathcal T_M(\ket{\psi})
    &:=\frac1M\sum_{k=0}^{M-1}
    e^{2\pi\mathrm i k\hat n/M}\ket{\psi}\bra{\psi}e^{-2\pi\mathrm i k\hat n/M}\;,
\end{split}
\label{eq:sample-channel}
\end{equation}
where $\hat{n}$ is the photon number operator. Defining $M=2^m$, the twirling channel can also be thought of as a $m$-bit rotation channel. Note that, by definition, when the index $k$ is unknown, $\delta$ is only distinguishable within the interval $\delta\in[-\pi/M,\pi/M)$. As such, when considering PAC learning tasks, the domain of allowed $\delta$ values is $\Theta_M=[-a/M,a/M]$ for some value of $a$, $0<a<\pi/2$, fixed independently of $M$. When computing the QFI or CFI we are implicitly assuming that the value of $\delta$ is known locally. Throughout we assume $M$ to be even.

We emphasise that this is a standard phase estimation task: we can separate the twirling rotation from the rotation by $\delta$, $\mathcal E_{\delta, M} = \mathcal E_{\delta,1} \circ \mathcal E_{0, M}$. For example, when $M=1$ ($m=0$), this reduces exactly to standard phase estimation. Similarly, if the unknown rotation index $k$ is known in every round of the experiment, then the problem again reduces to the standard phase estimation setup. Therefore, the QFI of the squeezed vacuum states with mean photon number $\bar{n}$ that we consider below cannot exceed the untwirled QFI of $8\bar{n}(\bar n+1)$~\cite{nielsen2023deterministic}, raising the question of how we hope to arrive at an exponential advantage. The resolution is that for this task the performance of classical states such as coherent states gets exponentially worse in $M$, as illustrated in Fig.~\ref{fig:mpskscheme}(e). The Wigner function of the coherent state when averaged over unknown $k$ contains far less information about $\delta$ than that of the squeezed state input. On the contrary, as we demonstrate here, squeezed vacuum states can be resilient to the twirling process.

\begin{figure*}[t]
    \centering
    \includegraphics[width = \textwidth]{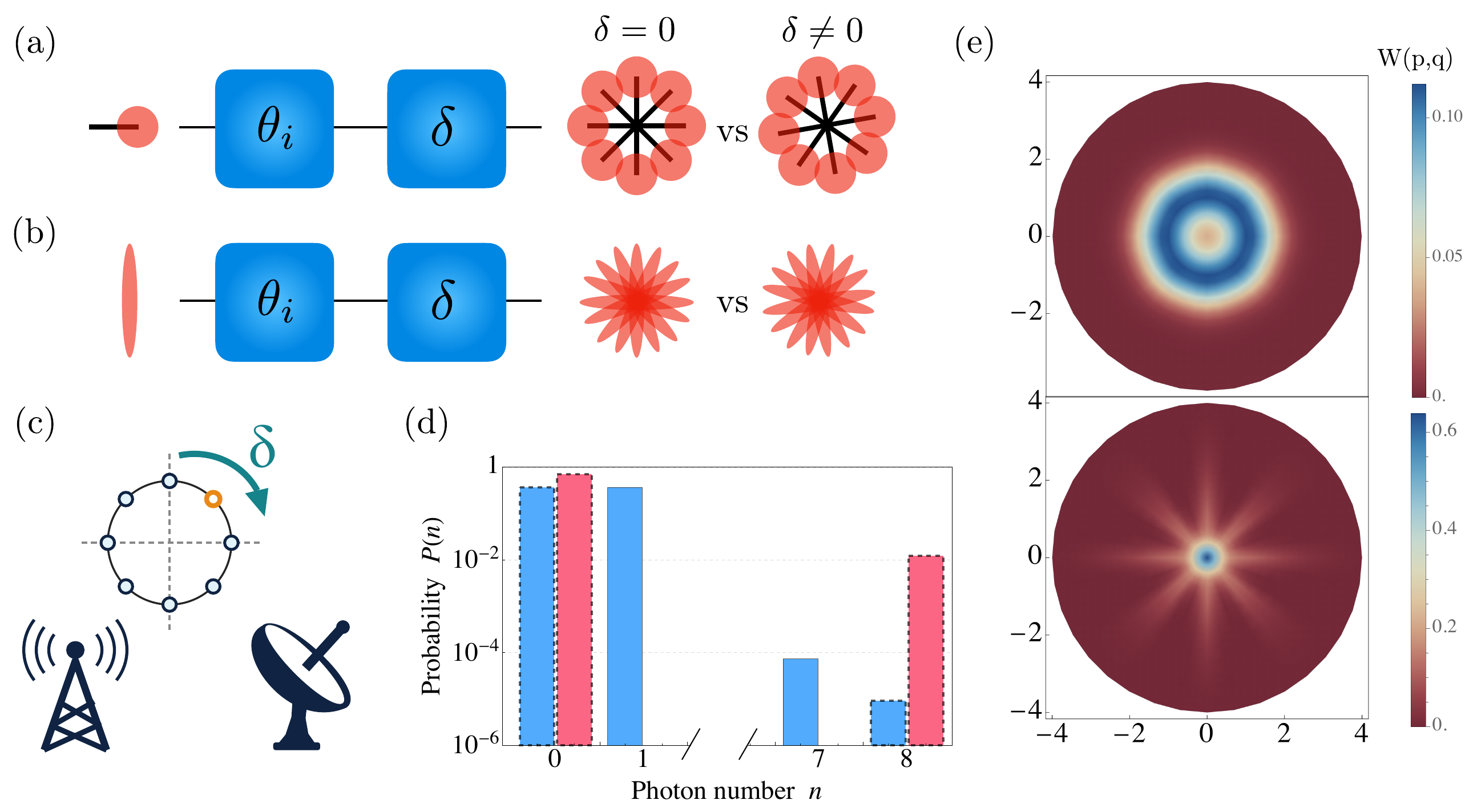}
    \caption{\textbf{Phase estimation in a $\mathbb{Z}_M$ phase twirling channel.} (a), (b) Estimation with coherent states and squeezed vacuum states respectively. The task is to determine $\delta$ in the presence of a randomised rotation $\theta_i$. (c) One possible application of this work is the calibration of quantum communication networks. (d) Photon number distribution of coherent states (yellow) and squeezed vacuum states (purple) for mean photon number of 1. (e) Wigner functions of a coherent state and squeezed vacuum state after passing through a channel with $M=8$ rotations. Both states have a mean photon number of $\sqrt{2}$.}
    \label{fig:mpskscheme}
\end{figure*}

Before computing the QFI, it will be useful to derive a direct sum structure over $M$ sectors for $  \mathcal E_{0,M}(\ket{\psi})$. We write the input state in the Fock basis as $
\ket{\psi}=\sum_{n}c_{n}\ket{n}$. After the $m$-bit rotation channel, we can write $\mathcal E_{0,M}(\ket{\psi})$ in the following direct sum structure (see Appendix~\ref{apen:mary1}):
\begin{equation}
     \mathcal E_{0,M}(\ket{\psi})=\bigoplus_{r=0}^{M-1}\lambda_r\ket{\tilde{u}_r}\bra{\tilde{u}_r}\;,
\end{equation}
where
\begin{equation}
\label{eq:utildemain}
\ket{\tilde{u}_r}=\frac{1}{\sqrt{\lambda_r}}\sum_{p=0}^\infty c_{r+pM}\ket{r+pM}\;,
\end{equation}
and
\begin{equation}
\lambda_r=\sum_{p=0}^\infty |c_{r+pM}|^2\;.
\end{equation}
Note that this state maintains coherences only within each $r$ sector. Coherences between different $r$ sectors are removed entirely.

\subsection{Ideal setting results}
We now compute the QFI for estimating $\delta$ under the action of the above channel. The additivity of the QFI for direct sum density matrices implies
\begin{equation}
\label{eq:QGIgeneralVar}
\begin{split}
        \QFI(\mathcal E_{\delta,M}(\ket{\psi}))&=\sum_{r=0}^{M-1}\lambda_r\QFI(\ket{\tilde{u}_r})\\
        &=4\sum_{r=0}^{M-1}\lambda_r\text{Var}_{\ket{\tilde{u}_r}}(\hat{n})\\
        &=4\mathbb{E}[\text{Var}(\hat{n}|\hat{n} \text{ mod }M)]\;.
\end{split}
\end{equation}
The above equation reveals the origin of the advantage of squeezed states over classical states such as coherent states: the random rotation channel preserves only coherences between Fock levels separated by multiples of $M$, so the QFI is determined by the photon-number variance remaining within each residue class modulo $M$. The rapidly decaying Poisson tail of a coherent state makes this conditional variance extremely small when $\bar{n}\ll M$, whereas the photon tail of squeezed vacuum maintains appreciable probability on number states separated by $M$. This is shown in Fig.~\ref{fig:mpskscheme} (d) for $M=8$.

 Utilising this expression, we arrive at the following simple formula for the QFI (see Appendix~\ref{apen:mary1}):
\begin{equation}
\label{eq:QFIsimple}
    \QFI=4M^2 \sum_{j,\lambda_j\neq0}\left(\sum_{p=0}^\infty |c_{j+pM}|^2p^2-\frac{\left(\sum_{p=0}^\infty |c_{j+pM}|^2p\right)^2}{\lambda_j}\right)\;.
\end{equation}
By summing to some finite cutoff rather than infinity, this expression offers an efficiently computable, and accurate, approximation for the QFI. When using this expression to determine the QFI numerically, we increase the cutoff until the obtained value converges. 

More importantly, for certain classes of states, we can upper- and lower-bound this expression. We begin by analysing a coherent state, written as
\begin{equation}
\ket{\alpha} =  \, e^{-|\alpha|^2/2} \sum_{n=0}^{\infty} 
\frac{ \alpha^n }{\sqrt{n! }} \ket{n}.
\end{equation}
 In Appendix~\ref{apen:maryC}, we prove that, when using a coherent state with $|\alpha|^2<\frac{M^2}{M+1}$, the QFI is upper bounded by
\begin{equation}
\label{eq:qfiubCoh}
\QFI^\text{C}\leq 4\frac{M^2\mathrm{e}^{-|\alpha|^2}|\alpha|^{2M}}{M!(1-|\alpha|^2\frac{M+1}{M^2})}=4\frac{M^2\mathrm{e}^{-\bar{n}}\bar{n}^{M}}{M!(1-\bar{n}\frac{M+1}{M^2})}\;,
\end{equation} 
where $\QFI^\text{C}=\QFI(\mathcal E_{\delta,M}(\ket{\alpha}))$, and we use the fact that $\bar{n}=|\alpha|^2$ for a coherent state. Based on this, we prove that 
\begin{equation}
\bar{n}(M)=\Omega(M)
\end{equation}
photons are required to keep the QFI constant. Specifically, if we allow only $\bar{n}(M)=M/\log(M)$ photons, the QFI will scale like 
\begin{equation}
\label{eq:qfiubCoh3}
\QFI^\text{C}\leq\exp\left[-M\log\log M+M-o(M)\right]\;.
\end{equation} 

Similarly, when using a coherent state with $\bar n=|\alpha|^2<\frac{M^2}{M+1}$, we prove in Appendix~\ref{apen:samplecomplexity} that the number of samples required for a $(\epsilon,\zeta)$ accurate estimate goes as
\begin{equation}
\nu_{\mathrm{C}}^\star
    \ge
   \frac{(1-2\zeta)^2}{16\epsilon^2M^2}
    \left(1-\frac{\bar n(M+1)}{M^2}\right)e^{\bar n}\frac{M!}{\bar{n}^M}\;.
\end{equation}
Recall that these guarantees hold over the restricted domain $\Theta_M=[-a/M,a/M]$, with fixed $0<a<\pi/2$, as specified in Eq.~\eqref{eq:sample-success}. This domain applies to all PAC learning tasks considered below and will not be restated for each result.

 We now consider a broader class of classical states, where the above results hold. From the monotonicity of the QFI $\QFI(\Lambda(\mathcal E_{\delta,M}(\ket{\alpha})))\leq\QFI(\mathcal E_{\delta,M}(\ket{\alpha}))$ for any parameter independent channel $\Lambda$. Furthermore, if $\Lambda$ is phase covariant we have $\QFI(\mathcal E_{\delta,M}(\Lambda(\ket{\alpha})))\leq\QFI(\mathcal E_{\delta,M}(\ket{\alpha}))$. Therefore, the QFI results for a coherent state proven above are upper bounds for the set of states generated by a parameter independent phase covariant channel acting on $\ket{\alpha}$. As we wish to consider noisy channels later in the manuscript, we shall restrict $\Lambda$ to represent only loss, thermal noise, phase diffusion, and phase rotations, together with classical mixtures of such channels; for multi‑round protocols the channel may act independently on each channel use, with classically correlated or adaptively chosen parameters. This restriction ensures that, in the presence of the thermal-loss channel considered below, every allowed probe output can be obtained by parameter-independent processing of the corresponding coherent-probe output. For the remainder of this work, we shall refer to this restricted set of states as classical states:
\begin{equation}
\label{eq:classicaldef}
  \rho_{\rm cl}\in\big\{\Lambda(\ket\alpha)\;:\;|\alpha|^2=\bar n,\ \Lambda\ \text{as above}\big\} 
\end{equation}
 
Under this definition $\bar n$ refers to the mean photon number of the generating state per channel use. Note that this definition explicitly forbids strategies such as only rarely sending a very high-energy coherent state. For example, the state $\rho=(1-p)\ket{0}\bra{0}+p\ket{\tilde{\alpha}}\bra{\tilde{\alpha}}$, has mean photon number of $p|\tilde{\alpha}|^2=|\alpha|^2$. For sufficiently small $p$, we can have $|\tilde{\alpha}|^2\gg|\alpha|^2$, and so this state does not suffer the same exponential decay as in Eq.~\eqref{eq:qfiubCoh3}. States of this form have been studied before~\cite{rivas2012sub,zhang2012unbounded,branford2021average,giovannetti2012sub}. There are two physically motivated reasons for ruling out states of this type. First, in applications with fragile samples such as biosensing~\cite{casacio2021quantum}, the relevant constraint is on the energy of every individual state, not only on the average over states~\cite{branford2021average}. Second, allowing a mean photon number conserving strategy that involves only occasionally sending bright squeezed vacuum states yields an unbounded QFI at fixed mean energy~\cite{rivas2012sub,zhang2012unbounded}. This suggests that a mean photon number constraint is not well motivated physically.

 Similarly, as we show in Appendix~\ref{apen:samplecomplexity}, the minimum number of samples for a $(\epsilon,\zeta)$ accurate estimate satisfies $\nu_{\text{cl}}^\star=\nu^\star_{\text{C}}$, where $\nu_{\text{cl}}^\star$ is the minimum samples required over all allowed classical protocols. Going forward we shall use $\nu_{\text{cl}}^\star$ to denote the sample complexity of the optimal classical protocol. Note that this optical classical protocol is still restricted to the family of classical states defined in Eq.~\eqref{eq:classicaldef}.
 
We now examine the benefits of introducing squeezing, subject to the same constraints on mean photon number per channel use, i.e. we do not allow occasionally sending a very bright squeezed state. For simplicity, here we discuss squeezed vacuum only. For details on displaced squeezed states, we refer the reader to Appendix~\ref{apen:maryS}. A squeezed vacuum state may be written as
\begin{equation}
    \ket{r}=\sum_{n=0}^\infty c_{n}^{\mathrm{SV}}\ket{2n}\;,
\end{equation}
with 
\begin{equation}
\label{eq:svcoeffsmain}
    c_{n}^{\mathrm{SV}}=\frac{1}{\sqrt{\cosh(r)}}(-\mathrm{e}^{i\phi}\text{tanh}(r))^n\frac{\sqrt{(2n)!}}{2^nn!}\;.
\end{equation}
In Appendix~\ref{apen:maryS}, we prove that, with a squeezed vacuum state, the QFI is lower bounded as 
\begin{equation}
    \QFI ^{\mathrm{SV}}
    \ge
    4M^2
    \sum_{\substack{0\le j<M\\ j\,\mathrm{even}}}
    \frac{p_j^{\mathrm{SV}}p_{j+M}^{\mathrm{SV}}}{\lambda_j}\;,
\end{equation}
where $p_j^{\mathrm{SV}}=|c_{j}^{\mathrm{SV}}|^2$. From this, we show that if the mean photon number grows as $\bar n(M)=M/\log (M)$, 
\begin{equation}
    \QFI^{\mathrm{SV}}
\ge 
    c\frac{M^{3/2}}{\sqrt{\log M}}\;,
\end{equation}
where $\QFI^\text{SV}=\QFI(\mathcal E_{\delta,M}(\ket{r}))$ and $c$ is some constant for sufficiently large $M$ (see Eq.~\eqref{eq:SV-strong-lower}). Similarly, the number of samples required for a $(\epsilon,\zeta)$ accurate estimate satisfies (see Appendix~\ref{apen:samplecomplexity})
\begin{equation}
\nu_{\mathrm{SV}}^\star
    =O\!\left(
        \sqrt{M\log M}\,
        \log\!\frac2\zeta
    \right).
\end{equation}

\begin{figure*}[t]
    \centering
    \includegraphics[width = \textwidth]{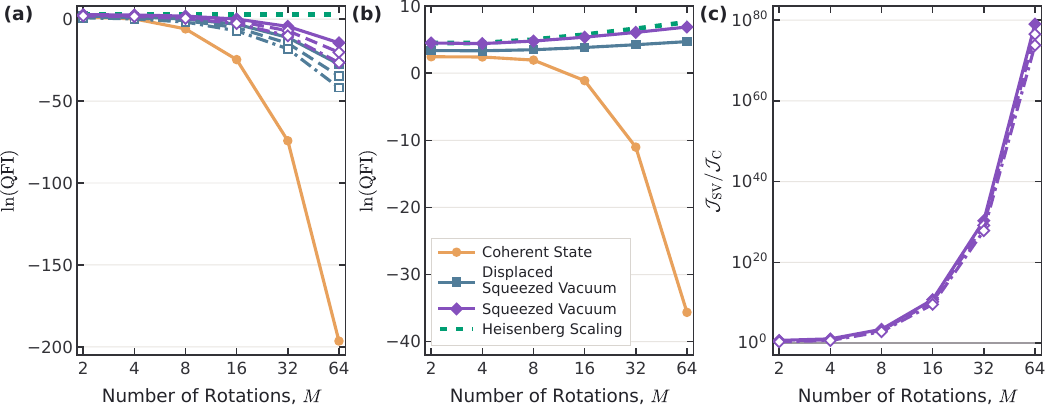}
    \caption{\textbf{Quantum Fisher information for phase estimation under phase twirling.} (a) Fixed mean photon number of 1. For the coherent state $\alpha=1$, and for the displaced squeezed state $\alpha=1/\sqrt{2}$, $r=\text{arcsinh}(1/\sqrt{2})$. Dashed and dot-dashed lines correspond to the QFI in the presence of loss with $\eta=0.9$ and $\eta=0.8$, respectively. $\bar{n}_\text{th}=0.1$ for both lines.  (b) Mean photon number scales as $\bar{n}=M/\log(M)$. In both plots, Heisenberg scaling corresponds to the QFI of a squeezed vacuum state when $M=1$, i.e $8\bar{n}(\bar n+1)$~\cite{nielsen2023deterministic}. (c) Ratio of quantum Fisher information with squeezed vacuum and coherent states from (a).     }
    \label{fig:mpsk}
\end{figure*}

With this restriction on mean photon number, we then arrive at our first main result
\begin{theorem}
\label{th1}
    Given mean photon number of $\bar{n}=M/\log(M)$, there is a super-exponential separation in the QFI of a squeezed vacuum and the family of classical optical states $\rho_\text{cl}$, as defined in Eq.~\eqref{eq:classicaldef}, for phase estimation under the action of the phase twirling channel $\mathcal E_{\delta,M}$. Specifically $\QFI^{\mathrm{SV}}=\Omega(M^{3/2}/\sqrt{\log(M)})$, and $\sup_{\rho\in\rho_\text{cl}}\QFI\leq \exp[-M\log\log M+O(M)]$\;. Furthermore, there is a super-exponential separation in the sample complexity required to solve the corresponding $(\epsilon,\zeta)$ PAC learning task (for $\epsilon=b/M$ on the domain $\Theta_M=[-a/M,a/M]$, with fixed $0<a<\pi/2$ and $0<b<a/2$), $\nu_\text{SV}^\star=O(\sqrt{M\log(M)}\log(2/\zeta))$, whereas $\nu_{\text{cl}}^\star\geq\exp(M\log\log(M)-O(M))$.
\end{theorem}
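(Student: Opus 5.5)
The theorem bundles four claims. Each one follows by combining the bounds stated earlier in this section with an asymptotic evaluation at $\bar n=M/\log M$. I would take them in turn: the classical QFI upper bound, the squeezed-vacuum QFI lower bound, and then the two sample-complexity statements.

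\emph{Classical QFI.} Start from Eq.~\eqref{eq:qfiubCoh}. It applies because $\bar n=M/\log M<M^2/(M+1)$ for large $M$. The denominator factor satisfies $1-\bar n(M+1)/M^2=1-O(1/\log M)\to 1$. Stirling gives $M!=\exp[M\log M-M+O(\log M)]$. Therefore
\begin{equation*}
\QFI^{\mathrm C}\le \exp\bigl[2\log M+M\log\bar n-M\log M+M-\bar n+O(\log M)\bigr]=\exp[-M\log\log M+M-o(M)],
\end{equation*}
using $M\log\bar n-M\log M=-M\log\log M$. To extend this to the whole family $\rho_{\rm cl}$, I would invoke the monotonicity argument given above Eq.~\eqref{eq:classicaldef}. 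Every element of that family has the form $\Lambda(\ket\alpha)$ with $\Lambda$ phase covariant, so $\Lambda$ commutes with $\mathcal E_{\delta,M}$. The data-processing inequality then gives $\QFI\le\QFI^{\mathrm C}$. Classical mixtures of such channels are themselves parameter-independent phase-covariant channels, so they are covered by the same argument.

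\emph{Squeezed-vacuum QFI.} Use the lower bound $4M^2\sum_{j\ \mathrm{even}}p_jp_{j+M}/\lambda_j$ and keep only the dominant terms.
\begin{itemize}
\item Since $\lambda_j=\sum_p p_{j+pM}$ and $p_n$ is decreasing in even $n$ (it behaves like $\tanh^{n}r/\sqrt{\pi n/2}$ divided by $\cosh r$), one has $\lambda_j\le p_j/(1-\tanh^M r)$ up to constants. Each term is therefore at least roughly $p_{j+M}(1-\tanh^M r)$.
\item With $\sinh^2 r=\bar n$ we have $\tanh^2 r=1-1/(\bar n+1)$. Hence $\tanh^{M}r\approx e^{-M/(2\bar n)}=M^{-1/2}$, which is bounded away from $1$.
\item Summing over even $j<M$ gives $\sum_{j}p_{j+M}\approx \Pr[M\le n<2M]$. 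For the squeezed vacuum, $n/\bar n$ has an approximately $\chi^2_1$-type tail: $p_{2k}\approx (\pi k)^{-1/2}\cosh^{-1}r\,e^{-k/\bar n}$.
\end{itemize}
Summing over $k\in[M/2,M)$ gives roughly $\bar n^{-1/2}\sqrt{M}\,e^{-M/(2\bar n)}\approx\sqrt{\log M}\cdot M^{-1/2}$. Multiplying by $4M^2$ yields a bound of order $M^{3/2}\sqrt{\log M}$, which is at least as large as the claimed $\Omega(M^{3/2}/\sqrt{\log M})$. The claim therefore follows with room to spare, and being loose with constants in the tail estimates is harmless. The main obstacle I expect is controlling $\lambda_j$ uniformly in $j$, including small $j$ where $p_j$ is large. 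My first attempt would be the monotone ratio bound $p_{j+pM}/p_j\le(\tanh^2 r)^{pM/2}$, which gives $\lambda_j\le p_j/(1-\tanh^M r)$ directly.

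\emph{Sample complexity.}
\begin{itemize}
\item For the classical lower bound, substitute $\bar n=M/\log M$ and $\epsilon=b/M$ into the earlier bound on $\nu_{\mathrm C}^\star$ and use $\nu^\star_{\rm cl}=\nu^\star_{\mathrm C}$. The prefactor $(1-2\zeta)^2/(16b^2)$ is constant, and $e^{\bar n}M!/\bar n^M=\exp[M\log\log M-M+o(M)]$. This gives $\nu^\star_{\rm cl}\ge \exp(M\log\log M-O(M))$.
\item For the squeezed-vacuum upper bound, quote $\nu^\star_{\mathrm{SV}}=O(\sqrt{M\log M}\log(2/\zeta))$ from the appendix protocol. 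The heuristic behind it is that the required accuracy is $\epsilon^2\sim M^{-2}$, while the per-shot Fisher information is about $M^{3/2}/\sqrt{\log M}$. Their ratio $M^{-2}\big/ \bigl(M^{3/2}/\sqrt{\log M}\bigr)^{-1}$ suggests $\sqrt{M\log M}$ shots, and a median-of-means argument boosts the success probability to $1-\zeta$.
\end{itemize}
Comparing the two sample complexities gives the super-exponential separation.
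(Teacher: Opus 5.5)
Your route is the paper's: the coherent-state tail bound with Stirling and phase-covariant monotonicity for the classical family, and for squeezed vacuum the $(j,j+M)$ pair bound with $\lambda_j\le p_j/(1-q^{M/2})$. The classical parts are fine. The one quantitative step that carries the squeezed-vacuum claim is wrong. You estimate $\sum_{k=M/2}^{M-1}p_{2k}\approx\bar n^{-1/2}\sqrt M\,e^{-M/(2\bar n)}$, which is (number of terms)$\times$(largest term). That is an upper bound on the sum, not a lower bound.

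Because $q^k\approx e^{-k/\bar n}$ decays on the scale $\bar n=M/\log M\ll M$, only $\Theta(\bar n)$ terms contribute. The correct size is $\Pr(M\le N<2M)\approx d_{M/2}\,q^{M/2}\sqrt{1+\bar n}\approx\sqrt{2/\pi}\,M^{-1/2}(\log M)^{-1/2}$, a factor $\log M$ below your value. Your own $\chi^2_1$ heuristic agrees, since $\Pr(\chi^2_1>\log M)\approx\sqrt{2/(\pi\log M)}\,M^{-1/2}$. So your intermediate claim $\QFI^{\rm SV}\gtrsim M^{3/2}\sqrt{\log M}$ is false: Eq.~\eqref{eq:QFIlb} with $d_k\le d_{M/2}$ gives $\QFI^{\rm SV}\le 4\sqrt{2/\pi}\,M^{3/2}(\log M)^{-1/2}[1+o(1)]$. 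There is no room to spare, because the stated rate is tight. The fix is what the paper does. Keep only $M/2\le k\le M/2+\lfloor\bar n/2\rfloor$; there $d_k\ge c_1M^{-1/2}$, $(1+\bar n)^{-1/2}\ge c_2\bar n^{-1/2}$, and $q^k\ge q^{M/2}q^{\bar n/2}\ge c_3\,q^{M/2}$. This gives $\Pr(M\le N<2M)\ge c\sqrt{\bar n/M}\,q^{M/2}$, and with $q^{M/2}=M^{-1/2}(1+o(1))$ you get exactly $\QFI^{\rm SV}\ge c\,M^{3/2}/\sqrt{\log M}$.

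The squeezed-vacuum sample bound is quoted rather than proved. The heuristic $\nu\sim1/(\epsilon^2\QFI)$ plus median-of-means is not enough on its own: Cram\'er--Rao is asymptotic and the SLD measurement is optimal only at $\delta=0$, whereas you need a guarantee uniform over $\Theta_M$. The paper uses the fixed sector-pair measurement $\ket{j,\pm}=(\ket{j}\pm i e^{i\chi_j}\ket{j+M})/\sqrt2$ with statistic $Z=\pm g_j$. For every $\delta$ this satisfies $\mathbb E_\delta Z=B_{M,\bar n}\sin(M\delta)$ and $\mathbb E_\delta Z^2=B_{M,\bar n}$. The clipped-$\arcsin$ estimator and Bernstein's inequality then give $\nu=O(\log(2/\zeta)/B_{M,\bar n})$.

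The second-moment identity is what makes this work; Hoeffding alone would give $1/B_{M,\bar n}^2\sim M\log M$. With that identity in hand, a median-of-means argument would also do. Since $B_{M,\bar n}=4\Pr(M\le N<2M)[1+O(M^{-1/2})]$, this part also rests on the corrected estimate $B_{M,\bar n}\ge c/\sqrt{M\log M}$. Your inflated tail estimate would have predicted $O(\sqrt{M/\log M})$ samples, which is inconsistent with the $\sqrt{M\log M}$ you quote.
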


While sublinearly growing photon number ($\bar{n}=M/\log(M)$) is feasible for small $M$, most experiments operate with a fixed maximum squeezing level. As such, we next examine the ratio of $\QFI^{\mathrm{SV}}$ to $\sup_{\rho\in\rho_\text{cl}}\QFI$ for a fixed $\bar{n}$. We write 
\begin{equation}
\label{eq:qfiubCoh2}
\sup_{\rho\in\rho_\text{cl}}\QFI\leq\frac{4M^2\mathrm{e}^{-\bar{n}}\bar{n}^{M}}{M!(1-\bar{n}\frac{M+1}{M^2})}\leq \frac{8M^2\mathrm{e}^{-\bar{n}}\bar{n}^{M}}{M!}\;,
\end{equation} 
where we use the fact that $4\bar{n}\leq M$ implies $(1-\bar{n}\frac{M+1}{M^2})\geq1/2$. A similar result holds for the sample complexity to solve the PAC learning tasks. This allows us to arrive at the following theorem.
\begin{theorem}
\label{th2}
    Given mean photon number of $\bar{n}=1$, the ratio $\frac{\QFI^{\mathrm{SV}}}{\sup_{\rho\in\rho_\text{cl}}\QFI}$ grows super-exponentially in $M$ as
   \begin{equation}
        \frac{\QFI^{\mathrm{SV}}}
        {\sup_{\rho\in\rho_\text{cl}}\QFI}
        =
        \Omega\!\left[
        \left(\frac{M}{\sqrt{2}\mathrm{e}}\right)^M
        \right].
    \end{equation}
    For the corresponding $(\epsilon,\zeta)$ PAC learning task with $\epsilon=b/M$ on the domain $\Theta_M=[-a/M,a/M]$, with fixed $0<a<\pi/2$ and $0<b<a/2$, the ratio $\frac{\nu_{\text{cl}}^\star}{\nu^\star_\text{SV}}$ also grows as
    \begin{equation}
        \frac{\nu_{\text{cl}}^\star}{\nu^\star_\text{SV}}=\Omega\!\left[
        \left(
            \frac{M}{\sqrt{2}e}
        \right)^M
    \right]\;.
    \end{equation}
\end{theorem}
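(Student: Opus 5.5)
The plan is to bound the numerator and the denominator of the ratio separately at $\bar n=1$. For the denominator we use Eq.~\eqref{eq:qfiubCoh2}: since $4\bar n=4\le M$ for $M\ge4$, it gives $\sup_{\rho\in\rho_\text{cl}}\QFI\le 8M^2e^{-1}/M!$. For the numerator we only need a lower bound on $\QFI^{\mathrm{SV}}$ of the form $\QFI^{\mathrm{SV}}\ge C\,M^{\beta}\,2^{-M/2}$, with $C>0$ and $\beta$ fixed. Granting that, Stirling's formula $M!\ge\sqrt{2\pi M}(M/e)^M$ yields
\begin{equation}
\frac{\QFI^{\mathrm{SV}}}{\sup_{\rho\in\rho_\text{cl}}\QFI}\ge \frac{C e\sqrt{2\pi}}{8}\,M^{\beta-3/2}\left(\frac{M}{\sqrt2 e}\right)^M .
\end{equation}
The polynomial prefactor is then absorbed by slightly lowering the base, or it is at least $1$ if $\beta\ge 3/2$. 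In the statement, $\Omega[(M/\sqrt2e)^M]$ should be read as holding up to such subexponential factors, since the dominant factor is the $M^M$ growth.

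For the squeezed-vacuum lower bound we use the bound from Appendix~\ref{apen:maryS},
\begin{equation}
\QFI^{\mathrm{SV}}\ge 4M^2\sum_{j\ \mathrm{even}}p^{\mathrm{SV}}_jp^{\mathrm{SV}}_{j+M}/\lambda_j,
\end{equation}
and keep only the $j=0$ term. Since $\lambda_0\le1$, this gives $\QFI^{\mathrm{SV}}\ge 4M^2p_0p_M$. With $\bar n=\sinh^2 r=1$ we have $\cosh r=\sqrt2$ and $\tanh^2 r=1/2$, so $p_0=1/\sqrt2$ and
\begin{equation}
p_M=\frac{1}{\sqrt2}\,2^{-M/2}\binom{M}{M/2}2^{-M}.
\end{equation}
Here $M$ is even, which is assumed throughout. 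Using $\binom{M}{M/2}2^{-M}\ge 1/\sqrt{2M}$, we get $p_M\ge 2^{-M/2}/(2\sqrt{2M})$. Hence $\QFI^{\mathrm{SV}}\ge \tfrac{1}{\sqrt2}M^{3/2}2^{-M/2}$, that is, $\beta=3/2$, and the ratio bound above holds directly with a constant prefactor.

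For the sample-complexity ratio we argue in parallel, using the results of Appendix~\ref{apen:samplecomplexity}. For the classical lower bound, take the coherent-state bound on $\nu_{\mathrm C}^\star$ together with $\nu_{\rm cl}^\star=\nu_{\rm C}^\star$, at $\bar n=1$ and $\epsilon=b/M$. This gives $\nu_{\rm cl}^\star\ge \frac{(1-2\zeta)^2}{32b^2}\,e\,M!$. For the squeezed-vacuum upper bound, we need the explicit protocol behind the $O(\sqrt{M\log M}\log(2/\zeta))$ guarantee, instantiated at $\bar n=1$. This is the step I expect to be the main obstacle. The protocol must make an $M$-fold-resolving measurement on the $\{\ket0,\ket M\}$ coherence, which acts as a qubit with phase $M\delta$. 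It succeeds with probability of order $p_M\sim 2^{-M/2}M^{-1/2}$, and each success gives an $O(1)$-accurate estimate of $M\delta$ on the domain $\Theta_M$. By a Chernoff/median argument, $\nu_{\rm SV}^\star=O(2^{M/2}\sqrt M\log(2/\zeta))$. Dividing the classical bound by this and applying Stirling again gives $\nu_{\rm cl}^\star/\nu_{\rm SV}^\star=\Omega[(M/\sqrt2e)^M]$ up to the same polynomial factors.

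The delicate point is to make sure the success-probability scaling of the heralded qubit measurement really is $\Theta(p_0p_M)$ rather than something smaller. If needed, one can instead postselect on the $r=0$ sector and invoke the general single-parameter PAC bound in terms of the per-shot Fisher information.
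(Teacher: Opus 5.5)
Your proposal is correct. For the PAC half it is the paper's own argument: Corollary~\ref{cor:sample-fixed-energy} pits the classical two-point bound \eqref{eq:sample-classical-lower} against the constructive bound \eqref{eq:sample-sv-upper}, using the $j=0$ summand of $B_{M,\bar n}$.

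For the QFI half you take a leaner route than the paper.
\begin{itemize}
\item The paper establishes matching two-sided asymptotics, Eqs.~\eqref{eq:fixedE-classical-asymptotic} and \eqref{eq:fixedE-sv-asymptotic}. This uses the block probability $\Pr(M\le N<2M)$, dominated convergence, and a matching tail upper bound, and it determines the ratio up to a factor $1+o(1)$.
\item You use only the two one-sided inequalities that the $\Omega$ claim needs: the single $\ket0\leftrightarrow\ket M$ pair with $\lambda_0\le1$, and Eq.~\eqref{eq:qfiubCoh2}. This gives an explicit inequality for every even $M\ge4$ with no limiting argument, at the cost of the sharp constant.
\end{itemize}
Your own computation already gives $\beta=3/2$; in fact $p_M\ge 2^{-M/2}/(2\sqrt M)$ and $\QFI^{\mathrm{SV}}\ge M^{3/2}2^{-M/2}$. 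In the PAC ratio, the $\sqrt M$ in $1/p_M$ cancels the $\sqrt{2\pi M}$ from Stirling. Both ratios are therefore $\Omega[(M/\sqrt2 e)^M]$ exactly as stated, so you should drop the hedge about reading $\Omega$ up to subexponential factors.

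The step you flag as the main obstacle is not one, but your description of the mechanism is wrong.
\begin{itemize}
\item Eq.~\eqref{eq:sample-sv-upper} holds for every $\bar n$. The $j=0$ summand of Eq.~\eqref{eq:sample-sector-visibility} gives $B_{M,1}\ge 4p_0p_M/(p_0+p_M)\ge 4p_0p_M$, which is precisely the $\Theta(p_0p_M)$ rate you wanted.
\item With the projectors $\Pi_{0,\pm}$ of Eq.~\eqref{eq:sample-sector-povm}, nothing is heralded with probability of order $p_M$. Each outcome $(0,\pm)$ occurs with probability $\approx p_0/2$ and carries only a bias $\pm\sqrt{p_0p_M}\sin(M\delta)$.
\item Consequently no single outcome yields a $b$-accurate estimate of $M\delta$, and a median over individual shots would fail. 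The information is accumulated by a concentration inequality over all $\nu$ shots. The paper uses Bernstein; at fixed $\bar n$, plain Hoeffding on the unweighted $\pm1$ statistic already gives $\nu=O(\log(2/\zeta)/(p_0p_M))$.
\item The literal rare-success, balanced-qubit picture needs a different POVM. One option is a local filter that attenuates $\ket0$ by $\sqrt{p_M/p_0}$; it succeeds with probability $2p_M$ and leaves a balanced qubit with phase $M\delta$.
\end{itemize}
You should also drop the fallback to a PAC bound from per-shot Fisher information. Fisher information alone gives no nonasymptotic $(\epsilon,\zeta)$ guarantee without an explicit estimator and a concentration bound.
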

This is proven in Appendices~\ref{apen:constantnbar} and \ref{apen:samplecomplexity}. In Appendix~\ref{sec:additional-sv-optimality-notes}, we provide the motivation for restricting the main theorems to squeezed vacuum states, proving that among all fixed mean photon number displaced squeezed states the squeezed vacuum is locally optimal. We further prove that, for $\bar n=M/\log(M)$, as $M\to\infty$ the optimal displaced squeezed state has a vanishing fraction of its photons in displacement.

\subsection{Comparison of squeezed and coherent states}
 In Fig.~\ref{fig:mpsk} (a) and (b), we plot the QFI of a coherent state, a displaced squeezed state, and a squeezed vacuum state. For a fixed constant photon number, this is shown in Fig.~\ref{fig:mpsk}(a). Although the QFI decays exponentially for all three classes of states, the decay is considerably slower when squeezing is used, demonstrating Theorem~\ref{th2}. In Fig.~\ref{fig:mpsk}(b), we show that there is no decay in $M$ for the squeezed vacuum QFI when the mean photon number scales as $M/\log(M)$, demonstrating Theorem~\ref{th1}. Using a simple gradient descent algorithm, we also numerically optimise the ratio of displacement to squeezing for the displaced squeezed state. For larger photon numbers, we use these numerics to verify that the squeezed vacuum state is optimal. As is apparent from Fig.~\ref{fig:mpsk} (c), the ratio of the QFI with squeezing to that without is growing exponentially. The origin of this advantage is evident in the Wigner functions of the output quantum states, shown in Fig.~\ref{fig:mpskscheme}(e). After mixing with many different rotation angles, the coherent state carries almost no information about $\delta$. In contrast, the displaced squeezed state still carries information about $\delta$. Crucially, this exponential advantage does not imply precision beyond Heisenberg scaling. It arises because the squeezed vacuum retains substantially more phase information under twirling than a coherent state with the same mean photon number. The twirling channel preserves only coherences between Fock states whose photon numbers differ by integer multiples of $M$ (Eq.~\eqref{eq:QGIgeneralVar}). The broader photon-number distribution of the squeezed vacuum supports these widely separated coherences (Fig.~\ref{fig:mpskscheme} (d)), leading to the exponential separation in QFI.

\subsection{Accounting for experimental imperfections}
In any realistic experimental setting, there will be loss and thermal noise affecting the input states. As such, we now consider the scenario where the initial displaced squeezed state suffers loss, modelled as a beam splitter of transmissivity $\eta$. We model thermal noise as a thermal state with mean photon number $\bar{n}_\text{th}$ input to the second port of the beam splitter. We write the action of this channel after the phase twirling given an input state $\ket{\psi}$ as $\mathcal G_{\eta,\bar n_{\rm th}}\circ
 \mathcal E_{\delta,M}(\ket{\psi})$. This is described in detail in Appendix~\ref{apen:qfiloss}, along with a description of how the QFI is numerically calculated. As shown in Fig.~\ref{fig:mpsk}(a) and Fig.~\ref{fig:mpsk}(c), even in the presence of moderate noise and loss, there is still a significant advantage of using squeezed light.

In Appendix~\ref{apen:classicalnoisybound}, we bound the QFI attainable with classical states $\rho_\text{cl}$ 
under the above optical loss and thermal noise channel. Specifically we show that, for $\eta \bar{n}<M^2/(M+1)$,
\begin{equation}
 \sup_{\rho_{\rm cl}}
 \QFI\!\left[
 \mathcal G_{\eta,\bar n_{\rm th}}\circ
 \mathcal E_{\delta,M}(\rho_{\rm cl})\right]
 \le  \frac{4M^2e^{-\eta \bar{n}}(\eta \bar{n})^M}
 {M!\left[1-\eta \bar{n}(M+1)/M^2\right]}\;.
\end{equation}
Similarly, for this noise model, we prove in Appendix~\ref{apen:samplethermal1} that the number of samples required to produce a $(\epsilon,\zeta)$ accurate estimate is
    \begin{equation}
    \nu_{\rm cl}^{\star}
    \ge
    \frac{(1-2\zeta)^2}{16\epsilon^2M^2}
    \left(1-\eta \bar{n}\frac{M+1}{M^2}\right)
    e^{\eta \bar{n}}\frac{M!}{(\eta \bar{n})^M}\;.
\end{equation}

In Appendix~\ref{apen:QFIsqnoisybound}, we bound the QFI attainable when using squeezed vacuum as
\begin{equation}
\label{eq:QFInoise}
    \QFI^\text{SV}
    \!\left[
        \mathcal G_{\eta,\bar n_{\rm th}}
        \circ\mathcal E_{\delta,M}
        (\ket r\!\bra r)
    \right]
    \ge
    \frac{16M^2}{D}
    \binom{M}{M/2}
    \left(
        \frac{v_2-v_1}{2D}
    \right)^M\;,
\end{equation}
where 
\begin{equation}
\label{eq:covariancemain}
    \begin{split}
        v_1&=(1+2\bar{n}_\text{th})(1-\eta)+\eta\mathrm{e}^{-2r},\\
        v_2&=(1+2\bar{n}_\text{th})(1-\eta)+\eta\mathrm{e}^{2r}\;
    \end{split}
\end{equation}
and $D=(1+v_1)(1+v_2)$. Similarly, for the PAC learning task, we show in Appendix~\ref{apen:samplethermal1} that
\begin{equation}
    \nu_{\rm SV}^{\star,\eta,\bar n_{\rm th}}
    \le
    \left\lceil
    \frac{C_{a,b,\zeta}D}
         {16\binom{M}{M/2}}
    \left(
        \frac{D}
             {2\eta\sqrt{\bar{n}(1+\bar{n})}}
    \right)^M
    \right\rceil,
\label{eq:samplenoisy}
\end{equation}
where $C_{a,b,\zeta}$ is defined in Eq.~\eqref{eq:sampleloss-C} and gives the $(\epsilon,\zeta)$ dependence.

In Appendix~\ref{apen:QFIsqnoisybound}, we show that, under constant $\eta$ and $\bar n_\text{th}$, the bounds that resulted in Theorem~\ref{th1} no longer hold. As such, we cannot certify a similar constant versus exponentially decaying QFI advantage (see Eq.~\eqref{eq:apensqexpdecay}). The best performing bound that we obtained with $\bar n=M/\log(M)$ was $\QFI^{\text{SV}}\geq\exp(-\sqrt{A/\eta}\sqrt{M\log(M)}(1+o(1)))$, see Eqs.~\eqref{eq:exact-heterodyne-CFI} and \eqref{eq:xiM-general-scaling}, where $A$ depends only on $\eta$ and $\bar{n}_{\text{th}}$ (Eq.~\eqref{eq:sampleloss-A}). A similar result holds for the PAC sample complexity, see Appendix~\ref{sec:samplesthermalloss}. These bounds therefore certify that the QFI and sample complexity ratios remain super-exponential under fixed optical loss and thermal noise, although they no longer certify polynomial squeezed-vacuum sample complexity at $\bar n=M/\log M$. Similarly, for constant mean photon number, the advantage from Theorem~\ref{th2} remains. Specifically we have the following theorem

    \begin{theorem}
\label{th3}
    Given mean photon number of $\bar{n}=1$, under constant loss $0<\eta\leq1$ and thermal noise $\bar n_\text{th}$, the ratio $\frac{\QFI^{\mathrm{SV}}}{\sup_{\rho\in\rho_\text{cl}}\QFI}$ grows factorially in $M$ as
   \begin{equation}
        \frac{\QFI^{\text{SV}}
    \!\left[
        \mathcal G_{\eta,\bar n_{\rm th}}
        \circ\mathcal E_{\delta,M}
        (\ket r\!\bra r)
    \right]}
        { \sup_{\rho_{\rm cl}}
 \QFI\!\left[
 \mathcal G_{\eta,\bar n_{\rm th}}\circ
 \mathcal E_{\delta,M}(\rho_{\rm cl})\right]}
       =\Omega 
   \left[ \left(
        \frac{4\sqrt{2}M}{D\text{e}}
    \right)^M\right].
    \end{equation}
    Similarly, for the $(\epsilon,\zeta)$ PAC learning task with $\epsilon=b/M$ on the domain $\Theta_M=[-a/M,a/M]$, with fixed $0<a<\pi/2$ and $0<b<a/2$,
    the ratio $\frac{\nu_{\text{cl}}^\star}{\nu^\star_\text{SV}}$ grows as
    \begin{equation}
        \frac{\nu_{\text{cl}}^\star}{\nu^\star_\text{SV}}=\Omega\!
     \left[   \left(
            \frac{4\sqrt{2}M}{D\text{e}}
        \right)^M\right]
    .
    \end{equation}
\end{theorem}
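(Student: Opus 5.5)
The plan is to combine the two noisy bounds stated just before the theorem, Eq.~\eqref{eq:QFInoise} for squeezed vacuum and the classical upper bound under $\mathcal G_{\eta,\bar n_{\rm th}}$, specialised to $\bar n=1$. For the denominator, with $\eta\bar n=\eta\le 1$ and $M\ge 4$, we have $1-\eta(M+1)/M^2\ge 1/2$. The classical bound is therefore at most $8M^2e^{-\eta}\eta^M/M!$, which is in turn at most $8M^2/M!$.

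For the numerator, I take $\sinh^2 r=\bar n=1$. From Eq.~\eqref{eq:covariancemain}, $v_2-v_1=\eta(e^{2r}-e^{-2r})=2\eta\sinh 2r=4\eta\sinh r\cosh r=4\eta\sqrt{\bar n(1+\bar n)}=4\sqrt2\,\eta$. This gives
\[
\QFI^{\rm SV}\ge \frac{16M^2}{D}\binom{M}{M/2}\left(\frac{2\sqrt2\,\eta}{D}\right)^M .
\]
Here $D=(1+v_1)(1+v_2)$ is independent of $M$, since $r$, $\eta$ and $\bar n_{\rm th}$ are constants. The central binomial coefficient satisfies $\binom{M}{M/2}\ge 2^M/\sqrt{2M}$. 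The numerator is then $\Omega\big(M^{3/2}(4\sqrt2\,\eta/D)^M\big)$.

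Taking the ratio and using Stirling, $M!\ge (M/e)^M\sqrt{2\pi M}$, gives
\[
\frac{\QFI^{\rm SV}}{\sup\QFI^{\rm cl}}\ge c\,\frac{M!}{\sqrt M\,e^{-\eta}\eta^M}\,\frac{(4\sqrt2\,\eta)^M}{D^{M+1}}\ge c'\left(\frac{4\sqrt2 M}{De}\right)^M .
\]
The factors of $\eta^M$ cancel between numerator and denominator. This cancellation is why the rate is $4\sqrt2M/(De)$ with no explicit $\eta$. It requires keeping $\eta^M$ in the classical bound rather than discarding it.

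For the PAC statement I would argue the same way. The lower bound on $\nu_{\rm cl}^\star$ from Appendix~\ref{apen:samplethermal1} is $\ge \frac{(1-2\zeta)^2}{16 b^2}\cdot\frac12\,e^{\eta}M!/\eta^M$, using $\epsilon=b/M$ so that $\epsilon^2M^2=b^2$. The upper bound Eq.~\eqref{eq:samplenoisy} at $\bar n=1$ is $\lceil C_{a,b,\zeta}D\,(D/(2\sqrt2\,\eta))^M/(16\binom{M}{M/2})\rceil$. This is $O(\sqrt M\,(D/(4\sqrt2\,\eta))^M)$, or $O(1)$ if that quantity is below one. In either case the ceiling costs at most a factor of 2 once the quantity is at least 1. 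Otherwise $\nu_{\rm SV}^\star=1$, and the ratio is even larger, since $M!/\eta^M\ge (M/e)^M\ge (4\sqrt2M/(De))^M$ whenever $D\ge4\sqrt2$. If $D<4\sqrt2$, the dominant term already exceeds one. Dividing, and using Stirling again, gives $\Omega((4\sqrt2M/(De))^M)$ with $\zeta,a,b$ fixed.

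The main obstacle is bookkeeping rather than any conceptual step. One issue is the polynomial prefactors: $M^2$, $\sqrt M$ and $1/\sqrt{2M}$ must combine to something bounded below, and they do, since $M^{3/2}\cdot\sqrt M$ survives in the favourable direction. The other is the ceiling and $O(1)$ issue in the sample-complexity ratio when $D/(4\sqrt2\,\eta)<1$. I would handle this by the case split described above, noting that in the small-$D$ regime the claimed rate is dominated by $M!$ anyway.
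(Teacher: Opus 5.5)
Your proposal is correct and follows the paper's proof essentially step for step: the single-coherence lower bound \eqref{eq:noisy-sv-coherence-lower} for the lossy squeezed vacuum, the classical bound $j_{M,\eta\bar n}$ with the $\eta^M$ factors cancelling, $\binom{M}{M/2}\ge 2^M/\sqrt{2M}$ and Stirling, and, for the PAC part, a direct division of the two bounds of Theorem~\ref{thm:sampleloss-finite}, as in Corollary~\ref{cor:sampleloss-fixed-energy}. Your ceiling case split is unnecessary (and its $D<4\sqrt2$ branch is asserted rather than shown), because at $\bar n=1$ one has $D=4\eta A+(A+\eta)^2\ge 8\eta>4\sqrt2\,\eta$ (using $A\ge1$), so the quantity inside the ceiling always grows like $\sqrt M\bigl(D/(4\sqrt2\,\eta)\bigr)^M\to\infty$ and the ceiling costs at most a factor of two.
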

This is proven in Appendices~\ref{apen:QFIsqnoisybound} and \ref{apen:samplenoisyfixedenergy}. Note that this does not mean that, for a fixed noisy channel, the squeezed vacuum probe outperforms any state in the classical family of Eq.~\eqref{eq:classicaldef}, this is only true for sufficiently large $M$. 

Theorem~\ref{th3} reveals that a super-exponential QFI and sample complexity ratio can survive even in the presence of loss and thermal noise. However, the QFI and sample complexity with squeezed vacuum decrease exponentially and increase exponentially, respectively. To prove this, note that the noisy QFI cannot be more than the QFI in the ideal setting, which by Eq.~\eqref{eq:fixedE-sv-upper} decays as $\tanh^M(r)$. It is therefore natural to ask how the mean photon number must grow with $M$ for constant noise parameters such that the QFI and sample complexity do not decay in $M$. We answer this question in the following theorem:
\begin{theorem}
\label{thm:high-energy-loss-separation}
Fix a transmissivity
$0<\eta\le1$ and thermal noise $\bar n_\text{th}\geq0$ independent of $M$. Let $\bar{n}$ satisfy
\begin{equation}
\label{eq:strangephotonconstraint}
   \frac{M^2}{(\log M)^2}\ll \bar{n}\ll \frac{M^2}{\log M}
\end{equation}
Let $\ket{r}=S(r)\ket0$ be the squeezed vacuum satisfying $\bar{n}=\sinh^2r$. Under the uniform $\mathbb Z_M$ phase twirl, the squeezed vacuum QFI satisfies $\QFI^{\rm SV}
    \geq
    M^{2-o(1)}$, and
   \begin{equation}
        \frac{\QFI
    \!\left[
        \mathcal G_{\eta,\bar n_{\text{th}}}
        \circ\mathcal E_{\delta,M}
        (\ket r\!\bra r)
    \right]}
        { \sup_{\rho_{\rm cl}}
 \QFI\!\left[
 \mathcal G_{\eta,\bar n_{\text{th}}}\circ
 \mathcal E_{\delta,M}(\rho_{\rm cl})\right]}
       = M^{\omega(1)}\;,
    \end{equation}
where $M^{\omega(1)}$ denotes a quantity larger than $M^{k}$ for
every fixed $k>0$. Furthermore the number of samples required to achieve a $(\epsilon,\zeta)$ accurate estimate with $\epsilon=b/M$ on the domain $\Theta_M=[-a/M,a/M]$, with fixed $0<a<\pi/2$ and $0<b<a/2$ using squeezed vacuum satisfies $\nu_{\rm SV }^{\star}\le M^{o(1)}$, and
\begin{equation}
    \frac{\nu_{\rm cl}^{\star}}
         {\nu_{\rm SV}^{\star}}
    \ge
    \exp\!\left[\Omega\!\left(\frac{M^2}{\bar n}\right)\right]
    =M^{\omega(1)}.
\end{equation}
\end{theorem}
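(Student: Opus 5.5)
The plan is to treat the two sides of the separation separately, since the bounds used for Theorems~\ref{th1}--\ref{th3} do not apply in this regime. The classical bound of Appendix~\ref{apen:classicalnoisybound} needs $\eta\bar n<M^2/(M+1)\approx M$, but here $\bar n\sim M^2/\mathrm{polylog}(M)\gg M$. The squeezed-vacuum bound in Eq.~\eqref{eq:QFInoise} also fails: for $r\to\infty$ it behaves like $[1/(1+c_0)]^M$ with $c_0=(1+2\bar n_{\rm th})(1-\eta)>0$, which decays exponentially.

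\emph{Classical side.} As in the derivation of Eq.~\eqref{eq:classicaldef}, I would use the fact that every allowed noisy classical output is parameter-independent processing of the twirled output of a coherent state of intensity $\eta\bar n$. Monotonicity then reduces the problem to bounding $4\,\mathbb E[\mathrm{Var}(\hat n\,|\,\hat n \bmod M)]$ for a Poisson distribution of mean $\mu=\eta\bar n$ (Eq.~\eqref{eq:QGIgeneralVar}). For each residue $j$, let $n_j$ be the element of $j+M\mathbb Z$ closest to $\mu$. Then
\begin{equation*}
\mathrm{Var}(\hat n\,|\,j)\le \mathbb E[(\hat n-n_j)^2\,|\,j].
\end{equation*}
Every other element of the class lies at distance at least $M/2$ from $\mu$. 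A Chernoff bound for the Poisson distribution in the window $|n-\mu|\ge M/2$ gives relative weight $\exp[-\Omega(M^2/\mu)]$, uniformly in $j$, because $M\ll\mu$. Summing over the class and over $j$ gives
\begin{equation*}
\sup_{\rho_{\rm cl}}\QFI\le \mathrm{poly}(M)\,e^{-\Omega(M^2/\bar n)}=M^{-\omega(1)},
\end{equation*}
since $M^2/\bar n\gg\log M$. The same overlap estimate, fed into the two-point (Le Cam / fidelity) argument of Appendix~\ref{apen:samplecomplexity} applied to $\delta=\pm b'/M$, yields $\nu^\star_{\rm cl}\ge e^{\Omega(M^2/\bar n)}$.

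\emph{Squeezed side.} I would lower bound the QFI by the CFI of heterodyne detection on the lossy twirled squeezed vacuum. This is the route behind the $\bar n=M/\log M$ bound cited above (Eqs.~\eqref{eq:exact-heterodyne-CFI} and \eqref{eq:xiM-general-scaling}). The heterodyne outcome is an equal mixture of $M$ rotated elongated Gaussians with covariance $\tfrac12\mathrm{diag}(1+v_1,1+v_2)$. The Fourier or angular harmonics at multiples of $M$ survive the twirl and carry $\delta$ with weight $M^2$. Their attenuation is controlled by $\xi_M$, which Eq.~\eqref{eq:xiM-general-scaling} gives as $\exp[-\sqrt{A/\eta}\,(M/\sqrt{\bar n})(1+o(1))]$ for general $\bar n$; I would re-derive this dependence on $\bar n$ rather than only the special case $\bar n=M/\log M$. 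With $\bar n\gg M^2/\log^2 M$ we get $M/\sqrt{\bar n}=o(\log M)$, hence
\begin{equation*}
\QFI^{\rm SV}\ge \mathcal J_{\rm het}\ge M^2 e^{-o(\log M)}=M^{2-o(1)}.
\end{equation*}
Dividing by the classical bound gives the ratio $e^{\Omega(M^2/\bar n)-o(\log M)}=M^{\omega(1)}$. For sampling, I would plug this CFI into the heterodyne-based estimator and concentration bound of Appendix~\ref{sec:samplesthermalloss}, which needs roughly $C_{a,b,\zeta}/(\epsilon^2\mathcal J_{\rm het})$ samples. With $\epsilon=b/M$ this is $M^{o(1)}$, and together with the classical lower bound it gives the stated sample ratio.

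\emph{Main obstacle.} The hardest step is the uniform heterodyne estimate $\xi_M\ge e^{-O(M/\sqrt{\bar n})}$ in the regime where neighbouring rotated Gaussians overlap heavily. The angular width is about $e^{-r}\sim\bar n^{-1/2}$, which is $\gg 1/M$, so the $M$ components cannot be resolved individually. One therefore has to control the surviving $M$-th angular harmonic of the mixture directly. I would first try a saddle-point evaluation of the radial integral of that harmonic, keeping the $\eta$- and $\bar n_{\rm th}$-dependent constant $A$ explicit. The estimator analysis must also hold uniformly over $\delta\in\Theta_M$, not only locally.
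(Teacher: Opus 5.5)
Your plan is correct and matches the paper's proof step for step. On the classical side the paper uses exactly your closest-representative reduction, in the form $\QFI^{\rm C}\le16\,\mathbb E[(N-\mu)^2\mathbf 1_{\{|N-\mu|\ge M/2\}}]$ for $N\sim{\rm Pois}(\eta\bar n)$, followed by a Chernoff tail, Cauchy--Schwarz and the two-point testing bound; the squeezed side uses the heterodyne angle. Bound that unconditional sum directly: the conditional weight of the non-closest representatives is not uniformly small in $j$, since it is close to $1/2$ for classes whose closest member lies near $\mu\pm M/2$.

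The step you flag as the main obstacle is exact and needs no saddle point. Discarding the radius, the angular marginal $\sqrt{a_Mb_M}/[2\pi(b_M\cos^2\theta+a_M\sin^2\theta)]$ has Fourier coefficients $(-\xi_M)^k$. After the twirl it is therefore a wrapped Cauchy density with $R_M=\xi_M^{M/2}$, which gives $\mathcal J_{\rm het}=2M^2\xi_M^M/(1-\xi_M^M)^2$. The finite-sample bound then comes from the bounded statistic $\sin(M(\Theta-\theta_0))$, whose mean is $R_M\sin(M\delta)$, together with a clipped arcsin estimator and Hoeffding, yielding $\nu_{\rm SV}^\star\le\lceil 2\xi_M^{-M}\log(2/\zeta)/(b^2\cos^2a)\rceil$. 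This replaces your heuristic $C_{a,b,\zeta}/(\epsilon^2\mathcal J_{\rm het})$. That estimator is in Appendix~\ref{sec:sampleloss-high-energy}, not Appendix~\ref{sec:samplesthermalloss}; the latter uses the $\ket0\leftrightarrow\ket M$ receiver, whose visibility is far too small in this regime.
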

This is proven in Appendices~\ref{sec:high-energy-loss-separation} and \ref{sec:sampleloss-high-energy}. While we admit that the mean photon number constraint in Eq.~\eqref{eq:strangephotonconstraint} is unusual, we have tuned the constraint to demonstrate an interesting physical effect. Many quantum advantages in sensing or learning follow a different functional form in the presence of noise~\cite{Huelga1997,dorner2009optimal,Escher2011,demkowicz2012elusive,oh2024entanglement,seif2026entanglement}. However, by choosing $\bar{n}=M^2/\log(M)^{3/2}$ the photon number distributions under constant loss for both the squeezed vacuum state and the classical family of states are tuned just to the point where we observe a sub-polynomial versus super-polynomial separation. 

Finally, we consider a phase diffusion channel parameterised by $\sigma$ that acts as $  \mathcal D_\sigma(\ket n\!\bra m)=e^{-\sigma^2(n-m)^2/2}\ket n\!\bra m$. 
\begin{theorem}
\label{thm:phase-diffusion-no-go}
Let $\sigma_M>0$ and let
$\Phi_{\delta,M,\sigma_M}$ denote the channel obtained by composing
the uniform $\mathbb Z_M$ phase twirl, wrapped Gaussian phase
diffusion of variance $\sigma_M^2$, and the unknown phase rotation
$\delta$. Then, if $M\sigma_M\to\infty$, 
\begin{equation}
    \sup_\rho
    \QFI
    \left[
        \Phi_{\delta,M,\sigma_M}(\rho)
    \right]
    \le
    2M^2e^{-M^2\sigma_M^2}
    (1+o(1)).
\end{equation}
Consequently, $M^2\sigma_M^2-2\log M\longrightarrow+\infty$
implies
\begin{equation}
    \sup_\rho
    \QFI
    \left[
        \Phi_{\delta,M,\sigma_M}(\rho)
    \right]
    \longrightarrow0.
\end{equation}
In particular, this holds for every fixed $\sigma_M=\sigma>0$,
independently of the input state's mean photon number or photon-number
variance.
\end{theorem}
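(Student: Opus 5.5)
The plan is to reduce the quantum problem to a purely classical one. Both the $\mathbb Z_M$ twirl and the wrapped Gaussian phase diffusion are mixtures of phase rotations $U_\theta=e^{\mathrm i\theta\hat n}$. So is the signal rotation $e^{\mathrm i\delta\hat n}$, trivially. All these rotations commute, so for any input $\rho$ I will write
\begin{equation}
\Phi_{\delta,M,\sigma_M}(\rho)=\int_0^{2\pi} f_M(\theta-\delta)\,U_\theta\rho U_\theta^\dagger\,\mathrm d\theta .
\end{equation}
Here $f_M$ is the convolution of the uniform distribution on $\{2\pi k/M\}$ with the wrapped Gaussian of variance $\sigma_M^2$. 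I will identify $f_M$ through its Fourier coefficients: on $\ket n\!\bra m$ the twirl multiplies by $\mathbb 1[M\mid n-m]$ and the diffusion by $e^{-\sigma_M^2(n-m)^2/2}$. This gives
\begin{equation}
f_M(\theta)=\frac{1}{2\pi}\Bigl(1+2\sum_{k\ge1}q^{k^2}\cos(kM\theta)\Bigr),\qquad q:=e^{-M^2\sigma_M^2/2}.
\end{equation}

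The key step is data processing. The map $\theta\mapsto U_\theta\rho U_\theta^\dagger$ is a $\delta$-independent classical-to-quantum channel applied to the classical location family $\{f_M(\cdot-\delta)\}$. Monotonicity of the QFI under parameter-independent channels (already used in the paper for the classical family) then gives, uniformly in $\rho$,
\begin{equation}
\sup_\rho\QFI\bigl[\Phi_{\delta,M,\sigma_M}(\rho)\bigr]\le I(f_M):=\int_0^{2\pi}\frac{f_M'(\theta)^2}{f_M(\theta)}\,\mathrm d\theta .
\end{equation}
The right-hand side is the classical Fisher information of the location family and does not depend on $\delta$. This bound is manifestly independent of the mean photon number and of the photon-number variance of $\rho$, which gives the final remark of the theorem. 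The one point requiring care is justifying this step rigorously. I would do so by viewing the integral as the channel $\ket\theta\!\bra\theta\mapsto U_\theta\rho U_\theta^\dagger$ applied to a continuous-variable classical register. Alternatively, one can discretise $\theta$ and pass to the limit using lower semicontinuity of the QFI.

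It then remains to bound $I(f_M)$. Since $M\sigma_M\to\infty$, we have $q\to0$. For the denominator, $f_M(\theta)\ge\frac{1}{2\pi}\bigl(1-2\sum_{k\ge1}q^{k^2}\bigr)\ge\frac{1}{2\pi}\bigl(1-\tfrac{2q}{1-q}\bigr)=\frac{1}{2\pi}(1-o(1))$. For the numerator, $f_M'(\theta)=-\frac{1}{\pi}\sum_{k\ge1}kMq^{k^2}\sin(kM\theta)$, and Parseval gives
\begin{equation}
\int_0^{2\pi}f_M'^2\,\mathrm d\theta=\frac{M^2}{\pi}\sum_{k\ge1}k^2q^{2k^2}=\frac{M^2q^2}{\pi}\bigl(1+O(q^6)\bigr).
\end{equation}
Combining the two,
\begin{equation}
I(f_M)\le 2M^2q^2(1+o(1))=2M^2e^{-M^2\sigma_M^2}(1+o(1)),
\end{equation}
which is the claimed bound.

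The consequences are then immediate. We have $2M^2e^{-M^2\sigma_M^2}=2\exp[-(M^2\sigma_M^2-2\log M)]$, which tends to $0$ when $M^2\sigma_M^2-2\log M\to\infty$. Any fixed $\sigma>0$ satisfies both $M\sigma\to\infty$ and this divergence condition, so the bound applies to every fixed $\sigma$.
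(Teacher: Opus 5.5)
Your proposal is correct and follows the paper's route. Like the paper, you write the composed channel as a random phase rotation with density $f_{M,\sigma}$ and bound the QFI by the classical Fisher information of the phase label; your classical-to-quantum channel step is equivalent to the paper's classical register plus monotonicity under partial trace. You then use the pointwise bound $f\ge(1-2\sum_{k\ge1}q^{k^2})/(2\pi)$ and Parseval for $\int f'^2$, where your estimate $\sum_{k\ge1}q^{k^2}\le q/(1-q)$ simply makes the paper's ``dominated by the first term'' step explicit.
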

This is proven in Appendix~\ref{sec:phase-diffusion-no-go}. Remarkably, this result is independent of mean photon number. This proves that phase diffusion noise provides a fundamental block to asymptotically achieving constant QFI for phase estimation under the $\mathbb{Z}_M$ phase twirling channel. In order to maintain constant QFI, it is sufficient to have $\sigma_M^2=O(M^{-2})$. A necessary condition is $\sigma_M^2= O(\log(M)/M^{2})$.

\section{Scalable advantage}
\label{sec:implementation}
Above we have computed the QFI in various different settings under various restrictions. However, for practical application of these results, we require a physical implementation. Here we propose physical implementations in two different regimes. We first consider near-term quantum processors whereby we have a fixed mean photon number and noisy state preparation and measurements. Second, we consider a future setting where bosonic quantum processors can be used to prepare and measure the states of interest.

\subsection{Near-term quantum processors}
\label{sec:neartermprocessor}
In the near term, it is not reasonable to expect quantum processors capable of coherently controlling arbitrarily many photons. As such, in this subsection we focus on input states with constant mean photon number and experimentally feasible measurements. For simplicity we focus on $\bar{n}=1$.

The optimal measurement for saturating the QFI for estimating a single parameter is known~\cite{helstrom1969quantum,braunstein1994statistical} (for multiple parameters, the situation is more subtle~\cite{conlon2025role,ragy2016compatibility}). In Appendix~\ref{sec:additional-null-sv}, we describe the optimal measurement in detail. After imposing a finite photon number cutoff, the optimal measurement is a finite-dimensional projective measurement. In principle, any such projective measurement can be implemented on an oscillator~\cite{krastanov2015universal}. However, in practice, this measurement may be complex to implement. As such, here we focus on more experimentally friendly measurements.

\begin{figure}[t]
    \centering
    \includegraphics[width =0.5\textwidth]{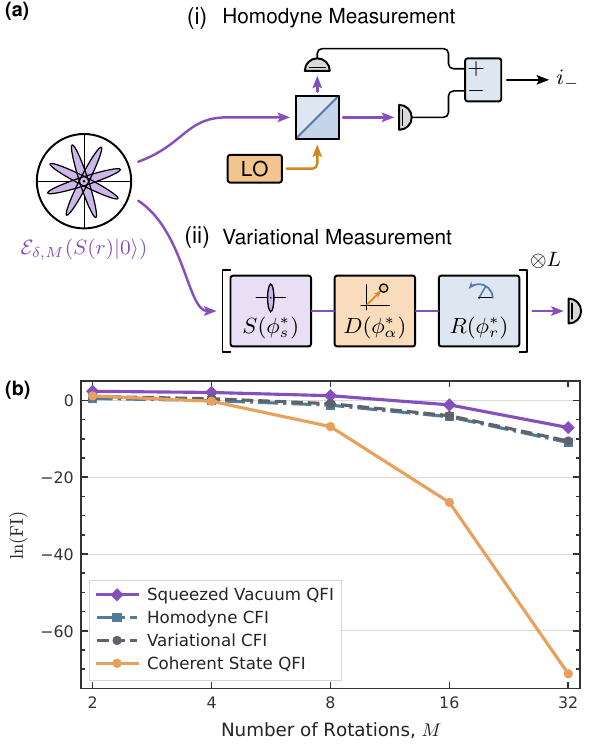}
    \caption{\textbf{Feasible measurement implementation.} (a \romannum{1}) Schematic showing homodyne detection. The input state is mixed on a 50:50 beamsplitter with a bright local oscillator (LO). (a \romannum{2}) The photonic circuit used to implement a close to optimal measurement. $L$ is the number of layers (number of times the circuit is repeated), $S,D$ and $R$ denoting a squeezing, displacement and rotation operation respectively, and $\phi^*$ denotes the numerically optimised angles. (b) CFI of the measurement strategies shown in (a \romannum{1}) and (a \romannum{2}), plotted alongside the QFI of a squeezed vacuum state and coherent state. We assume $\bar{n}_\text{th}=0.1$, $\eta=0.9$, $\alpha=1$ for the coherent state, and for the squeezed vacuum state $r=\text{arcsinh}(1)$. The variational measurement assumes $n_\text{max}=5$.}   \label{fig:optmeas}
\end{figure}

We first consider homodyne detection, accessible to most labs. We require only balanced homodyne detection at the fixed local-oscillator phase $\vartheta_\star=\pi/(2M)$, shown in Fig.~\ref{fig:optmeas}(a.\romannum{1}). Each run of the experiment produces a real quadrature
outcome $x_j$. The experimentalist then produces the statistic $Y_j:=\cos(t_\star x_j)$, which is evaluated entirely in classical post-processing ($t_\star=\sqrt{2M}$ is chosen in Appendix~\ref{sec:additional-homodyne-sv}). Finally,
\begin{equation}
    \widehat\Phi(t_\star)
    :=
    \frac1\nu\sum_{j=1}^{\nu}Y_j
\end{equation}
estimates
$\Phi_\delta(t_\star)=
\mathbb E_\delta[\cos(t_\star X_{\vartheta_\star})]$.
Near the operating point $\delta=0$, the phase can therefore be
estimated from the calibrated linear response
\begin{equation}
    \hat\delta
    =
    \frac{
        \widehat\Phi(t_\star)-\Phi_0(t_\star)
    }{
        \left.
        \partial_\delta\Phi_\delta(t_\star)
        \right|_{\delta=0}
    }.
\end{equation}
Thus the nontrivial part of the receiver is classical
post-processing rather than non-Gaussian optical measurement. As we prove in Appendix~\ref{sec:additional-homodyne-sv},
\begin{equation}
    \log
    \frac{\mathcal J^{\text{SV}}_{\rm hom}}
    {\displaystyle
    \sup_{\rho_{\rm cl}}
    \QFI}
    \ge
    M\log M-O(M)-M^2\sigma_M^2,    
\end{equation}
where $\mathcal J^{\text{SV}}_{\rm hom}$ is the CFI of homodyne detection with squeezed vacuum. Thus, without phase diffusion ($\sigma_M=0$), the super-exponential advantage survives even under constant loss. The drawback is that, although the ratio above grows super-exponentially, $\mathcal J^{\text{SV}}_{\rm hom}$ also decays exponentially in $M$, necessitating exponentially more samples to reach the same precision. Nevertheless, given the experimental accessibility of squeezed states and homodyne detection, this provides a path to demonstrating super-exponential advantage over the family of classical states defined in Eq.~\eqref{eq:classicaldef} requiring only modest resources as shown in Fig.~\ref{fig:optmeas}(b).

We next examine how variational circuits can be used to implement a close-to-optimal measurement, as shown in Fig.~\ref{fig:optmeas}(a \romannum{2}). We consider a variational circuit consisting of $L$ layers of squeezing, displacement and rotation operations followed by a photon counting measurement. We model the photon counting measurement as being characterised by the maximum number of detectable photons, $n_\text{max}$. Photon numbers greater than $n_\text{max}$ are assumed to be counted as $n_\text{max}$ photons. Based on this, we can compute the CFI numerically (see Appendix~\ref{apen:variational-receiver} for more detail), shown in Fig.~\ref{fig:optmeas}(b). This offers another potential near-term experimental realisation of this exponential advantage.

\subsection{Growing-energy implementations}

We finally consider implementations in which the source mean photon number can increase with $M$. Suppose that preparation, transmission, and detection imperfections are described by a fixed number of thermal attenuators whose parameters are independent of $M$, with nonzero total transmissivity. Their composition is another thermal attenuator with fixed effective parameters, so
Theorem~\ref{thm:high-energy-loss-separation} applies. For example, with
$\bar n=M^2/(\log M)^{3/2}$, squeezed-vacuum probes and heterodyne detection require at most $M^{o(1)}$ samples, whereas every protocol in the classical family of Eq.~\eqref{eq:classicaldef} requires $M^{\omega(1)}$ samples. The same asymptotic separation survives phase diffusion satisfying $\sigma_M^2=O(M^{-2})$. The implementation therefore requires increasing squeezing while maintaining controlled loss, thermal noise, and phase stability.

\section{Discussion}
\label{sec:discussion}
The example in this work is a well-studied phase estimation problem evaluated under a new specially-structured noise model. As such, the untwirled squeezed vacuum QFI of $8\bar n(\bar n+1)$ places an upper bound on the attainable QFI~\cite{giovannetti2011advances}. In spite of this, we show how simple modifications to the phase estimation problem can lead to super-exponentially growing differences between the QFI of a squeezed vacuum and the family of classical states defined in Eq.~\eqref{eq:classicaldef}.  This does not circumvent the Heisenberg limit: the twirling channel does not enhance the phase encoding, but instead degrades the sensitivity of the family of classical probes considered here far more strongly than squeezed vacuum probe states. Similarly, the robustness to constant noise proven in Theorem~\ref{thm:high-energy-loss-separation} appears to evade limitations of previous works~\cite{Huelga1997,dorner2009optimal,Escher2011,demkowicz2012elusive,oh2024entanglement,seif2026entanglement}. Again, however, this robustness arises because the twirling channel
suppresses the classical benchmark more strongly, rather than because
the phase signal is error-corrected.

For the problem considered here, the parameter of interest is encoded in a random manner, every time the channel is used. This is similar in nature to Refs.~\cite{oh2024entanglement,cotler2026quantum,prabhu2026exponential,kannan2026exponential2}. It is particularly informative to view our task through the framework of Ref.~\cite{kannan2026exponential2} where, for several different resource families, an exponential advantage in the learning of Fourier coefficients was demonstrated. In our work, after the phase twirl the total rotation angle is $\phi\in\{\delta+2\pi k/M\}_{k=0}^{M-1}$ with uniform weights. The angular Fourier coefficients of $\phi$ are
\begin{equation}
    \mathbb E\!\left[e^{\mathrm i\ell\phi}\right]
    =\frac1M\sum_{k=0}^{M-1}e^{\mathrm i\ell(\delta+2\pi k/M)}
    =\begin{cases}
        e^{\mathrm i\ell\delta}, & \ell\in M\mathbb Z\\
        0, & \text{otherwise}
    \end{cases}\;.
\end{equation}
Hence, estimating $\delta$ under the phase twirl is equivalent to estimating the phase of the $M$-th angular Fourier coefficient of the $\phi$ distribution. The two settings nevertheless differ in a meaningful manner. First, our parameter is encoded by a rotation rather than a displacement. Second, here the hardness for probes from the classical family in Eq.~\eqref{eq:classicaldef} is created by a parameter-independent noise channel acting on a standard phase-estimation problem rather than by the choice of a learning objective. It would be interesting in future to examine whether directly estimating the $M$-th order Fourier coefficient can display an exponential advantage without the phase twirling channel under a different encoding.

There are also similarities with Ref.~\cite{liu2024exponential}, where exponential advantages in learning were identified by gaining access to a purification of the system. For instance, if one knows the index $k$ in every round of the experiment, then the problem discussed in this paper reduces to regular phase estimation where the advantage of squeezed vacuum over coherent states is limited to quadratic. There has been significant recent interest in connecting quantum metrology and learning~\cite{meyer2025quantum,chen2026instance,kwon2026universal}. The strong overlap between our results for PAC learning and quantum metrology complement these known results.

In the future, the task studied here may find relevance for quantum communication networks---e.g.~for calibrating a $M$-phase shift keying (PSK) device. PSK is an important technique for encoding information in continuous variable (CV) quantum key distribution (QKD)~\cite{wang2023discrete,bauml2024security,kanitschar2022optimizing}. An $M$-PSK device prepares one of $M$ different quantum states, each of which corresponds to some unique classical information. Unfortunately, imperfections in state preparation can render a QKD protocol insecure~\cite{fung2007phase,peng2025practical}. As such, it is of great importance to ensure devices used in QKD are well calibrated. The task considered here is analogous to the calibration of the state preparation device for $M$-PSK.

\section{Acknowledgments}

L.O.C., E.A., and A.V.G.~were supported in part by ONR MURI, AFOSR MURI, ARL (W911NF-24-2-0107), DoE ASCR Quantum Testbed Pathfinder program (award No.~DE-SC0024220), NSF QLCI (awards No.~OMA-2120757 and No.~OSI-2553574), NSF STAQ program, and NQVL:QSTD:Design:FTL. L.O.C., E.A., and A.V.G.~also acknowledge support from the U.S.~Department of Energy, Office of Science, National Quantum Information Science Research Centers, Quantum Systems Accelerator (award No.~DE-SCL0000121) and from the U.S.~Department of Energy, Office of Science, Accelerated Research in Quantum Computing, Fundamental Algorithmic Research toward Quantum Utility (FAR-Qu). V.V. is supported by the National Research Foundation, Singapore, through the National Quantum Office, hosted in A*STAR, under its National Quantum Scholarship Scheme (NQSS) Funding Initiative (PhD). S.R.M. is supported by the NSF QLCI award OMA-2120757. J.B. notes that the views expressed in this work are those of the author and do not reflect the official policy of the United States Naval Academy or any department of the United States government.

 \appendix
 
\begin{widetext}
\newpage
  \section*{Appendix Table of Contents}

\startcontents[appendices]

\printcontents[appendices]{}{1}{\setcounter{tocdepth}{2}}
\vspace{10cm} 
 \clearpage

\section{QFI calculations in ideal setting}
\label{apen:mary}
In this appendix we first derive an expression for the QFI for arbitrary pure states in Appendix~\ref{apen:mary1}. In Appendices~\ref{apen:maryC} and \ref{apen:maryS} we specialise to coherent states and squeezed vacuum respectively, before comparing the two in Appendix~\ref{apen:constantnbar}.

\subsection{General expression for QFI}
\label{apen:mary1}

We begin by writing our input state in the Fock basis as
\begin{equation}
    \ket{\psi}=\sum_{n=0}^\infty c_{n}\ket{n}\;.
\end{equation}
For now we don't specify the input state. After the twirling channel in Eq.~\eqref{eq:sample-channel} the state becomes
\begin{equation}
\label{eq:inputrhopure}
\begin{split}
        \rho&=\mathcal T_M(\ket{\psi})\\
        &=\frac{1}{M}\sum_{k=1}^{M}\sum_{n,p}c_{n}c_p^*\mathrm{e}^{\mathrm{i}2\pi k(n-p)/M}\ket{n}\bra{p}\\
        &=\frac{1}{M}\sum_{n,p}c_{n}c_p^*\sum_{k=1}^{M}\mathrm{e}^{\mathrm{i}2\pi k(n-p)/M}\ket{n}\bra{p}\\
         &=\sum_{n,p}c_{n}c_p^*\delta_{(n-p)\text{ mod }M,0}\ket{n}\bra{p}
\end{split}
\end{equation}
Here $\delta_{(n-p)\text{ mod }M,0}$ is to be interpreted as Kronecker delta, distinct to the rotation $\delta$. Note that we have not included the  rotation $\delta$ yet. This acts to give the final state $ \mathcal E_{\delta,M}(\ket{\psi})=\mathrm{e}^{\mathrm{i}\hat{n}\delta}\rho\mathrm{e}^{-\mathrm{i}\hat{n}\delta}$ so that
\begin{equation}
        \frac{\partial\mathcal E_{\delta,M}(\ket{\psi})}{\partial\delta}\bigg|_{\delta=0}=i\left[\hat{n},\mathcal E_{0,M}(\ket{\psi})\right]=i\left[\hat{n},\rho\right]\;.
\end{equation}
The standard expression for the QFI is
\begin{equation}
\begin{split}
    \QFI=2\sum_{j,k}\frac{|(\partial_\delta \mathcal E_{\delta,M}(\ket{\psi}))_{jk}|^2}{\lambda_j+\lambda_k}=2\sum_{j,k}\frac{|\bra{\tilde{u}_j}[ \mathcal E_{0,M}(\ket{\psi}),\hat{n}]\ket{\tilde{u}_k}|^2}{\lambda_j+\lambda_k}\;,
\end{split}
\end{equation}
where $\lambda_k$ and $\ket{\tilde{u}_k}$ are the eigenvalues and eigenvectors of $\mathcal E_{0,M}(\ket{\psi})$ respectively. The sum is only over terms such that $\lambda_j+\lambda_k\neq0$. We can write $\mathcal E_{0,M}(\ket{\psi})$ in the following direct sum structure (note that we only explicitly write the non-zero eigenvalues)
\begin{equation}
\label{eq:directsumstructure}
\begin{split}
    \mathcal E_{0,M}(\ket{\psi})&=\bigoplus_{r=0}^{M-1}\rho_r\\
    &=\bigoplus_{r=0}^{M-1}\sum_{p,q=0}^\infty c_{r+pM}c_{r+qM}^*\ket{r+pM}\bra{r+qM}\\&=\bigoplus_{r=0}^{M-1}\ket{u_r}\bra{u_r}\\&=\bigoplus_{r=0}^{M-1}\lambda_r\ket{\tilde{u}_r}\bra{\tilde{u}_r}\;,
\end{split}
\end{equation}
where
\begin{equation}
\ket{u_r}=\sum_{p=0}^\infty c_{r+pM}\ket{r+pM}\;,
\end{equation}
with
\begin{equation}
\lambda_r=\sum_{p=0}^\infty |c_{r+pM}|^2\;,
\end{equation}
and
\begin{equation}
\label{eq:utilde}
\ket{\tilde{u}_r} =\frac{1}{\sqrt{\lambda_r}}\ket{u_r}\;.   
\end{equation}

From the additivity of the QFI, we arrive at the following expression.
\begin{equation}
\begin{split}
        \QFI( \mathcal E_{\delta,M}(\ket{\psi}))&=\sum_{r=0}^{M-1}\lambda_r\QFI(\ket{\tilde{u}_r})\\
        &=4\sum_{r=0}^{M-1}\lambda_r\text{Var}_{\ket{\tilde{u}_r}}(\hat{n})\;.
\end{split}
\end{equation}
For the state $\ket{\tilde{u}_r}$, we have (summing over non-zero $\lambda_r$ only)
\begin{equation}
\begin{split}
    \langle \hat{n}\rangle&=\frac{1}{\lambda_r}\sum_p|c_{r+pM}|^2(r+pM)\\
    &=r+\frac{M}{\lambda_r}\sum_p|c_{r+pM}|^2p\;,
\end{split}
\end{equation}
and
\begin{equation}
    \begin{split}
         \langle \hat{n}^2\rangle&=\frac{1}{\lambda_r}\sum_p|c_{r+pM}|^2(r+pM)^2\\
         &=r^2+\frac{1}{\lambda_r}\sum_p|c_{r+pM}|^2(2rpM+p^2M^2)\;.
    \end{split}
\end{equation}
Therefore, 
\begin{equation}
\begin{split}
       \text{Var}_{\ket{\tilde{u}_r}}(\hat{n})&=\frac{M^2}{\lambda_r}\sum_p|c_{r+pM}|^2p^2-\frac{M^2}{\lambda_r^2}\left(\sum_p|c_{r+pM}|^2p\right)^2\\
       &=\frac{M^2}{\lambda_r}\left(\sum_p|c_{r+pM}|^2p^2-\frac{\left(\sum_p|c_{r+pM}|^2p\right)^2}{\lambda_r}\right)\;.
\end{split}
\end{equation}

Therefore, the QFI can be written as
\begin{equation}
    \label{eq:QFIgen}\QFI=4M^2 \sum_{j,\lambda_j\neq0}\left(\sum_{p=0}^\infty |c_{j+pM}|^2p^2-\frac{\left(\sum_{p=0}^\infty |c_{j+pM}|^2p\right)^2}{\lambda_j}\right)\;.
\end{equation}
As the second term is negative this offers an immediate upper bound
\begin{equation}
\label{eq:QFIlb}
\begin{split}
       \QFI&\leq4M^2 \sum_{j,\lambda_j\neq0}\left(\sum_{p=0}^\infty |c_{j+pM}|^2p^2\right)\\
       &=4M^2 \sum_{j,\lambda_j\neq0}\left(\sum_{p=1}^\infty |c_{j+pM}|^2p^2\right)\\
       &\leq 4\sum_{j,\lambda_j\neq0}\left(\sum_{p=1}^\infty |c_{j+pM}|^2(j+Mp)^2\right)\\
       &=4\sum_{n\geq M} |c_{n}|^2n^2\;.
\end{split}
\end{equation}
The third line uses a $j+Mp>Mp$, and the fourth line is simply a relabelling.

A lower bound follows from considering just one term in the sum in Eq.~\eqref{eq:QFIgen}.
\begin{equation}
\label{eq:QFIlowerbound}
    \QFI\geq4M^2 \left(\sum_{p=0}^\infty |c_{pM}|^2p^2-\frac{\left(\sum_{p=0}^\infty |c_{pM}|^2p\right)^2}{\lambda_0}\right)\;.
\end{equation}

It will be useful for later in the manuscript to rewrite Eq.~\eqref{eq:QFIsimple} in a
pairwise form. Define
\begin{equation}
    a_{j,p}:=|c_{j+pM}|^2\;,
\end{equation}
such that $\lambda_j=\sum_{p\ge0}a_{j,p}$. Note that,
\begin{equation}
\begin{split}
    &\lambda_j\sum_p a_{j,p}p^2
-\left(\sum_p a_{j,p}p\right)^2 \nonumber\\
 =&
\frac12\sum_{p,q\ge0}
a_{j,p}a_{j,q}(p-q)^2
\\=&
\sum_{0\le p<q<\infty}
a_{j,p}a_{j,q}(q-p)^2.
\end{split}    
\end{equation}
Therefore, we can write the QFI as
\begin{equation}
\label{eq:pairwiseQFI}
\QFI
=
4M^2\sum_{j=0}^{M-1}\frac{1}{\lambda_j}
\sum_{0\le p<q<\infty}
a_{j,p}a_{j,q}(q-p)^2,
\end{equation}
where again the sum is only over non-zero $\lambda_j$.

\subsection{Coherent states}
\label{apen:maryC}
For a coherent state input, we can write the density matrix after the twirling channel as
\begin{equation}
\mathcal E_{0,M}(\ket{\alpha}) =  \, e^{-|\alpha|^2} \sum_{n,m=0}^{\infty} 
\frac{ \alpha^n (\alpha^*)^m}{\sqrt{n! \, m!}} 
\, \delta_{(n-m) \bmod M, 0} \, |n\rangle \langle m|.
\end{equation}
Following the above, we arrive at
\begin{equation}
\begin{split}
    \QFI^{\text{C}}=4M^2\sum_{j,\lambda_j\neq0}\bigg(&\mathrm{e}^{-|\alpha|^2}\sum_{p=0}^\infty \frac{|\alpha|^{2(j+pM)}}{(j+pM)!}p^2-\mathrm{e}^{-2|\alpha|^2}\frac{\left(\sum_{p=0}^\infty \frac{|\alpha|^{2(j+pM)}}{(j+pM)!}p\right)^2}{\lambda_j}\bigg)\;.
\end{split}
\end{equation}
with 
\begin{equation}
    \lambda_j=\mathrm{e}^{-|\alpha|^2}\sum_{p=0}^\infty\frac{|\alpha|^{2(j+pM)}}{(j+pM)!}\;.
\end{equation}

We now evaluate the upper bound in Eq.~\eqref{eq:QFIlb} for a coherent state:
\begin{equation}
\begin{split}
       \QFI^{\text{C}}&\leq4\sum_{n\geq M} |c_{n}|^2n^2\;.
\end{split}
\end{equation}
Denoting the terms in this sum by $t_i=|c_{i}|^2i^2$. We have that
\begin{equation}
\begin{split}
        \frac{t_{n+1}}{t_{n}}&=\frac{|c_{n+1}|^2(n+1)^2}{|c_{n}|^2n^2}\\
        &=|\alpha|^2\frac{n+1}{n^2}\;.
\end{split}
\end{equation}
For all $n\geq M$, the ratio $r_n=t_{n+1}/t_{n}$ is bounded by
\begin{equation}
   t_{n+1}/t_{n}\leq r_M= |\alpha|^2\frac{M+1}{M^2}\;.
\end{equation}
From the rules of geometric series (assuming $r_M<1$), we then have that
\begin{equation}
    \sum_{n\geq M}t_n\leq t_M\sum_{k=0}^{\infty}r_M^k=\frac{t_M}{1-r_M}\;.
\end{equation}
Filling everything in using
\begin{equation}
    t_M=|c_M|^2M^2=\frac{M^2\mathrm{e}^{-|\alpha|^2}|\alpha|^{2M}}{M!}
\end{equation}
we arrive at the following bound for coherent states (for $|\alpha|^2<\frac{M^2}{M+1}$)
\begin{equation}
    \label{eq:QFIcoherentUB}\QFI^{\text{C}}\leq4\frac{M^2\mathrm{e}^{-|\alpha|^2}|\alpha|^{2M}}{M!(1-|\alpha|^2\frac{M+1}{M^2})}\;.
\end{equation}
As an illustrative example, with $\alpha=1$, we see
\begin{equation}
    \QFI^{\text{C}}\leq\frac{4M^2\mathrm{e}^{-1}}{M!(1-\frac{M+1}{M^2})}\;.
\end{equation}
For large $M$, this is dominated by the $O(1/M!)$ term.

\subsubsection{How many photons are required for a non-vanishing QFI?}
For a coherent state with mean photon number $\bar{n}$, we use the upper bound in Eq.~\eqref{eq:QFIcoherentUB}. When $\bar{n}=o(M)$, we have
\begin{equation}
    \left(1-\bar{n}\frac{M+1}{M^2}\right)^{-1}=1+o(1)\;.
\end{equation}
This implies, via Stirling's approximation ($M!=\sqrt{2\pi M}(M/e)^M(1+o(1))$),
\begin{equation}
\label{eq:QFIcohexpdecay}
\begin{split}
    \QFI^{\mathrm{C}}
\le&
4\frac{M^2\mathrm{e}^{-|\alpha|^2}|\alpha|^{2M}}{M!(1-|\alpha|^2\frac{M+1}{M^2})}\\
\sim&
4\frac{M^2\mathrm{e}^{-\bar n}\bar n^{M}}{\sqrt{2\pi M}\left(\frac{M}{e}\right)^M}\\
=&
\frac{4}{\sqrt{2\pi}}M^{3/2}\mathrm{e}^{M-\bar n}\left(\frac{\bar n}{M}\right)^M\\
=&\frac{4}{\sqrt{2\pi}}
\exp\!\left[-\bar{n}+M+M\log\!\left(\frac{\bar{n}}{M}\right)+\frac32\log(M)\right].
\end{split}
\end{equation}
Therefore any sublinear scaling $\bar{n}=o(M)$ gives $\QFI^{\mathrm{C}}\to0$, and in particular $\bar{n}\sim M/\log M$ is insufficient. To keep the coherent-state QFI from vanishing, one requires at least
\begin{equation}
\bar{n}(M)=\Omega(M).
\end{equation}

\subsubsection{Constant mean photon number}
We now evaluate the classical benchmark with constant mean photon number $\bar{n}$. By convexity and monotonicity of QFI, the supremum over the classical family (Eq.~\eqref{eq:classicaldef}) is attained
by the coherent state of energy $\bar{n}$:
\begin{equation}
\sup_{\rho\in\rho_{\rm cl}(\bar{n})}
\QFI(\mathcal E_{\delta,M}(\rho))
=
\QFI^{\rm C}\left(\mathcal E_{\delta,M}\left(\ket{\sqrt{\bar{n}}}\right)\right).
\label{eq:fixedE-classical-sup}
\end{equation}
Equation~\eqref{eq:qfiubCoh} (also Eq.~\eqref{eq:QFIcoherentUB}) gives
\begin{equation}
\QFI^{\rm C}\left(\mathcal E_{\delta,M}\left(\ket{\sqrt{\bar{n}}}\right)\right)
\le
\frac{4M^2e^{-\bar{n}}\bar{n}^M}
{M!\left(1-\bar{n}(M+1)/M^2\right)}
=
\frac{4M^2e^{-\bar{n}}\bar{n}^M}{M!}[1+o(1)].
\label{eq:fixedE-coh-upper}
\end{equation}
Conversely, retaining only the $\ket 0\leftrightarrow\ket M$ pair in
the pairwise QFI representation (Eq.~\eqref{eq:pairwiseQFI}) gives
\begin{equation}
\QFI^{\rm C}\left(\mathcal E_{\delta,M}\left(\ket{\sqrt{\bar{n}}}\right)\right)
\ge
4M^2\frac{p_0p_M}{\lambda_0}.
\end{equation}
For fixed $\bar{n}$,
\begin{equation}
p_0=e^{-\bar{n}},
\qquad
p_M=\frac{e^{-\bar{n}}\bar{n}^M}{M!},
\qquad
\lambda_0=p_0[1+o(1)],
\end{equation}
and therefore
\begin{equation}
\QFI^{\rm C}\left(\mathcal E_{\delta,M}\left(\ket{\sqrt{\bar{n}}}\right)\right)
\ge
\frac{4M^2e^{-\bar{n}}\bar{n}^M}{M!}[1+o(1)].
\end{equation}
Combining the upper and lower bounds yields
\begin{equation}
\sup_{\rho\in\rho_{\rm cl}(\bar{n})}
\QFI(\mathcal E_{\delta,M}(\rho))
=
\frac{4M^2e^{-\bar{n}}\bar{n}^M}{M!}[1+o(1)] .
\label{eq:fixedE-classical-asymptotic}
\end{equation}

\subsection{Squeezed states}
\label{apen:maryS}
The Fock-basis coefficients of a displaced squeezed vacuum state, with real $\alpha$ and $r$, are
\begin{equation}
c_{s,m}(\alpha,r) = \frac{1}{\sqrt{\cosh r}} \sum_{n=0}^{\infty} 
\frac{\sqrt{(2n)!}}{2^n n!} (-\tanh r)^n \,
\langle m | D(\alpha) | 2n \rangle,
\end{equation}
where the displacement matrix elements are
\begin{equation}
\langle m | D(\alpha) | 2n \rangle =
\begin{cases} 
\sqrt{\frac{(2n)!}{m!}} \, \alpha^{m-2n} \, e^{-\alpha^2/2} \, 
L_{2n}^{(m-2n)}(\alpha^2), & m \ge 2n,\\[1em]
\sqrt{\frac{m!}{(2n)!}} \, (-\alpha)^{2n-m} \, e^{-\alpha^2/2} \, 
L_m^{(2n-m)}(\alpha^2), & m < 2n,
\end{cases}
\end{equation}
and $L_n^{(k)}(x)$ are the generalized Laguerre polynomials. The squeezed state is then written as
\begin{equation}
    \ket{s}=\sum_{n=0}^\infty c_{s,n}\ket{n}\;.
\end{equation}
Therefore the QFI can be evaluated efficiently using Eq.~\eqref{eq:QFIgen}.

Going forward, we focus on squeezed vacuum state for proving a lower bound. This is because, as we show in Appendix~\ref{sec:additional-sv-optimality-notes}, squeezed vacuum (locally and asymptotically) outperforms any displaced squeezed state for this task. Although the squeezed vacuum state is defined in Eq.~\eqref{eq:svcoeffsmain}, we repeat this here for completeness. For a squeezed vacuum state,
\begin{equation}
\ket{r}=S(r)\ket{0}
=
\frac{1}{\sqrt{\cosh r}}
\sum_{k=0}^{\infty}
\frac{\sqrt{(2k)!}}{2^k k!}
(-\tanh r)^k
\ket{2k},
\end{equation}
so that
\begin{equation}
c_{2k}^{\mathrm{SV}}
=
\frac{1}{\sqrt{\cosh r}}
\frac{\sqrt{(2k)!}}{2^k k!}
(-\tanh r)^k,
\qquad
c_{2k+1}^{\mathrm{SV}}=0.
\end{equation}
Hence
\begin{equation}
p_{2k}^{\mathrm{SV}}
:=
\left|c_{2k}^{\mathrm{SV}}\right|^2
=
\frac{1}{\cosh r}
\frac{(2k)!}{2^{2k}(k!)^2}
\tanh^{2k}r,
\qquad
p_{2k+1}^{\mathrm{SV}}=0.
\end{equation}
Writing the mean photon number as $\bar n=\sinh^2 r$, we have
\begin{equation}
\cosh r=\sqrt{1+\bar n},
\qquad
\tanh^2 r=\frac{\bar n}{1+\bar n},
\end{equation}
and therefore
\begin{equation}
\label{eq:p2ksqV}
p_{2k}^{\mathrm{SV}}
=
\frac{1}{\sqrt{1+\bar n}}
\frac{\binom{2k}{k}}{2^{2k}}
\left(\frac{\bar n}{1+\bar n}\right)^k.
\end{equation}

Keeping only the pair $j,j+M$ in each term in the inner sum in the pairwise QFI formula (Eq.~\eqref{eq:pairwiseQFI}) gives the following lower bound on the QFI
\begin{equation}
\label{eq:SqVacQFILB2}
    \QFI ^{\mathrm{SV}} (\mathcal E_{\delta,M}(\ket{r}))
    \ge
    4M^2
    \sum_{\substack{0\le j<M\\ j\,\mathrm{even}}}
    \frac{p_j^{\mathrm{SV}}p_{j+M}^{\mathrm{SV}}}{\lambda_j}
\end{equation}

Define
\begin{equation}
\label{eq:qd}
    q:=\frac{\bar n}{1+\bar n},
    \qquad
    d_k:=\frac{\binom{2k}{k}}{4^k}.
\end{equation}
Then
\begin{equation}
    p_{2k}^{\mathrm{SV}}
    =\frac{d_kq^k}{\sqrt{1+\bar n}},
    \qquad p_{2k+1}^{\mathrm{SV}}=0.
\end{equation}
The sequence $d_k$ is decreasing, so for every even $0\le j<M$,
\begin{equation}
    \frac{p^{\mathrm{SV}}_{j+(p+1)M}}
    {p^{\mathrm{SV}}_{j+pM}}
    \le q^{M/2}.
\end{equation}
Consequently (using the fact that $p^{\mathrm{SV}}_{j+pM}\leq p^{\mathrm{SV}}_{j}(q^{M/2})^p$),
\begin{equation}
    \lambda_j
    =\sum_{p\ge0}p^{\mathrm{SV}}_{j+pM}
    \le p^{\mathrm{SV}}_{j}\sum_{p\ge0}(q^{M/2})^p=\frac{p_j^{\mathrm{SV}}}{1-q^{M/2}}.
\end{equation}
Eq.~\eqref{eq:SqVacQFILB2} therefore gives
\begin{equation}
\begin{split}
    \QFI ^{\mathrm{SV}}(\mathcal E_{\delta,M}(\ket{r}))
    &\ge
    4M^2
    \sum_{\substack{0\le j<M\\ j\,\mathrm{even}}}
    \frac{p_j^{\mathrm{SV}}p_{j+M}^{\mathrm{SV}}}{\lambda_j}\\
    &\ge
    4M^2
    (1-q^{M/2})\sum_{\substack{0\le j<M\\ j\,\mathrm{even}}}
    p_{j+M}^{\mathrm{SV}}\\
    &\ge
    4M^2(1-q^{M/2})
    \Pr_{\mathrm{SV}}(M\le N<2M)\;,
\end{split}
\label{eq:SV-block-tail-lower}
\end{equation}
where $\Pr_{\mathrm{SV}}(M\le N<2M)$ is the probability that a photon number measurement on the squeezed vacuum state returns between $M$ and $2M$ photons. 

\subsubsection{How many photons are required for constant QFI?}
\label{apen:sqVachowmanyphot}

We now evaluate how many photons are required such that the above lower bound on the squeezed vacuum QFI is at least constant. Now let $K=\lfloor\bar n/2\rfloor$.  We choose $\bar n=o(M)$, such that
\begin{equation}
\label{apen:sqVachowmanyphot1}
    \Pr_{\mathrm{SV}}(M\le N<2M)
    \ge
    \sum_{k=M/2}^{M/2+K}p_{2k}^{\mathrm{SV}}.
\end{equation}
Uniformly on this interval, Stirling's approximation gives $d_k\ge c_1M^{-1/2}$ for some constant $c_1>0$, while $(1+\bar n)^{-1/2}\ge c_2\bar n^{-1/2}$.  Moreover,
\begin{equation}
    q^k\ge q^{M/2+\bar n/2}
    =q^{M/2}q^{\bar n/2},
\end{equation}
and $q^{\bar n/2}$ is bounded below by a positive constant as $\bar n\to\infty$.  Since there are $\Theta(\bar n)$ terms,
\begin{equation}
    \Pr_{\mathrm{SV}}(M\le N<2M)
    \ge
    c\sqrt{\frac{\bar n}{M}}\,q^{M/2}
\label{eq:SV-block-prob-lower}
\end{equation}
for some $c>0$ and all sufficiently large $M$. Putting everything together we find
\begin{equation}
    \QFI ^{\mathrm{SV}}(\mathcal E_{\delta,M}(\ket{r}))
    \ge
    4cM^{3/2}\sqrt{\bar{n}}(1-q^{M/2})q^{M/2}
\end{equation}

For $\bar n=M/\log M$,
\begin{equation}
\label{apen:sqVachowmanyphot2}
\begin{split}
    q^{M/2}
    &=\left(1+\frac1{\bar n}\right)^{-M/2}\\
    &=\exp\!\left[-\frac{M}{2\bar n}
       +O\!\left(\frac{M}{\bar n^2}\right)\right]\\
    &=M^{-1/2}(1+o(1)),
\end{split}
\end{equation}
and $1-q^{M/2}\to1$.  Substituting into Eqs.~\eqref{eq:SV-block-tail-lower}--\eqref{eq:SV-block-prob-lower},
\begin{equation}
    \QFI ^{\mathrm{SV}}(\mathcal E_{\delta,M}(\ket{r}))
    \ge
    c\frac{M^{3/2}}{\sqrt{\log M}}
\label{eq:SV-strong-lower}
\end{equation}
for all sufficiently large $M$.

We now prove that the choice $\bar n=M/\log(M)$ is not arbitrary, by demonstrating that when $\bar n=o(M/\log(M))$ the QFI decays super-polynomially. Denote $\ket{r}_{\bar n}$ as the squeezed vacuum state with $\bar n$ photons. The QFI upper bound (Eq.~\eqref{eq:QFIlb}) gives
\begin{equation}
\begin{split}
    \QFI^{\rm SV}(\mathcal E_{\delta,M}(\ket{r}_{\bar{n}}))
&\le
4\sum_{n\ge M}n^2p_n^{\rm SV}\\&=\frac{16 q^{M/2}}{\sqrt{1+\bar n}}\sum_{k=M/2}^\infty k^2 q^{k-M/2}d_k
\\&\le\frac{16 q^{M/2}}{\sqrt{1+\bar n}}\sum_{k=M/2}^\infty k^2 q^{k-M/2}
\\&=\frac{16 q^{M/2}}{\sqrt{1+\bar n}}\sum_{t=0}^\infty \left(t+\frac{M}{2}\right)^2 q^{t}
\\&=\frac{16 q^{M/2}}{\sqrt{1+\bar n}}\left(\frac{M^2}{4}(1+\bar n)+M\bar n(1+\bar n)+\bar n(1+\bar n)(1+2 \bar n)\right)\;.
\end{split}
\end{equation}
As $\bar n=o(M/\log(M))$ we can approximate this as
\begin{equation}
   \QFI^{\rm SV}(\mathcal E_{\delta,M}(\ket{r}_{\bar{n}}))=O\left( q^{M/2}M^2\sqrt{\bar n}\right)\;.
\end{equation}
Finally, we have that
\begin{equation}
  q^{M/2}=\left(1-\frac{1}{1+\bar n}\right)^{M/2} \leq \exp\left(-\frac{M}{2(\bar n+1)}\right)
\end{equation}
As $\bar n=o(M/\log(M))$, $M/\bar{n}$ grows strictly faster than $\log(M)$ ensuring super-polynomial decay.

Note that we can arrive at a slightly weaker bound by noting
\begin{equation}
\begin{split}
    \QFI ^{\mathrm{SV}}(\mathcal E_{\delta,M}(\ket{r}_{\bar{n}}))
    &\ge
    4M^2(1-q^{M/2})
    \Pr_{\mathrm{SV}}(M\le N<2M)\\
    &\geq4M^2(1-q^{M/2}) p_{M}^{\mathrm{SV}} \\
    &= \frac{4M^2}{\sqrt{1+\bar{n}}}(1-q^{M/2})\frac{\binom{M}{M/2}}{2^M}\left(\frac{\bar{n}}{1+\bar{n}}\right)^{M/2}\;.
\end{split}
\label{eq:SV-block-tail-lower2}
\end{equation}

{\color{black}

\subsubsection{Constant mean photon number}
For a squeezed vacuum with mean photon number $\bar n$, its photon-number
probabilities are
\begin{equation}
    p^{\rm SV}_{2k}
    =
    \frac{1}{\sqrt{1+\bar{n}}}\,d_k q^k,
    \qquad
    p^{\rm SV}_{2k+1}=0.
\label{eq:fixedE-sv-probs}
\end{equation}
We first determine the asymptotic squeezed-vacuum QFI. From
Eq.~\eqref{eq:SV-block-tail-lower},
\begin{equation}
\QFI^{\rm SV}(\mathcal E_{\delta,M}(\ket{r}_{\bar{n}}))
\ge
4M^2(1-q^{M/2})
\Pr_{\bar{n}}(M\le N<2M).
\label{eq:fixedE-block-lower}
\end{equation}
The block probability is
\begin{equation}
\Pr_{\bar{n}}(M\le N<2M)
=
\frac{q^{M/2}}{\sqrt{1+\bar{n}}}
\sum_{t=0}^{M/2-1}d_{M/2+t}q^t.
\label{eq:fixedE-block-prob}
\end{equation}
Since $\bar{n}$ is finite, $0<q<1$. Furthermore, $
\frac{d_{M/2+t}}{d_{M/2}}\longrightarrow 1$ and for every fixed $t$, while monotonicity of $d_k$ gives
$0\le
\frac{d_{M/2+t}}{d_{M/2}}
\le1$. Dominated convergence therefore yields
\begin{equation}
\sum_{t=0}^{M/2-1}d_{M/2+t}q^t
=
\frac{d_{M/2}}{1-q}[1+o(1)]
=
(1+\bar{n})d_{M/2}[1+o(1)].
\label{eq:fixedE-geometric-tail}
\end{equation}
Consequently,
\begin{equation}
\Pr_{\bar{n}}(M\le N<2M)
=
\sqrt{1+\bar{n}}\,
d_{M/2}q^{M/2}[1+o(1)].
\label{eq:fixedE-block-asymptotic}
\end{equation}
Since $q^{M/2}=o(1)$, Eq.~\eqref{eq:fixedE-block-lower} gives
\begin{equation}
\QFI^{\rm SV}(\mathcal E_{\delta,M}(\ket{r}_{\bar{n}}))
\ge
4M^2\sqrt{1+\bar{n}}\,
d_{M/2}q^{M/2}[1+o(1)].
\label{eq:fixedE-sv-lower}
\end{equation}

This lower bound is asymptotically tight. Indeed, the general tail
upper bound (Eq.~\eqref{eq:QFIlb}) gives
\begin{equation}
\QFI^{\rm SV}(\mathcal E_{\delta,M}(\ket{r}_{\bar{n}}))
\le
4\sum_{n\ge M}n^2p_n^{\rm SV}.
\end{equation}
Writing $n=M+2t$ and using Eq.~\eqref{eq:fixedE-sv-probs},
\begin{equation}
\begin{split}
\QFI^{\rm SV}(\mathcal E_{\delta,M}(\ket{r}_{\bar{n}}))
&\le
\frac{4M^2d_{M/2}q^{M/2}}{\sqrt{1+\bar{n}}}
\sum_{t=0}^{\infty}
\frac{d_{M/2+t}}{d_{M/2}}
q^t
\left(1+\frac{2t}{M}\right)^2  \\
&=
4M^2\sqrt{1+\bar{n}}\,
d_{M/2}q^{M/2}[1+o(1)],
\end{split}
\label{eq:fixedE-sv-upper}
\end{equation}
where the second line again follows by dominated convergence. Hence
\begin{equation}
\QFI^{\rm SV}(\mathcal E_{\delta,M}(\ket{r}_{\bar{n}}))
=
4M^2\sqrt{1+\bar{n}}\,
d_{M/2}
\left(\frac{\bar{n}}{1+\bar{n}}\right)^{M/2}
[1+o(1)] .
\label{eq:fixedE-sv-asymptotic}
\end{equation}
}

\subsection{Comparison of squeezed light with coherent light}
\label{apen:constantnbar}
\subsubsection{Proof of Theorem~\ref{th1}: Sub-linearly growing mean photon number} 
Theorem~\ref{th1} is proven in Eqs.~\eqref{eq:QFIcohexpdecay} and \eqref{eq:SV-strong-lower}.

\subsubsection{Proof of Theorem~\ref{th2}: Constant mean photon number} 
Combining Eqs.~\eqref{eq:fixedE-classical-asymptotic} and \eqref{eq:fixedE-sv-asymptotic} we arrive at
\begin{equation}
\frac{\QFI^{\rm SV}(\ket{r}_{\bar{n}})}
{\sup_{\rho\in\rho_{\rm cl}(\bar{n})}
 \QFI(\rho)}
=
\Omega_{\bar{n}}\!\left[
\left(
    \frac{M}{e\sqrt{\bar{n}(1+\bar{n})}}
\right)^M
\right].
\end{equation}
For $\bar{n}=1$, this reduces to
\begin{equation}
\frac{\QFI^{\rm SV}}
{\sup_{\rho\in\rho_{\rm cl}(1)}
 \QFI(\rho)}
=
\Omega\!\left[
\left(\frac{M}{\sqrt{2}e}\right)^M
\right],
\end{equation}
recovering Theorem~\ref{th2} in its previous form.

\section{Finite-sample separation in phase estimation}
\label{apen:samplecomplexity}

The QFI separation established in the previous appendix can be converted into a nonasymptotic separation in the number of independent channel uses. Because the twirled output is invariant under $\delta\mapsto\delta+2\pi/M$, a fixed absolute target error would eventually exceed the entire identifiable phase interval and would therefore define a trivial estimation problem. We instead require an error of order $1/M$ on a local interval of the same order. The classical lower bound below allows correlated classical probes from the family of Eq.~\eqref{eq:classicaldef} with collective output measurements, and also applies to adaptive classical measure-and-prepare protocols. The quantum upper bound does not require either asymptotic efficiency or an unbiasedness assumption.

The channel is that described in Eq.~\eqref{eq:sample-channel}. Fix constants
\begin{equation}
    0<a<\frac\pi2,
    \qquad
    0<b<\frac a2,
    \qquad
    0<\zeta<\frac12,
\label{eq:sample-constants}
\end{equation}
and define $\Theta_M=[-a/M,a/M]$. A protocol using $\nu$ channel calls succeeds if its estimator obeys
\begin{equation}
    \inf_{\delta\in\Theta_M}
    \Pr_\delta\!\left(
        |\widehat\delta-\delta|\le\frac bM
    \right)
    \ge 1-\zeta.
\label{eq:sample-success}
\end{equation}
For $\bar{n}>0$, the classical optical family of states is defined in Eq.~\eqref{eq:classicaldef}. More generally, given $\nu$ access to a channel we consider classically correlated states
\begin{equation}
    \rho_{\mathrm{cl}}^{(\nu)}
    =\int\bigotimes_{k=1}^{\nu}
    \Lambda_k^{(x)}(\ket{\alpha})\,
   \mu(dx) \;,
\label{eq:sample-correlated-classical}
\end{equation}
where each $\Lambda_k^{(x)}$ is parameter-independent and phase-covariant such that $\Lambda_k^{(x)}(\ket{\alpha})$ belongs to the classical family in Eq.~\eqref{eq:classicaldef}, and $x$ is a classical variable denoting classical correlations between rounds. Note that the phase twirl is independent across channel uses. Denote the minimum sample complexity over these probes and arbitrary collective measurements by $\nu_{\mathrm{cl}}^\star=\nu_{\mathrm{cl}}^\star(M,\bar{n};a,b,\zeta)$. The same notation also includes adaptive classical measure-and-prepare protocols with the per-round constraint~\eqref{eq:classicaldef}. Let $\nu_{\mathrm{SV}}^\star=\nu_{\mathrm{SV}}^\star(M,\bar{n};a,b,\zeta)$ denote the corresponding sample complexity when each input is a squeezed vacuum with mean photon number $\bar{n}$.

\begin{theorem}
\label{thm:sample-finite}
Let $M\ge2$ be even and suppose $0<\bar{n}<M^2/(M+1)$. Define
\begin{equation}
    \kappa_{M,\bar{n}}:=\frac{\bar{n}(M+1)}{M^2},
    \qquad
    j_{M,\bar{n}}
    :=\frac{4M^2e^{-\bar{n}}(\bar{n})^M}
    {M!\left(1-\kappa_{M,\bar{n}}\right)}.
\label{eq:sample-kappa-j}
\end{equation}
For the squeezed vacuum, recall the definition of $p_{2k}^{\text{SV}}(\bar{n})$ from Eq.~\eqref{eq:p2ksqV} with $p_{2k+1}^{\text{SV}}(\bar{n})=0$, and define the sector-pair visibility
\begin{equation}
    B_{M,\bar{n}}
    :=4\sum_{\substack{0\le j<M\\j\ \mathrm{even}}}
    \frac{p_j^{\text{SV}}(\bar{n})p^{\text{SV}}_{j+M}(\bar{n})}{p^{\text{SV}}_j(\bar{n})+p^{\text{SV}}_{j+M}(\bar{n})}.
\label{eq:sample-sector-visibility}
\end{equation}
Then
\begin{equation}
    \nu_{\mathrm{cl}}^\star
    \ge
    \frac{(1-2\zeta)^2M^2}{4b^2j_{M,\bar{n}}}
    =\frac{(1-2\zeta)^2}{16b^2}
    \left(1-\kappa_{M,\bar{n}}\right)e^{\bar{n}}\frac{M!}{\bar{n}^M}
\label{eq:sample-classical-lower}
\end{equation}
and
\begin{equation} \nu_{\mathrm{SV}}^\star
    \le
    \left\lceil
        \frac{2+\frac43b\cos a}{b^2\cos^2 a}
        \frac1{B_{M,\bar{n}}}
        \log\!\left(\frac2\zeta\right)
    \right\rceil\;.
\label{eq:sample-sv-upper}
\end{equation}
\end{theorem}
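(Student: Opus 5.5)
The proof splits into two independent parts: a classical lower bound obtained by a two-point (Le Cam / Helstrom) argument controlled by the QFI, and a squeezed-vacuum upper bound from an explicit estimator plus a concentration inequality.

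\textbf{Classical lower bound.} Because $2b<a$, we can choose two hypotheses $\delta_\pm=\pm b/M$ in $\Theta_M$ whose $b/M$-balls are essentially disjoint. Any $(\epsilon,\zeta)$-accurate protocol yields a test between $\delta_+$ and $\delta_-$ with error at most $\zeta$. Hence the trace distance between the two $\nu$-round outputs must be at least $1-2\zeta$. This holds even with correlated probes, collective measurements and adaptive measure-and-prepare strategies, since all of these are covered by the final joint output state.

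Next I convert trace distance into fidelity via Fuchs--van de Graaf, $1-F\ge \tfrac14(1-2\zeta)^2$. I then bound the Bures distance by integrating the QFI along the path $\delta\in[\delta_-,\delta_+]$:
\[
d_B^2\le \tfrac14\Big(\int\sqrt{\QFI}\,d\delta\Big)^2\le \tfrac14\,(2b/M)^2\,\nu\, j_{M,\bar n}.
\]
This uses three facts.
\begin{itemize}
\item QFI additivity over the $\nu$ independent twirls.
\item Convexity of the QFI under the classical mixture $\mu(dx)$, together with the chain rule for adaptivity, so that per-round information is at most $\sup_{\rho_{\rm cl}}\QFI$.
\item Monotonicity under phase-covariant $\Lambda$, which reduces $\sup_{\rho_{\rm cl}}\QFI$ to the coherent-state bound $j_{M,\bar n}$ from Eq.~\eqref{eq:QFIcoherentUB}. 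This bound holds uniformly in $\delta$ by covariance.
\end{itemize}
Rearranging gives $\nu\ge (1-2\zeta)^2M^2/(4b^2 j_{M,\bar n})$, matching Eq.~\eqref{eq:sample-classical-lower} up to tracking constants through $1-F\le d_B^2$.

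The delicate step is handling adaptive protocols. There the joint state is not a product, so I would use the sequential (chain-rule) bound on the QFI of the full transcript rather than tensor additivity.

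\textbf{Squeezed-vacuum upper bound.} I would use a single fixed nonadaptive measurement per round. In each even sector $j$, project onto the two-dimensional pair $\{\ket j,\ket{j+M}\}$. The twirled state restricted to that pair is a qubit with relative phase $M\delta$ and coherence $\sqrt{p_jp_{j+M}}$. Measure it in the basis $(\ket j\pm i\ket{j+M})/\sqrt2$, with a suitable phase offset so that $\delta=0$ sits at the midpoint. Outcomes outside such pairs are discarded, or assigned value $0$.

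Define the statistic $Y=\pm1$, with sign weighted by $1/w_j$ where $w_j=p_j+p_{j+M}$, so that $\mathbb E[Y]=\tfrac12 B_{M,\bar n}\sin(M\delta)$. This is a sum of terms $2\sqrt{p_jp_{j+M}}\sin(M\delta)$ reweighted appropriately. I would choose the weighting precisely so that the mean slope equals $B$ and the variance is at most $B$; this is Bernstein-friendly. The estimator is $\hat\delta=M^{-1}\arcsin(\bar Y/B)$.

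On $\Theta_M$ we have $|M\delta|\le a<\pi/2$, so the derivative of $\sin$ is at least $\cos a$. An error of $b/M$ in $\delta$ therefore corresponds to a deviation of at least about $b\cos a\, B$ in $\bar Y$. Bernstein's inequality, with variance $\lesssim B$ and range $O(1)$, gives failure probability $2\exp[-\nu b^2\cos^2a\,B/(2+\tfrac43 b\cos a)]$. Setting this to $\zeta$ yields Eq.~\eqref{eq:sample-sv-upper}.

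\textbf{Main obstacle.} The hardest part is designing the per-sector statistic so that mean and variance both scale exactly with $B_{M,\bar n}$, and handling the $\arcsin$ clipping when $|\bar Y|>B$. I would first try using $Y\in\{-1,0,1\}$ with outcome probabilities $\tfrac12 w_j(1\pm v_j\sin M\delta)$, where $v_j=2\sqrt{p_jp_{j+M}}/w_j$. I would check that $\sum_j w_jv_j^2\ge B$, possibly after reweighting each sector by $v_j$. Once this statistic is in place, the remaining Bernstein argument is mechanical.
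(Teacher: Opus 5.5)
\textbf{Gap in the classical lower bound.} The two-point step fails as written. With $\delta_\pm=\pm b/M$, the success regions $|\hat\delta-\delta_\pm|\le b/M$ are $[0,2b/M]$ and $[-2b/M,0]$, and these share the point $0$. The data-independent estimator $\hat\delta\equiv 0$ therefore meets the accuracy requirement at both hypotheses. So accuracy at $\delta_\pm$ gives no lower bound on $\tfrac12\|\rho_+^{(\nu)}-\rho_-^{(\nu)}\|_1$, and the Helstrom step $D\ge 1-2\zeta$ does not follow. ``Essentially disjoint'' does not rescue this, because the common boundary point is exactly where a trivial estimator can sit.

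The fix is $\delta_\pm=\pm 2b/M$, which lie in $\Theta_M$ precisely because $b<a/2$; this is the hypothesis you cite but never actually use. Success then forces $\hat\delta\ge b/M$, respectively $\hat\delta\le -b/M$. The sign test errs with probability at most $\zeta$ under each hypothesis, and Helstrom gives $D\ge 1-2\zeta$.

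This doubles the path length to $4b/M$. With your lossy conversion $1-F\ge\tfrac14(1-2\zeta)^2$ you would then only get $\nu\ge (1-2\zeta)^2M^2/(16b^2 j_{M,\bar n})$. Your constant currently agrees with Eq.~\eqref{eq:sample-classical-lower} only because the halved separation and the spurious factor $\tfrac14$ cancel. The stated constant needs the sharp chain, with $F$ the squared fidelity:
$D\le\sqrt{1-F}\le d_B\le \tfrac12\int_{-2b/M}^{2b/M}\sqrt{\QFI(\rho_\delta^{(\nu)})}\,d\delta\le \tfrac{2b}{M}\sqrt{\nu\, j_{M,\bar n}}$.
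The paper phrases this via the Bures angle. The other ingredients you list are exactly what the paper uses: convexity over $\mu(dx)$, monotonicity under phase-covariant $\Lambda$, additivity, $\delta$-independence of the QFI, and the chain-rule (martingale-score) bound on the transcript Fisher information for adaptive protocols.

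\textbf{Squeezed-vacuum upper bound.} Your receiver is the paper's, but the statistic must be pinned down before the Bernstein step. As written, ``$\pm1$ weighted by $1/w_j$'' has mean $\sum_j v_j\sin(M\delta)$, not $\tfrac12 B_{M,\bar n}\sin(M\delta)$. The unweighted $Y\in\{-1,0,1\}$ is governed by $(\sum_j 2\sqrt{p_jp_{j+M}})^2/\sum_j w_j$. By Cauchy--Schwarz this is at most $B_{M,\bar n}$, so in general it does not give the stated bound.

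The statistic that works is the one you mention last: $Z=\pm v_j$ on outcome $(j,\pm)$ and $Z=0$ otherwise. Then $\mathbb E_\delta Z=B_{M,\bar n}\sin(M\delta)$ and $\mathbb E_\delta Z^2=B_{M,\bar n}$ hold as identities, so there is no inequality left to check.

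You also need to clip $\overline Z_\nu/B_{M,\bar n}$ to $[-\sin a,\sin a]$, not merely to $[-1,1]$. Only then does the arcsine land in $[-a,a]$, where $\sin$ has slope at least $\cos a$. Near $M\delta=a$ a deviation of $b\cos a$ in the sine can otherwise produce a phase error exceeding $b$. With this clipping, nonexpansiveness of the projection and the $(\cos a)^{-1}$-Lipschitz property of $\arcsin$ give
$|\hat\delta_\nu-\delta|\le|\overline Z_\nu-\mathbb E_\delta Z|/(M B_{M,\bar n}\cos a)$.
Bernstein with $|Z-\mathbb E_\delta Z|\le 2$, variance at most $B_{M,\bar n}$, and $t=bB_{M,\bar n}\cos a$ then yields Eq.~\eqref{eq:sample-sv-upper} with exactly the stated constant.
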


\begin{proof}
We first establish the classical lower bound. We know from Eq.~\eqref{eq:QFIcoherentUB} that for a coherent state with $x=|\alpha|^2$, whenever $0\le x\le \bar{n}<M^2/(M+1)$,
\begin{equation}
    \QFI \!\left(
        \mathcal E_{\delta,M}(\ket\alpha\!\bra\alpha)
    \right)
    \le
    \frac{4M^2e^{-x}x^M}
    {M!\left[1-x(M+1)/M^2\right]}
    \le j_{M,\bar{n}}.
\label{eq:sample-coherent-qfi-upper}
\end{equation}
The final inequality follows because $e^{-x}x^M$ and $[1-x(M+1)/M^2]^{-1}$ are both increasing on $[0,\bar{n}]$. Convexity, monotonicity and additivity of the QFI therefore imply, for every probe~\eqref{eq:sample-correlated-classical},
\begin{equation}
\begin{split}
    \QFI \!\left[
        \mathcal E_{\delta,M}^{\otimes\nu}
        (\rho_{\mathrm{cl}}^{(\nu)})
    \right]
    &\le\QFI[\mathcal E_{\delta,M}^{\otimes\nu}(\ket{\alpha}^{\otimes\nu})]
   \\ &\le\nu j_{M,\bar{n}}.
\end{split}
\label{eq:sample-multimode-qfi}
\end{equation}
  
Choose the two parameter values $\delta_\pm=\pm2b/M\in\Theta_M$. Any estimator satisfying Eq.~\eqref{eq:sample-success} distinguishes these hypotheses by the sign of $\widehat\delta$, with error probability at most $\zeta$. Writing $\rho_\pm^{(\nu)}$ for the respective output states, Helstrom's bound gives
\begin{equation}
    D\!\left(\rho_-^{(\nu)},\rho_+^{(\nu)}\right)
    :=\frac12\left\|\rho_-^{(\nu)}-\rho_+^{(\nu)}\right\|_1
    \ge1-2\zeta.
\label{eq:sample-trace-distance-lower}
\end{equation}
On the other hand, trace distance is bounded by the Bures angle, which is in turn bounded by the QFI length of the parameter path. Equation~\eqref{eq:sample-multimode-qfi} therefore gives
\begin{equation}
\begin{split}
    D\!\left(\rho_-^{(\nu)},\rho_+^{(\nu)}\right)
    &\le\frac12\int_{-2b/M}^{2b/M}
    \sqrt{
        \QFI 
        \!\left(\rho_\delta^{(\nu)}\right)
    }\,d\delta\\
    &\le\frac{2b}{M}\sqrt{\nu j_{M,\bar{n}}}.
\end{split}
\label{eq:sample-trace-distance-upper}
\end{equation}
Combining Eqs.~\eqref{eq:sample-trace-distance-lower} and~\eqref{eq:sample-trace-distance-upper} proves Eq.~\eqref{eq:sample-classical-lower}.

The same conclusion holds for adaptive classical measure-and-prepare strategies. If $H_{k-1}$ denotes the classical history before round $k$, the conditional Fisher information satisfies $\mathcal J_k(\delta\mid H_{k-1})\le j_{M,\bar{n}}$. The conditional scores have zero conditional mean, so their cross terms vanish and the complete measurement record has Fisher information
\begin{equation}
    \mathcal J_{\mathrm{tr}}(\delta)
    =\sum_{k=1}^{\nu}
    \mathbb E_\delta\!\left[
        \mathcal J_k(\delta\mid H_{k-1})
    \right]
    \le\nu j_{M,\bar{n}}.
\label{eq:sample-adaptive-fisher}
\end{equation}
For the corresponding distributions,
\begin{equation}
\begin{split}
    \operatorname{TV}(P_{\delta_-},P_{\delta_+})
    &\le\frac12\int_{\delta_-}^{\delta_+}
    \int|\partial_\delta p_\delta(y)|\,dy\,d\delta\\
    &\le\frac12\int_{\delta_-}^{\delta_+}
    \sqrt{\mathcal J_{\mathrm{tr}}(\delta)}\,d\delta
    \le\frac{2b}{M}\sqrt{\nu j_{M,\bar{n}}}.
\end{split}
\label{eq:sample-adaptive-tv}
\end{equation}
Again requiring $\operatorname{TV}(P_{\delta_-},P_{\delta_+})\ge1-2\zeta$, establishes the same lower bound. 

For the constructive quantum upper bound, let $c_j=\langle j|S(r)\ket0$ with $\sinh^2r=\bar{n}$. For each even $0\le j<M$, choose $\chi_j$ so that $c_{j+M}/c_j=e^{\mathrm i\chi_j}|c_{j+M}/c_j|$, and define
\begin{equation}
    \ket{j,\pm}
    :=\frac{\ket j\pm\mathrm i e^{\mathrm i\chi_j}\ket{j+M}}{\sqrt2},
    \qquad
    \Pi_{j,\pm}:=\ket{j,\pm}\!\bra{j,\pm},
    \qquad
    \Pi_\perp:=I-
    \sum_{\substack{0\le j<M\\j\ \mathrm{even}}}
    (\Pi_{j,+}+\Pi_{j,-}).
\label{eq:sample-sector-povm}
\end{equation}
The projectors are mutually orthogonal. Noting that the density matrix in the $\{\ket{j},\ket{j+M}\}$ subspace is $\rho_{j,j+M}=p_j\ket{j}\bra{j}+p_{j+M}\ket{j+M}\bra{j+M}+c_jc_{j+M}^*e^{-iM\delta}\ket{j}\bra{j+M}+c_{j+M}c_j^*e^{iM\delta}\ket{j+M}\bra{j}$, the probabilities are
\begin{equation}
    P_\delta(j,\pm)
    =\frac{s_j}{2}
    \pm\sqrt{p_jp_{j+M}}\sin(M\delta),
    \qquad
    s_j:=p_j+p_{j+M},
\label{eq:sample-sector-probabilities}
\end{equation}
where $p_j=p_j(\bar{n})$, and $P_\delta(\perp)$ is independent of $\delta$. Introduce the bounded weights
\begin{equation}
    g_j:=\frac{2\sqrt{p_jp_{j+M}}}{p_j+p_{j+M}}
    \in[0,1]
\label{eq:sample-sector-weight}
\end{equation}
and the single-shot statistic
\begin{equation}
    Z:=
    \begin{cases}
        g_j,&\text{for outcome }(j,+),\\
        -g_j,&\text{for outcome }(j,-),\\
        0,&\text{for outcome }\perp.
    \end{cases}
\label{eq:sample-sector-statistic}
\end{equation}
Equations~\eqref{eq:sample-sector-visibility}--\eqref{eq:sample-sector-weight} give the exact moments
\begin{equation}
    \mathbb E_\delta[Z]
    =B_{M,\bar{n}}\sin(M\delta),
    \qquad
    \mathbb E_\delta[Z^2]
    =B_{M,\bar{n}},
    \qquad
    \operatorname{Var}_\delta(Z)\le B_{M,\bar{n}}.
\label{eq:sample-sector-moments}
\end{equation}
For $\overline Z_\nu=\nu^{-1}\sum_{k=1}^{\nu}Z_k$, consider the clipped estimator
\begin{equation}
    \widehat\delta_\nu
    :=\frac1M\arcsin\!\left[
        \operatorname{clip}\!\left(
            \frac{\overline Z_\nu}{B_{M,\bar{n}}},
            -\sin a,\sin a
        \right)
    \right],
\label{eq:sample-explicit-estimator}
\end{equation}
where $\operatorname{clip}(x,l,u):=\min\{u,\max\{l,x\}\}$. Projection onto an interval is nonexpansive, meaning the distance between outputs is smaller than the distance between their inputs, i.e.
\begin{equation}
    \left|\operatorname{clip}\!\left(
            \frac{\overline Z_\nu}{B_{M,\bar{n}}},-\sin a,\sin a
        \right)-\sin(M\delta)\right|\leq\left| \frac{\overline Z_\nu}{B_{M,\bar{n}}}
        -\sin(M\delta)\right|
\end{equation}
 The function $\arcsin$ is $1/\cos a$-Lipschitz on $[-\sin a,\sin a]$, i.e.
 \begin{equation}
     |\arcsin(x)-\arcsin(y)|\leq\frac{1}{\cos(a)}|x-y|\;.
 \end{equation}
Hence, uniformly over $\delta\in\Theta_M$,
\begin{equation}
\begin{split}
    |\widehat\delta_\nu-\delta|&\le\left|\frac{1}{M}\arcsin\!\left[
        \operatorname{clip}\!\left(
            \frac{\overline Z_\nu}{B_{M,\bar{n}}},
            -\sin a,\sin a
        \right)
    \right]-\frac{1}{M}\arcsin\left( \frac{\mathbb{E}_\delta[Z]}
{B_{M,\bar{n}}}\right)\right|\\&
\le\frac{1}{M\cos(a)}\left|\left[
        \operatorname{clip}\!\left(
            \frac{\overline Z_\nu}{B_{M,\bar{n}}},
            -\sin a,\sin a
        \right)
    \right]-\left( \frac{\mathbb{E}_\delta[Z]}
{B_{M,\bar{n}}}\right)\right|\\&
\le\frac{1}{M\cos(a)}\left|\left(
            \frac{\overline Z_\nu}{B_{M,\bar{n}}},\right)-\left( \frac{\mathbb{E}_\delta[Z]}
{B_{M,\bar{n}}}\right)\right|\\&
    \le
    \frac{ |\overline Z_\nu-\mathbb E_\delta[Z]| }{MB_{M,\bar{n}}\cos a}.
\end{split}
\label{eq:sample-estimator-error}
\end{equation}
Because $|\overline Z_\nu-\mathbb E_\delta[Z]|\le2$, Bernstein's inequality yields 
\begin{equation}
    \Pr_\delta\!\left(
        |\overline Z_\nu-\mathbb E_\delta[Z]|\ge t
    \right)
    \le2\exp\!\left[
        -\frac{\nu t^2}{2B_{M,\bar{n}}+\frac43t}
    \right].
\label{eq:sample-bernstein}
\end{equation}
Setting $t=bB_{M,\bar{n}}\cos a$ and using Eq.~\eqref{eq:sample-estimator-error} gives
\begin{equation}
    \sup_{\delta\in\Theta_M}
    \Pr_\delta\!\left(
        |\widehat\delta_\nu-\delta|>\frac bM
    \right)
    \le
    2\exp\!\left[
        -\frac{
            \nu b^2\cos^2 a\,B_{M,\bar{n}}
        }{
            2+\frac43b\cos a
        }
    \right].
\label{eq:sample-uniform-confidence}
\end{equation}
Requiring the right-hand side to be at most $\zeta$ proves Eq.~\eqref{eq:sample-sv-upper}.
\end{proof}

\begin{corollary}
\label{cor:sample-poly-superexp}
Let $M\to\infty$ through even integers and set $\bar{n}=M/\log M$. For fixed $a,b,\zeta$ satisfying Eq.~\eqref{eq:sample-constants},
\begin{equation}
\nu_{\mathrm{SV}}^\star
    =O_{a,b}\!\left(
        \sqrt{M\log M}\,
        \log\!\frac2\zeta
    \right)
\label{eq:sample-poly-quantum}
\end{equation}
whereas
\begin{equation}
\begin{split}
    \nu_{\mathrm{cl}}^\star
    &\ge
    \frac{(1-2\zeta)^2}{16b^2}
    \left(1-\frac{M+1}{M\log M}\right)
    e^{M/\log M}M!
    \left(\frac{\log M}{M}\right)^M,\\
    \log\nu_{\mathrm{cl}}^\star
    &\ge
    M\log\log M-M+\frac{M}{\log M}
    +\frac12\log M+O_{b,\zeta}(1).
\end{split}
\label{eq:sample-superexp-classical}
\end{equation}
Consequently,
\begin{equation}
\frac{\nu_{\mathrm{cl}}^\star}
    {\nu_{\mathrm{SV}}^\star}
    \ge
    \exp\!\left[M\log\log M-O(M)\right]
\label{eq:sample-superexp-ratio}
\end{equation}
for every fixed confidence level $1-\zeta$.
\end{corollary}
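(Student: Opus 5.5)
Both bounds follow from Theorem~\ref{thm:sample-finite} with $\bar n=M/\log M$ substituted; the only real work is a lower bound on the sector-pair visibility $B_{M,\bar n}$.

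\textbf{Quantum upper bound.} Equation~\eqref{eq:sample-sv-upper} gives $\nu_{\mathrm{SV}}^\star=O_{a,b}(B_{M,\bar n}^{-1}\log(2/\zeta))$, so it suffices to show $B_{M,\bar n}\ge c/\sqrt{M\log M}$. For even $j$ the ratio $p_{j+M}/p_j=(d_{(j+M)/2}/d_{j/2})q^{M/2}\le q^{M/2}$, so
\begin{equation}
\frac{p_jp_{j+M}}{p_j+p_{j+M}}\ge \frac{p_{j+M}}{1+q^{M/2}}\ge \tfrac12 p_{j+M}.
\end{equation}
Summing over even $0\le j<M$ gives $B_{M,\bar n}\ge 2\Pr_{\mathrm{SV}}(M\le N<2M)$. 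The block probability was already bounded in Eq.~\eqref{eq:SV-block-prob-lower} by $c\sqrt{\bar n/M}\,q^{M/2}$. Using Eq.~\eqref{apen:sqVachowmanyphot2}, $q^{M/2}=M^{-1/2}(1+o(1))$, so
\begin{equation}
B_{M,\bar n}\ge c'\sqrt{\frac{M/\log M}{M}}\;M^{-1/2}=\frac{c'}{\sqrt{M\log M}},
\end{equation}
which yields Eq.~\eqref{eq:sample-poly-quantum}.

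\textbf{Main obstacle.} The one step to check is that Eq.~\eqref{eq:SV-block-prob-lower} holds with $\bar n=M/\log M$. Here $K=\bar n/2=o(M)$ terms are summed, each with $d_k\gtrsim M^{-1/2}$ and $q^{\bar n/2}\to e^{-1/2}$. This is exactly the regime treated in Appendix~\ref{apen:sqVachowmanyphot}, so it can be cited directly.

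\textbf{Classical lower bound.} Substitute $\bar n=M/\log M$ into Eq.~\eqref{eq:sample-classical-lower}. The hypothesis $\bar n<M^2/(M+1)$ clearly holds, and $\kappa_{M,\bar n}=(M+1)/(M\log M)$, which gives the first line of Eq.~\eqref{eq:sample-superexp-classical} verbatim. Taking logarithms and applying Stirling, $\log M!=M\log M-M+\tfrac12\log(2\pi M)+o(1)$, the $M\log M$ terms cancel against $-M\log M$ from $(\log M/M)^M$, leaving
\begin{equation}
M\log\log M-M+\frac{M}{\log M}+\tfrac12\log M+O_{b,\zeta}(1),
\end{equation}
where $\log(1-\kappa)\to0$ is absorbed into the $O(1)$.

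\textbf{Ratio.} Divide the classical bound by $O(\sqrt{M\log M}\log(2/\zeta))$. The logarithm of the denominator is $O(\log M)$, which together with $-M$ and $+M/\log M$ is absorbed into $O(M)$. This gives Eq.~\eqref{eq:sample-superexp-ratio}.
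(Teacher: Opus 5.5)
Your proposal is correct and follows essentially the same route as the paper: you use $p_{j+M}/p_j\le q^{M/2}$ to reduce $B_{M,\bar n}$ to the block probability $\Pr_{\mathrm{SV}}(M\le N<2M)$, invoke the estimate of Appendix~\ref{apen:sqVachowmanyphot} with $q^{M/2}=M^{-1/2}(1+o(1))$, and substitute $\bar n=M/\log M$ and Stirling's formula into Theorem~\ref{thm:sample-finite} for the classical side and the ratio. The only difference is cosmetic. You use the cruder bound $B_{M,\bar n}\ge 2\Pr_{\mathrm{SV}}(M\le N<2M)$, whereas the paper writes $B_{M,\bar n}=4\Pr_{\mathrm{SV}}(M\le N<2M)[1+O(M^{-1/2})]$, and this changes nothing in the asymptotic bounds.
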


\begin{proof}
Write $\bar{n}=M/\log M$, and recall from Eq.~\eqref{eq:qd} that $q=\bar{n}/(1+\bar{n})$ and $d_k:=\binom{2k}{k}/4^k$, so that $p^\text{SV}_{2k}=p_0^\text{SV}d_kq^k$. Since $d_k$ decreases with $k$,
\begin{equation}
    0\le\frac{p^\text{SV}_{j+M}}{p^\text{SV}_j}
    \le q^{M/2}
    =M^{-1/2}[1+o(1)]
\label{eq:sample-sector-ratio-asymptotic}
\end{equation}
uniformly over even $0\le j<M$. Therefore
\begin{equation}
    B_{M,\bar{n}}
    =4\Pr_{\bar{n}}(M\le N<2M)
    \left[1+O(M^{-1/2})\right],
\label{eq:sample-visibility-tail}
\end{equation}
where $N$ is the squeezed-vacuum photon number. Using Appendix~\ref{apen:sqVachowmanyphot}, specifically Eqs.~\eqref{apen:sqVachowmanyphot1} to \eqref{apen:sqVachowmanyphot2}, we arrive at
\begin{equation}
    B_{M,\bar{n}}
    \geq c
    \frac1{\sqrt{M\log M}}[1+o(1)].
\label{eq:sample-visibility-asymptotic}
\end{equation}
for some constant $c$. Substitution into Eq.~\eqref{eq:sample-sv-upper} proves Eq.~\eqref{eq:sample-poly-quantum}. 

The classical lower bound (Eq.~\eqref{eq:sample-superexp-classical}) follows directly from Eq.~\eqref{eq:sample-classical-lower}, since $\kappa_{M,\bar{n}}=(M+1)/(M\log M)$. Finally, Stirling's formula gives
\begin{equation}
    M!\left(\frac{\log M}{M}\right)^M
    =\sqrt{2\pi M}\,
    \exp\!\left[M\log\log M-M\right]
    [1+o(1)],
\label{eq:sample-stirling}
\end{equation}
which establishes the second line of Eq.~\eqref{eq:sample-superexp-classical} and the ratio~\eqref{eq:sample-superexp-ratio}.
\end{proof}

\begin{corollary}
\label{cor:sample-fixed-energy}
Fix $\bar{n}>0$ independently of $M$. For fixed $a,b,\zeta$ satisfying Eq.~\eqref{eq:sample-constants},
\begin{equation}
    \nu_{\mathrm{SV}}^\star(M,\bar{n};a,b,\zeta)
    =O_{a,b,\bar{n},\zeta}\!\left[
        \sqrt M
        \left(\frac{1+\bar{n}}{\bar{n}}\right)^{M/2}
    \right],
    \qquad
    \nu_{\mathrm{cl}}^\star
    =\Omega_{b,\bar{n},\zeta}\!\left[
        \sqrt M
        \left(\frac{M}{e\bar{n}}\right)^M
    \right].
\label{eq:sample-fixed-complexities}
\end{equation}
In particular,
\begin{equation}
    \frac{\nu_{\mathrm{cl}}^\star}{\nu_{\mathrm{SV}}^\star}
    =\Omega_{a,b,\bar{n},\zeta}\!\left[
        \left(
            \frac{M}{e\sqrt{\bar{n}(1+\bar{n})}}
        \right)^M
    \right],
\label{eq:sample-fixed-ratio}
\end{equation}
which reduces to $\Omega[(M/(\sqrt2e))^M]$ for $\bar{n}=1$.
\end{corollary}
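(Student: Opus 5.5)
We will derive both halves of the corollary directly from Theorem~\ref{thm:sample-finite} by inserting fixed-$\bar n$ asymptotics. The two halves are handled separately.

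For the classical side, we substitute fixed $\bar n$ into Eq.~\eqref{eq:sample-classical-lower}. Since $\kappa_{M,\bar n}=\bar n(M+1)/M^2\to0$, the factor $(1-\kappa_{M,\bar n})e^{\bar n}$ is bounded below by a positive constant depending only on $\bar n$. Stirling's formula $M!=\sqrt{2\pi M}(M/e)^M[1+o(1)]$ then gives
\[
\nu_{\mathrm{cl}}^\star\ge \frac{(1-2\zeta)^2}{16b^2}\,c_{\bar n}\sqrt{2\pi M}\Big(\frac{M}{e\bar n}\Big)^M[1+o(1)],
\]
which is the claimed $\Omega_{b,\bar n,\zeta}$ bound.

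For the squeezed-vacuum side, Eq.~\eqref{eq:sample-sv-upper} reduces the problem to a lower bound on the sector-pair visibility $B_{M,\bar n}$ of Eq.~\eqref{eq:sample-sector-visibility}.
\begin{itemize}
\item Every summand in $B_{M,\bar n}$ is nonnegative. Keeping only the $j=0$ term and using $p_M\le p_0$ (since $d_k$ is decreasing and $q<1$) gives $B_{M,\bar n}\ge 4p_0p_M/(p_0+p_M)\ge 2p_M^{\rm SV}$.
\item Alternatively, and to be safe, we can mirror the argument of Eq.~\eqref{eq:sample-visibility-tail}. Using $p_{j+M}\le q^{M/2}p_j$, we get $B_{M,\bar n}\ge 4(1+q^{M/2})^{-1}\Pr_{\bar n}(M\le N<2M)$. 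By Eq.~\eqref{eq:fixedE-block-asymptotic} this equals $4\sqrt{1+\bar n}\,d_{M/2}q^{M/2}[1+o(1)]$.
\end{itemize}
Either route suffices, since $p_M^{\rm SV}=(1+\bar n)^{-1/2}d_{M/2}q^{M/2}$ and the two differ only by an $\bar n$-dependent constant. With $d_{M/2}=\binom{M}{M/2}/2^M\sim\sqrt{2/(\pi M)}$ we obtain
\[
B_{M,\bar n}\ge c_{\bar n}M^{-1/2}\Big(\frac{\bar n}{1+\bar n}\Big)^{M/2}.
\]
Substituting this into Eq.~\eqref{eq:sample-sv-upper}, we absorb the ceiling because the argument diverges, and we absorb the prefactor $(2+\tfrac43 b\cos a)/(b^2\cos^2a)\cdot\log(2/\zeta)$ into the constant. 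This yields
\[
\nu_{\mathrm{SV}}^\star=O\!\left[\sqrt M\,\Big(\frac{1+\bar n}{\bar n}\Big)^{M/2}\right].
\]

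For the ratio, we divide the classical lower bound by the quantum upper bound. The $\sqrt M$ factors cancel, leaving
\[
\Big(\frac{M}{e\bar n}\Big)^M\Big(\frac{\bar n}{1+\bar n}\Big)^{M/2}=\Big(\frac{M}{e\sqrt{\bar n(1+\bar n)}}\Big)^M,
\]
and setting $\bar n=1$ gives $(M/(\sqrt2e))^M$.

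The only delicate step is the lower bound on $B_{M,\bar n}$. We must ensure that the squeezed-vacuum parity convention (only even $j$ contribute, with $M$ even so that $j+M$ is even) is handled correctly. We must also check that the denominators $p_j+p_{j+M}$ are comparable to $p_j$ uniformly in $j$; this follows from the monotone ratio bound $p_{j+M}/p_j\le q^{M/2}\le1$. Everything else is Stirling bookkeeping with constants depending on $\bar n,a,b,\zeta$.
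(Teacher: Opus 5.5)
Your proposal is correct and follows the paper's proof essentially verbatim. You keep the $j=0$ summand of $B_{M,\bar n}$ to get $B_{M,\bar n}\ge 2p_M^{\mathrm{SV}}=\Theta_{\bar n}(M^{-1/2}q^{M/2})$, substitute into Eq.~\eqref{eq:sample-sv-upper}, apply Stirling to Eq.~\eqref{eq:sample-classical-lower} using $\kappa_{M,\bar n}\to0$, and divide the two bounds (your alternative block-tail route is also valid but unnecessary; the only implicit hypothesis worth noting is that $\bar n<M^2/(M+1)$ holds for all sufficiently large $M$, so Theorem~\ref{thm:sample-finite} applies).
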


\begin{proof}
The $j=0$ summand in Eq.~\eqref{eq:sample-sector-visibility} gives
\begin{equation}
    B_{M,\bar{n}}\ge\frac{4p_0(\bar{n})p_M(\bar{n})}{p_0(\bar{n})+p_M(\bar{n})},
    \qquad
    p_M(\bar{n})
    =\Theta_{\bar{n}}\!\left[
        M^{-1/2}
        \left(\frac{\bar{n}}{1+\bar{n}}\right)^{M/2}
    \right],
\label{eq:sample-fixed-visibility}
\end{equation}
while $p_0(\bar{n})=\Theta_{\bar{n}}(1)$. The squeezed-vacuum bound in Eq.~\eqref{eq:sample-fixed-complexities} follows from Eq.~\eqref{eq:sample-sv-upper}. Since $\kappa_{M,\bar{n}}=O_{\bar{n}}(M^{-1})$, Eq.~\eqref{eq:sample-classical-lower} and Stirling's formula give the classical bound in Eq.~\eqref{eq:sample-fixed-complexities}. Dividing the two bounds proves Eq.~\eqref{eq:sample-fixed-ratio}.
\end{proof}

\section{Analytic optimality of the squeezed vacuum}
\label{sec:additional-sv-optimality-notes}

In this appendix we present two analytic results that show the squeezed-vacuum states are optimal in certain regimes.  The first establishes local optimality at fixed mean photon number for every even $M$.  The second establishes asymptotic global optimality (in terms of the fraction of photons assigned to squeezing) in the regime $\bar n=M/\log M$.

We consider the family of pure single-mode Gaussian probes
\begin{equation}
    \ket{\psi_x}=D(\sqrt{x})S(r_x)\ket{0},
    \qquad
    x+\sinh^2 r_x=\bar n,
\end{equation}
where $x=\alpha^2$ is the mean photon number assigned to displacement and the remainder is assigned to squeezing.  Thus $x=0$ is the squeezed vacuum. Throughout this subsection $M=2^m$ is even, and the displacement and squeezing parameters are taken to be real.

{\color{black}
\begin{theorem}
\label{thm:sv-local-opt}
Fix an even $M=2L$ and a mean photon number $\bar n>0$. Let
$\QFI (x)$ denote the QFI, after the uniform
$\mathbb Z_M$ phase twirl, of
\begin{equation}
    D(\sqrt{x})S(r_x)\ket{0},
    \qquad
    \sinh^2r_x=\bar n-x,
    \qquad
    0\le x\le \bar n.
\end{equation}
Then
\begin{equation}
    \left.
    \frac{\mathrm d}{\mathrm dx}
    \QFI (x)
    \right|_{x=0}<0.
\end{equation}
Consequently, $x=0$ is a strict one-sided local maximizer.
\end{theorem}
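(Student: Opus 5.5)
The plan is to compute the one-sided derivative directly from the pairwise representation of the QFI, Eq.~\eqref{eq:pairwiseQFI}, by expanding the Fock amplitudes of $D(\sqrt{x})S(r_x)\ket0$ to first order in $x$. Write $\alpha=\sqrt{x}$. By photon-number parity, $c_n(\alpha,r)$ is even in $\alpha$ for even $n$ and odd in $\alpha$ for odd $n$. Hence
\begin{equation}
c_{2k}(x)=c^{\mathrm{SV}}_{2k}(r_x)+x\,e_{2k}+O(x^2),\qquad c_{2k+1}(x)=\sqrt{x}\,f_{2k+1}+O(x^{3/2}),
\end{equation}
with $f_n=\bra n(a^\dagger-a)S(r)\ket0$ and $e_n=\tfrac12\bra n(a^\dagger-a)^2S(r)\ket0$, both evaluated at $r=r_0$. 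Using $S^\dagger aS=a\cosh r-a^\dagger\sinh r$, one gets $(a^\dagger-a)S\ket0=e^{r}S\ket1$. So $f_{2k+1}=e^{r}\bra{2k+1}S(r)\ket1$, which has the closed form
\begin{equation}
f_{2k+1}=e^{r}(\cosh r)^{-3/2}(-\tanh r)^k\sqrt{(2k+1)!}/(2^kk!).
\end{equation}
It follows that $f_{2k+1}^2=e^{2r}\,p^{\mathrm{SV}}_{2k}\,(2k+1)/\cosh^2 r$. In words, the odd amplitudes are the squeezed-vacuum weights reweighted by $(2k+1)$.

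Next I split $\QFI(x)=\QFI^{\mathrm{even}}(x)+\QFI^{\mathrm{odd}}(x)$ according to the parity of the residue $j$ (recall $M$ is even, so residue classes preserve parity). In each odd sector the pairwise term $a_{j,p}a_{j,q}/\lambda_j$ is homogeneous of degree one in the weights. Therefore
\begin{equation}
\QFI^{\mathrm{odd}}(x)=x\,\QFI[\{f^2\}]+O(x^2),
\end{equation}
where $\QFI[\{f^2\}]$ is Eq.~\eqref{eq:pairwiseQFI} evaluated on the weights $f_n^2$. This is $4x e^{2r}\times$ the twirled QFI of the state $S\ket1$, and it contributes a positive amount to the derivative. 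For the even sectors there are two contributions:
\begin{equation}
\frac{d}{dx}\QFI^{\mathrm{even}}\Big|_0=\frac{dr_x}{dx}\Big|_0\,\partial_r\QFI^{\mathrm{SV}}(r)+\sum_n \frac{\partial\QFI^{\mathrm{even}}}{\partial a_n}\,2c_n^{\mathrm{SV}}e_n,
\end{equation}
with $dr_x/dx=-1/\sinh(2r)$. I would compute $e_n$ explicitly from $(a^\dagger-a)^2S\ket0=e^{2r}S(a^{\dagger2}-\ldots)\ket0$, which again gives closed forms in terms of $p^{\mathrm{SV}}_{2k}$ multiplied by low-degree polynomials in $k$. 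The quantity $\partial_r\QFI^{\mathrm{SV}}$ is positive; together with $dr_x/dx<0$ this produces the dominant negative term.

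At that point every term is a series over $k$ of $d_kq^k$ times a polynomial in $k$, organized sector by sector with pairs $(j+pM,j+qM)$. The target is to show that the loss from reduced squeezing plus the $e$-term beats the gain $x\,\QFI[\{f^2\}]$. I would attempt this pairwise. For each pair $(p,q)$ in even sector $j$, I match it with the corresponding pair in odd sector $j+1$. I then use the identity $f_{j+1+pM}^2\propto(j+1+pM)\,p^{\mathrm{SV}}_{j+pM}$ to write the odd-sector gain as the even-sector pair term times a factor that is linear in the photon numbers. Since $\partial_r p^{\mathrm{SV}}_{2k}$ is also linear in $k$, with coefficient $\propto 2k/\sinh(2r)\cdot(\ldots)$, the goal is a termwise inequality of the form
\begin{equation}
(\text{loss coefficient})\;\ge\;(\text{gain coefficient})\cdot(1+\text{const}),
\end{equation}
uniform in $j,p,q$.

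The main obstacle is exactly this sign comparison, uniform over all even $M$ and all $\bar n>0$. The normalisation $1/\lambda_j$ couples all pairs within a sector, and the $e_n$ cross-terms have no obvious sign. If the termwise matching fails, my fallback is the variance form $4\mathbb E[\mathrm{Var}(\hat n\mid \hat n\bmod M)]$. There I would treat the derivative as a first-order change in the photon-number distribution, together with the fact that conditional variance within a residue class is governed by the tail ratio between photon numbers $n$ and $n+M$. Displacement adds only photons with a Poisson-like, faster-decaying profile, so at fixed $\bar n$ it thins the $\tanh^{M}r$ tail that carries the QFI. I would prove $\tfrac{d}{dx}\log(\text{tail weight at }n\sim M)<0$ and show that this dominates the polynomial prefactors.
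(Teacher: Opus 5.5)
There is a genuine gap. Your first-order expansion is set up correctly, but the proposal never proves the one inequality that constitutes the theorem.

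\textbf{What is correct.} Your bookkeeping agrees with the paper's PGF expansion, Eq.~\eqref{eq:DS-pgf-first-order}. Write $p_a(k;q)=(1-q)^a\frac{(a)_k}{k!}q^k$, so that $p^{\mathrm{SV}}_{2k}=p_{1/2}(k;q)$. Then $(a^\dagger-a)S\ket{0}=e^{r}S\ket{1}$ gives $f_{2k+1}^2=\kappa\,p_{3/2}(k;q)$ with $\kappa=e^{2r}=(1+\tanh r)/(1-\tanh r)$. The odd sectors therefore contribute $16\kappa x\,W_{3/2,L}(q)+O(x^2)$, where $W_{a,L}=\mathbb E_{p_a}[\operatorname{Var}(K\mid K\bmod L)]$. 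Your extra factor $4$ is spurious, since Eq.~\eqref{eq:pairwiseQFI} is homogeneous of degree one in the weights.

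\textbf{What is missing.} The decisive comparison of this gain against the even-sector loss is only stated as a goal.
\begin{itemize}
\item The termwise pair matching is never carried out. You yourself note why it is hard: the $1/\lambda_j$ normalisation couples all pairs in a sector, and you see no sign for the $e_n$ cross-terms.
\item Positivity of $\partial_r\QFI^{\mathrm{SV}}$, which your argument needs, is asserted without proof. It is not obvious, and it is a genuine lemma in the paper.
\end{itemize}

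\textbf{Why the fallback fails.} First, it is asymptotic in $M$, whereas the theorem is for every fixed even $M$. Second, its premise is wrong at first order. By your own formula, the $O(x)$ odd weights decay like $q^k$ with an extra factor $2k+1$. They are heavier-tailed than the squeezed vacuum, not Poisson-like. In fact, for fixed $\bar n$, the gain $W_{3/2,L}$ and the loss $(1-q^2)W'_{1/2,L}$ are both of order $L^{5/2}q^{L}$, and their ratio tends to $2q/(1+q)<1$. The sign is therefore settled by an $O(1)$ constant, not by one term dominating polynomial prefactors.

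\textbf{The ingredients you need.}
\begin{enumerate}
\item Your two even-sector contributions are not sign-ambiguous. Use $(a^\dagger-a)^2S\ket{0}=e^{2r}S(\sqrt2\ket{2}-\ket{0})$ and $dq_x/dx=-(1-q)^2$. Together they give an $O(x)$ change of the even law equal to exactly $-\kappa x[p_{1/2}+(1-q^2)\partial_q p_{1/2}]$. By degree-one homogeneity, the derivative is then $16\kappa[W_{3/2,L}-W_{1/2,L}-(1-q^2)W'_{1/2,L}]$.
\item You observed that both $f^2_{2k+1}$ and $\partial_r p^{\mathrm{SV}}_{2k}$ are $p^{\mathrm{SV}}_{2k}$ times a linear function of $k$. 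That is precisely the identity $p_{3/2}=p_{1/2}+2q(1-q)\partial_q p_{1/2}$.
\item The normalisation coupling is handled not termwise but through the variational form $W_L[p]=\min_{\mu}\sum_k p_k(k-\mu_{k\bmod L})^2$. Evaluate the odd-sector objective at the squeezed-vacuum conditional means (concavity plus the envelope theorem). This gives $W_{3/2,L}\le W_{1/2,L}+2q(1-q)W'_{1/2,L}$.
\item You need an actual proof of $W'_{1/2,L}>0$. The paper writes $K(q+h)\overset{d}{=}K(q)+Y_h$, with $Y_h$ an independent compound-Poisson variable, and uses superadditivity of $W_L$. This yields $W'_{a,L}\ge aL^2q^{L-1}$.
\end{enumerate}
Combining these gives a derivative at most $-16\kappa(1-q)^2W'_{1/2,L}<0$, for every even $M$ and every $\bar n>0$.
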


\begin{proof}
Set
\begin{equation}
    q:=\frac{\bar n}{1+\bar n},
    \qquad
    M=2L.
\end{equation}
Since squeezed vacuum has support only on even photon numbers,
write $N=2K$. Define
\begin{equation}
    p_a(k;q):=(1-q)^a\frac{(a)_k}{k!}q^k,
\end{equation}
where $(a)_0:=1$ and
$(a)_k:=a(a+1)\cdots(a+k-1)$ for $k\ge1$. For squeezed vacuum,
$K\sim p_{1/2}(\cdot;q)$.

For any $a>0$, define
\begin{equation}
    W_{a,L}(q)
    :=
    \mathbb E_{p_a}
    \left[
        \operatorname{Var}(K\mid K\bmod L)
    \right].
\end{equation}
Since $N=2K$ and $M=2L$, Eq.~\eqref{eq:QGIgeneralVar} gives
\begin{equation}
    \QFI (0)
    =
    16W_{1/2,L}(q).
\label{eq:sv-W}
\end{equation}

To determine the first-order change away from squeezed vacuum,
put
\begin{equation}
    \bar{n}_{r_x}:=\bar n-x,
    \qquad
    q_x:=\frac{\bar{n}_{r_x}}{1+\bar{n}_{r_x}},
    \qquad
    t_x:=\sqrt{q_x}.
\end{equation}
The photon-number probability-generating function (PGF) of
$D(\sqrt{x})S(r_x)\ket{0}$ is
\begin{equation}
    G_x(z)
    =
    \sqrt{\frac{1-t_x^2}{1-t_x^2z^2}}
    \exp\left[
        x(1+t_x)\frac{z-1}{1+t_xz}
    \right].
\label{eq:DS-pgf-local}
\end{equation}
At $x=0$, let $t=\sqrt q$ and
\begin{equation}
    \kappa:=\frac{1+t}{1-t}>0.
\end{equation}
Using
\begin{equation}
    \left.\frac{\mathrm dq_x}{\mathrm dx}\right|_{x=0}
    =-(1-q)^2,
\end{equation}
direct differentiation of Eq.~\eqref{eq:DS-pgf-local} gives
\begin{equation}
\begin{split}
    G_x(z)
    ={}&G_0(z)
    +\kappa x
    \left[
        z\left(\frac{1-q}{1-qz^2}\right)^{3/2}
        -G_0(z)
        -(1-q^2)\partial_qG_0(z)
    \right]
    +O(x^2),
\end{split}
\label{eq:DS-pgf-first-order}
\end{equation}
where
\begin{equation}
    G_0(z)
    =
    \left(\frac{1-q}{1-qz^2}\right)^{1/2}.
\end{equation}
The first term in square brackets is the PGF of the odd-photon
distribution $N=2K+1$ with
$K\sim p_{3/2}(\cdot;q)$. Since $M$ is even, the even and odd
photon-number sectors occupy disjoint residue classes modulo $M$.
Hence the QFI for these different sectors can be considered separately, and Eq.~\eqref{eq:DS-pgf-first-order} gives
\begin{equation}
\begin{split}
    \frac{\QFI (x)}{16}
    ={}&
    W_{1/2,L}(q)
    +\kappa x
    \left[
        W_{3/2,L}(q)
        -W_{1/2,L}(q)
        -(1-q^2)W'_{1/2,L}(q)
    \right]
    +O(x^2).
\end{split}
\label{eq:qfi-local-expansion}
\end{equation}
Differentiating and evaluating at $x=0$ gives
\begin{equation}
    \frac{1}{16}\frac{\partial\QFI (x)}{\partial x}\bigg|_{x=0}
    =\kappa 
    \left[
        W_{3/2,L}(q)
        -W_{1/2,L}(q)
        -(1-q^2)W'_{1/2,L}(q)
    \right].
\end{equation}

It remains to prove that this derivative is negative. For a probability
distribution $p=(p_k)_{k\ge0}$ define
\begin{equation}
    W_L[p]
    :=
    \min_{\mu_0,\ldots,\mu_{L-1}}
    \sum_{k=0}^{\infty}
    p_k\left(k-\mu_{k\bmod L}\right)^2.
\label{eq:W-concave-form}
\end{equation}
This equals
$\mathbb E[\operatorname{Var}(K\mid K\bmod L)]$.
For fixed $\{\mu_j\}$ the right-hand side is linear in $p$,
so $W_L[p]$ is concave in $p$.

Moreover,
\begin{equation}
    \partial_qp_a(k;q)
    =
    \frac{a}{q(1-q)}
    \left[
        p_{a+1}(k;q)-p_a(k;q)
    \right].
\end{equation}
Applying the supporting-line inequality for the concave functional
$W_L$ at $a=1/2$ therefore gives
\begin{equation}
    W_{3/2,L}(q)-W_{1/2,L}(q)
    \le
    2q(1-q)W'_{1/2,L}(q).
\label{eq:W-half-bound}
\end{equation}

Finally, we show that $W'_{a,L}(q)>0$. The PGF of
$K\sim p_a(\cdot;q)$ is
\begin{equation}
    F_a(z;q)
    =
    \left(\frac{1-q}{1-qz}\right)^a.
\end{equation}
For $h>0$,
\begin{equation}
    \frac{F_a(z;q+h)}{F_a(z;q)}
    =
    \exp\left[
        \sum_{k=1}^{\infty}
        \lambda_k(h)(z^k-1)
    \right],
\end{equation}
where
\begin{equation}
    \lambda_k(h)
    :=
    \frac{a\big[(q+h)^k-q^k\big]}{k}.
\end{equation}
Thus
\begin{equation}
    K(q+h)\overset{d}{=}K(q)+Y_h,
\end{equation}
where $Y_h$ is an independent compound-Poisson random variable.
For independent integer-valued $X,Y$,
\begin{equation}
    W_L(X+Y)\ge W_L(X)+W_L(Y),
\label{eq:W-superadditive}
\end{equation}
because conditioning on $(X\bmod L,Y\bmod L)$ gives at least as
much information as conditioning only on $(X+Y)\bmod L$.

Let
\begin{equation}
    \Lambda_h:=\sum_{k\ge1}\lambda_k(h)=O(h).
\end{equation}
The compound-Poisson distribution satisfies
\begin{equation}
    \Pr(Y_h=0)=e^{-\Lambda_h},
    \qquad
    \Pr(Y_h=L)\ge e^{-\Lambda_h}\lambda_L(h).
\end{equation}
Since $0$ and $L$ belong to the same residue class modulo $L$,
the pairwise variance formula gives
\begin{equation}
    W_L(Y_h)
    \ge
    L^2e^{-2\Lambda_h}\lambda_L(h).
\end{equation}
Consequently,
\begin{equation}
\begin{split}
    W'_{a,L}(q)
    &\ge
    L^2
    \lim_{h\downarrow0}
    \frac{\lambda_L(h)}{h}\\
    &=
    aL^2q^{L-1}>0.
\end{split}
\label{eq:W-positive-derivative}
\end{equation}

Combining Eqs.~\eqref{eq:qfi-local-expansion},
\eqref{eq:W-half-bound}, and
\eqref{eq:W-positive-derivative},
\begin{equation}
\begin{split}
    \frac{1}{16\kappa}
    \left.
    \frac{\mathrm d\QFI }
         {\mathrm dx}
    \right|_{x=0}
    &\le
    \left[
        2q(1-q)-(1-q^2)
    \right]W'_{1/2,L}(q)\\
    &=
    -(1-q)^2W'_{1/2,L}(q)
    <0.
\end{split}
\end{equation}
Since $\kappa>0$, the result follows.
\end{proof}

}

Above we proved that squeezed vacuum is locally optimal. Next we show that any globally optimal
displaced-squeezed probe must assign an asymptotically vanishing
fraction of its photon budget to displacement.

\begin{theorem}
\label{thm:sv-asymptotic-opt}
Let $M=2^m\to\infty$ and
\begin{equation}
    \bar n_M=\frac{M}{\log M}.
\end{equation}
Consider any sequence of real displaced-squeezed probes
\begin{equation}
    D(\alpha_M)S(r_M)\ket{0},
    \qquad
    \alpha_M^2+\sinh^2r_M=\bar n_M,
\end{equation}
and let $\QFI (\alpha_M,r_M)$ denote the QFI
after the uniform $\mathbb Z_M$ phase twirl. Let
$\QFI ^{\mathrm{SV}}$ denote the QFI of the
squeezed vacuum with the same mean photon number.
 
For $0<\eta<1$,
\begin{equation}
    \frac{\QFI (\alpha_M,r_M)}
         {\QFI ^{\mathrm{SV}}}
    \le
    M^{-\frac{\eta}{2(1-\eta)}+o(1)}.
\label{eq:asymptotic-ratio-rate}
\end{equation}
Consequently, any sequence of global maximizers
$(\alpha_M^*,r_M^*)$ within this family satisfies
\begin{equation}
    \frac{(\alpha_M^*)^2}{\bar n_M}\longrightarrow0.
\label{eq:optimal-displacement-fraction-zero}
\end{equation}
\end{theorem}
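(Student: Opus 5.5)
The plan is to compare an \emph{upper} bound on the displaced-squeezed QFI with the squeezed-vacuum lower bound of Eq.~\eqref{eq:SV-strong-lower}, $\QFI^{\mathrm{SV}}\ge cM^{3/2}/\sqrt{\log M}=M^{3/2-o(1)}$. Here $\eta$ is read as the fraction of the photon budget assigned to displacement, $\alpha_M^2=\eta\bar n_M$, so the squeezing photons are $n_s:=\sinh^2 r_M=(1-\eta)\bar n_M$. For the upper bound I would use the tail bound Eq.~\eqref{eq:QFIlb}, $\QFI\le 4\sum_{n\ge M}n^2p_n=4\,\mathbb E[N^2\mathbf 1\{N\ge M\}]$, where $N$ is the photon number of $D(\alpha_M)S(r_M)\ket 0$. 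This reduces the problem to a large-deviation estimate for $N$, for which the PGF Eq.~\eqref{eq:DS-pgf-local} is available in closed form.

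\textbf{Chernoff step.} For any $z>1$ with $tz<1$, where $t=\tanh r_M$, I will bound
\[
\mathbb E[N^2\mathbf 1\{N\ge M\}]\le z^{-M}\,\big[(z\partial_z)^2G\big](z).
\]
Since $t^2=n_s/(1+n_s)$, we have $1-t^2z^2\approx 1/n_s-2s$ when $z=1+s$ with $s=O(1/n_s)$. I choose
\[
s=\frac{1-\varepsilon}{2n_s},\qquad \varepsilon=\varepsilon_M\downarrow0 \text{ slowly (e.g. } \varepsilon_M=1/\log\log M).
\]
The three factors then behave as follows.
\begin{itemize}
\item The squeezing factor is $\sqrt{(1-t^2)/(1-t^2z^2)}\approx\varepsilon^{-1/2}$.
\item The displacement exponent is $x(1+t)(z-1)/(1+tz)\approx\eta\bar n_M s=\eta(1-\varepsilon)/(2(1-\eta))=O(1)$.
\item The prefactor is $z^{-M}=\exp[-M(1-\varepsilon)/(2(1-\eta)\bar n_M)+O(M/n_s^2)]=M^{-(1-\varepsilon)/(2(1-\eta))+o(1)}$, using $\bar n_M=M/\log M$.
\end{itemize}
Each application of $z\partial_z$ to $G$ multiplies it by at most $O(n_s/\varepsilon+\bar n_M)$, which comes from differentiating $(1-t^2z^2)^{-1/2}$ and the exponential. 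This factor is $M^{1+o(1)}$. Altogether,
\[
\QFI(\alpha_M,r_M)\le M^{2-\frac{1}{2(1-\eta)}+o(1)},
\]
and dividing by $M^{3/2-o(1)}$ gives $M^{1/2-\frac1{2(1-\eta)}+o(1)}=M^{-\frac{\eta}{2(1-\eta)}+o(1)}$, which is Eq.~\eqref{eq:asymptotic-ratio-rate}.

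\textbf{Main obstacle.} I expect the hard part to be making the Chernoff/derivative estimates uniform enough, in two respects.
\begin{itemize}
\item The $o(1)$ in the exponent must genuinely vanish. This requires careful second-order expansion of $\log(1-t^2z^2)$ and of $\log z$, and checking that the $O(M/n_s^2)=O(\log^2M/M)$ corrections and the $\log(1/\varepsilon)$ terms are $o(\log M)$.
\item The estimates should hold uniformly for $\eta$ in compact subsets $[\eta_0,1)$. The endpoint regime where $n_s$ stays bounded (nearly coherent) needs separate treatment. There I would fix $z$ with $tz<1$ bounded away from $1$ and use a fixed $z>1$ in the Chernoff bound, which gives super-polynomial decay of $z^{-M}$ times a displacement factor $e^{O(\bar n_M)}=e^{O(M/\log M)}$, in the spirit of the coherent-state bound Eq.~\eqref{eq:QFIcoherentUB}. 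This gives a ratio of $M^{-\omega(1)}$, which is even stronger.
\end{itemize}

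\textbf{Consequence.} For Eq.~\eqref{eq:optimal-displacement-fraction-zero}, suppose $(\alpha_M^*)^2/\bar n_M\not\to0$. Then along a subsequence the displacement fraction is at least some $\eta_0>0$. By the uniform version of the bound, $\QFI(\alpha_M^*,r_M^*)/\QFI^{\mathrm{SV}}\le M^{-\eta_0/(2(1-\eta_0))+o(1)}\to0$. But squeezed vacuum belongs to the family, so a global maximizer must have ratio at least $1$. This contradiction proves the claim.
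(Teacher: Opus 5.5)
Your proposal is correct and follows essentially the paper's route: the tail bound Eq.~\eqref{eq:QFIlb}, a Chernoff estimate using the closed-form PGF evaluated just inside its radius of convergence $1/\tanh r_M$, comparison with Eq.~\eqref{eq:SV-strong-lower}, and a contradiction argument for global maximizers. The technical differences are minor: the paper pulls out $M^2$ via monotonicity of $n^2s^{-n}$ and takes $s=(1-1/M)/t$, while you use $(z\partial_z)^2G$ at relative distance $\varepsilon_M$ from the singularity. The one regime you should state explicitly, as the paper does, is a displacement fraction tending to $1$ with $\sinh^2 r_M\to\infty$; there your displacement exponent is no longer $O(1)$, but it is at most about $\bar n_M/(2\sinh^2 r_M)=o(M/\sinh^2 r_M)$, so it is swamped by the Chernoff factor and the ratio is $M^{-\omega(1)}$.
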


\begin{proof}
Write
\begin{equation}
    x:=\alpha^2,
    \qquad
    \bar{n}_{r}:=\sinh^2r,
    \qquad
    x+\bar{n}_{r}=\bar n.
\end{equation}
From Eq.~\eqref{eq:SV-strong-lower}, which evaluates the QFI when $\bar n=M/\log M$,
\begin{equation}
    \QFI ^{\mathrm{SV}}
    \ge
    c_0\frac{M^{3/2}}{\sqrt{\log M}}
\label{eq:SV-lower-theorem4}
\end{equation}
for some constant $c_0>0$ and all sufficiently large $M$.

We next upper-bound the QFI of an arbitrary displaced-squeezed
state. Set
\begin{equation}
    t:=\tanh r=\sqrt{\frac{\bar{n}_{r}}{1+\bar{n}_{r}}}.
\end{equation}
Its photon-number PGF is
\begin{equation}
    G(z)
    =
    \sqrt{\frac{1-t^2}{1-t^2z^2}}
    \exp\left[
        x(1+t)\frac{z-1}{1+tz}
    \right]\;,
\label{eq:DS-pgf-asymptotic}
\end{equation}
which satisfies $G(s)=\sum_np_ns^n$. For $1<s<1/t$ with $M\log s\ge2$, the function
$n^2s^{-n}$ is decreasing for $n\ge M$. The bound in Eq.~\eqref{eq:QFIlb} gives
\begin{equation}
\label{eq:Chernoff-QFI}
\begin{split}
     \QFI 
    \le&4\sum_{n\geq M}n^2p_n\\
    \leq&4M^2s^{-M}\sum_{n\geq M}p_ns^n\\
    \leq&4M^2s^{-M}G(s)
\end{split}
\end{equation}
Whenever $\bar{n}_{r}=o(M)$, choose
\begin{equation}
    s=\frac{1-1/M}{t}.
\end{equation}
For all sufficiently large $M$, this satisfies the conditions
above, and
\begin{equation}
    s^{-M}\le Ct^M,
    \qquad
    1-t^2s^2\ge\frac1M.
\end{equation}
Moreover, since $s<1/t$,
\begin{equation}
    (1+t)\frac{s-1}{1+ts}
    \le
    \frac{1-t^2}{t}
    =
    \frac{1}{\sqrt{\bar{n}_{r}(1+\bar{n}_{r})}}
    \le
    \frac{1}{\bar{n}_{r}}.
\end{equation}
Therefore, for $\bar{n}_{r}>0$,
\begin{equation}
    \QFI (\alpha,r)
    \le
    CM^2\sqrt{\frac{M}{\bar{n}_{r}}}\,
    t^M\exp\left(\frac{x}{\bar{n}_{r}}\right).
\label{eq:DS-master-upper}
\end{equation}

Suppose first that
\begin{equation}
    \frac{x}{\bar n}\to\eta,
    \qquad 0<\eta<1,
\end{equation}
and write $\beta:=1-\eta$. Then
\begin{equation}
    \bar{n}_{r}
    =
    \beta\frac{M}{\log M}(1+o(1)).
\end{equation}
Consequently,
\begin{equation}
\begin{split}
    t^M
    &=
    \left(\frac{\bar{n}_{r}}{1+\bar{n}_{r}}\right)^{M/2}\\
    &=
    \exp\left[
        -\frac{M}{2\bar{n}_{r}}
        +O\left(\frac{M}{\bar{n}_{r}^2}\right)
    \right]\\
    &=
    M^{-1/(2\beta)+o(1)}.
\end{split}
\end{equation}
Also,
\begin{equation}
    \sqrt{\frac{M}{\bar{n}_{r}}}=O(\sqrt{\log M}),
    \qquad
    \exp\left(\frac{x}{\bar{n}_{r}}\right)=O(1).
\end{equation}
Equation~\eqref{eq:DS-master-upper} therefore gives
\begin{equation}
    \QFI (\alpha,r)
    \le
    M^{2-1/(2\beta)+o(1)}\sqrt{\log M}.
\end{equation}
Dividing by Eq.~\eqref{eq:SV-lower-theorem4},
\begin{equation}
\begin{split}
    \frac{\QFI (\alpha,r)}
         {\QFI ^{\mathrm{SV}}}
    &\le
    (\log M)
    M^{\frac12-\frac{1}{2\beta}+o(1)}\\
    &=
    M^{-\frac{\eta}{2(1-\eta)}+o(1)}
    \longrightarrow0.
\end{split}
\end{equation}

It remains to consider $\eta=1$. Then
\begin{equation}
    \bar{n}_{r}=o(\bar n)=o\left(\frac{M}{\log M}\right).
\end{equation}
Along any subsequence for which $\bar{n}_{r}\to\infty$,
Eq.~\eqref{eq:DS-master-upper} applies and
\begin{equation}
    M\log t
    =
    -\frac{M}{2\bar{n}_{r}}(1+o(1)),
    \qquad
    \frac{x}{\bar{n}_{r}}
    \le
    \frac{M}{\bar{n}_{r}\log M}
    =
    o\left(\frac{M}{\bar{n}_{r}}\right).
\end{equation}
Since $M/\bar{n}_{r}\gg\log M$, the QFI is
$M^{-\omega(1)}$ and is therefore negligible compared with
Eq.~\eqref{eq:SV-lower-theorem4}.

Along any subsequence on which $\bar{n}_{r}$ is bounded, choose a fixed
$s>1$ lying inside the convergence disk of
Eq.~\eqref{eq:DS-pgf-asymptotic}. Then
Eq.~\eqref{eq:Chernoff-QFI} gives
\begin{equation}
    \QFI 
    \le
    \operatorname{poly}(M)
    \exp\left[
        -\Theta(M)
        +O\left(\frac{M}{\log M}\right)
    \right],
\end{equation}
which is again negligible compared with
Eq.~\eqref{eq:SV-lower-theorem4}. Every subsequence contains a further subsequence of one of these
two types, so the QFI ratio tends to zero also for $\eta=1$.

Finally, suppose $(\alpha_M^*,r_M^*)$ is a sequence of global
maximizers and Eq.~\eqref{eq:optimal-displacement-fraction-zero}
fails. Then along some subsequence
\begin{equation}
    \frac{(\alpha_M^*)^2}{\bar n_M}\ge\epsilon>0.
\end{equation}
By compactness of $[0,1]$, a further subsequence converges to some
$\eta\in[\epsilon,1]$. The ratio limit proved above then implies
\begin{equation}
    \frac{\QFI (\alpha_M^*,r_M^*)}
         {\QFI ^{\mathrm{SV}}}
    \longrightarrow0,
\end{equation}
contradicting global optimality, since squeezed vacuum itself is an
admissible probe and hence the ratio must be at least one. Therefore
\begin{equation}
    \frac{(\alpha_M^*)^2}{\bar n_M}\longrightarrow0.
\end{equation}
\end{proof}

\section{QFI under Optical Loss, Thermal Noise, and Phase Diffusion}
\label{apen:qfiloss}
In this appendix we describe how optical loss and thermal noise affects the QFI attainable. In Appendix~\ref{apen:noisespecifics} we present the general model for optical loss and thermal noise used here. In Appendix~\ref{apen:mixedLB} we provide a lower bound on the QFI with mixed states. In Appendices~\ref{apen:classicalnoisybound} and \ref{apen:QFIsqnoisybound} respectively we present bounds on the QFI for coherent states and squeezed vacuum states. In Appendix~\ref{sec:high-energy-loss-separation} we present a scalable advantage in the presence of constant loss and thermal noise. Finally, in Appendix~\ref{sec:phase-diffusion-no-go}, we present bounds when considering phase diffusion.

\subsection{General Model of Optical Loss and Thermal Noise}
\label{apen:noisespecifics}
When accounting for loss and thermal noise, one model is to modify Eq.\eqref{eq:inputrhopure} such that the initial input state is not pure. We can write a mixed input state (e.g. a squeezed state after loss) as
\begin{equation}
  \rho=\sum_{n,n'}c_{n,n'}\ket{n}\bra{n'}\;.
\end{equation}
 After the $m$-bit rotations this state becomes
\begin{equation}
\label{eq:inputrhomixed}
\begin{split}
        \rho&=\frac{1}{M}\sum_{k=0}^{M-1}\sum_{n,p}c_{n,p}\mathrm{e}^{\mathrm{i}2\pi k(n-p)/M}\ket{n}\bra{p}\\
        &=\frac{1}{M}\sum_{n,p}c_{n,p}\sum_{k=0}^{M-1}\mathrm{e}^{\mathrm{i}2\pi k(n-p)/M}\ket{n}\bra{p}\\
         &=\sum_{n,p}c_{n,p}\delta_{(n-p)\text{ mod }M,0}\ket{n}\bra{p}
\end{split}
\end{equation}
As the input state is mixed, the output density matrix is no longer a direct sum of pure states. Hence, although the derivative,
\begin{equation}
        \frac{\partial\rho}{\partial\delta}=i[\hat{n},\rho]\;,
\end{equation}
and QFI,
\begin{equation}
\begin{split}
    \QFI=2\sum_{j,k}\frac{|(\partial_\delta\rho)_{jk}|^2}{\lambda_j+\lambda_k}=2\sum_{j,k}\frac{|\bra{\tilde{u}_j}[\rho,\hat{n}]\ket{\tilde{u}_k}|^2}{\lambda_j+\lambda_k}\;,
\end{split}
\end{equation}
take the same form as above, the simple formula derived above is no longer possible to use here. Nevertheless, we can numerically evaluate the QFI. Furthermore, as we shall see below, for certain classes of states we can bound the QFI under this noise model.

The model we use for loss is a beam splitter with thermal noise input to one mode. We consider a squeezed state with initial covariance matrix
\begin{equation}
    \Sigma_r=\begin{pmatrix}
        \mathrm{e}^{-2r}&0\\
        0&\mathrm{e}^{2r}
    \end{pmatrix}\;,
\end{equation}
as one input to the beam splitter of transmissivity $\eta$. The second input is a thermal state with mean thermal photon number $\bar{n}_\text{th}$, with covariance matrix
\begin{equation}
    \Sigma_\text{th}=\sigma_\text{th}^2\begin{pmatrix}
       1&0\\
        0&1
    \end{pmatrix}=(1+2\bar{n}_\text{th})\begin{pmatrix}
       1&0\\
        0&1
    \end{pmatrix}\;.
\end{equation}
After the beam splitter we are left with a state with covariance matrix
\begin{equation}
    \Sigma_\text{out}=\begin{pmatrix}
    v_1&0\\
        0& v_2
    \end{pmatrix}\;,
\end{equation}
where (see also Eq.~\eqref{eq:covariancemain})
\begin{equation}
    \begin{split}
    \label{eq:v_1v_2def}
        v_1&=(1+2\bar{n}_\text{th})(1-\eta)+\eta\mathrm{e}^{-2r}\\
        v_2&=(1+2\bar{n}_\text{th})(1-\eta)+\eta\mathrm{e}^{2r}\;.
    \end{split}
\end{equation}
This can be written as an effective squeezed thermal state with covariance matrix
\begin{equation}
    \Sigma=(1+2\bar{n}_\text{th}')\begin{pmatrix}
       \mathrm{e}^{-2r'}&0\\
        0&\mathrm{e}^{2r'}
    \end{pmatrix}\;.
\end{equation}
with
\begin{equation}
    \begin{split}
    \bar{n}_\text{th}'=\frac12(\sqrt{v_1v_2}-1)\\
    r'=\frac{1}{4}\log \left(\frac{v_2}{v_1}\right)\;.
    \end{split}
\end{equation}
Similarly, the displacement of a  displaced squeezed state changes from $\alpha$ to $\sqrt{\eta}\alpha$ after the action of this channel. Based on this, we can evaluate the QFI for a lossy displaced squeezed state numerically. 

\subsection{QFI Lower Bound for Mixed States}
\label{apen:mixedLB}
We now introduce a slightly different bound than the one derived in Appendix~\ref{apen:mary} to account for the fact that we are considering a direct sum of mixed states.

\begin{lemma}\label{lem:coherence-QFI-lower} Let $\rho$ be an arbitrary state after the uniform $\mathbb Z_M$ phase twirl, and let \begin{equation}
\rho(\delta) = e^{i\delta\hat n}\rho\,e^{-i\delta\hat n}. \end{equation} 
Writing $\rho_{nm}:=\bra{n}\rho\ket{m}$, the QFI for estimating $\delta$ satisfies 
\begin{equation} 
\QFI (\rho) \ge 4M^2 \sum_{j=0}^{M-1} \frac{|\rho_{j,j+M}|^2} {\rho_{jj}+\rho_{j+M,j+M}}  , \label{eq:coherence-QFI-lower} \end{equation} 
where terms with $\rho_{jj}+\rho_{j+M,j+M}=0$ are understood to vanish. \end{lemma}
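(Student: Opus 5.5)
Write $\rho(\delta)=e^{i\delta\hat n}\rho\,e^{-i\delta\hat n}$, where $\rho=\mathcal E_{0,M}$ of some (possibly mixed) input. We use monotonicity of the QFI under parameter-independent channels. For each residue $j\in\{0,\dots,M-1\}$, let $P_j$ be the projector onto $\mathrm{span}\{\ket{j},\ket{j+M}\}$, and set $P_\perp:=I-\sum_j P_j$. The $P_j$ are mutually orthogonal because distinct pairs $\{j,j+M\}$ are disjoint. Consider the pinching channel
\begin{equation*}
\Lambda(X)=\sum_{j} P_jXP_j+P_\perp XP_\perp .
\end{equation*}
It is parameter independent. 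It also commutes with $e^{i\delta\hat n}$, since each $P_j$ and $P_\perp$ is diagonal in the Fock basis. Hence $\Lambda(\rho(\delta))=e^{i\delta\hat n}\Lambda(\rho)e^{-i\delta\hat n}$, and monotonicity gives $\QFI(\rho(\delta))\ge \QFI(\Lambda(\rho(\delta)))$.

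Next, $\Lambda(\rho(\delta))$ is a direct sum of blocks, with weights, over the orthogonal subspaces. The QFI is additive over such direct sums when each block is written unnormalised, i.e.\ the QFI of $\bigoplus_j \sigma_j(\delta)$ equals the sum of the blocks' contributions. This is the same additivity used in Eq.~\eqref{eq:QGIgeneralVar}, and it follows immediately from the eigen-decomposition formula for the QFI, because the eigenvectors split across blocks. Dropping the nonnegative $P_\perp$ block gives the lower bound $\sum_j \QFI(\sigma_j(\delta))$. Here $\sigma_j(\delta)$ is the $2\times2$ unnormalised matrix
\begin{equation*}
\sigma_j=\begin{pmatrix}\rho_{jj} & \rho_{j,j+M}e^{-iM\delta}\\ \rho_{j+M,j}e^{iM\delta} & \rho_{j+M,j+M}\end{pmatrix},
\end{equation*}
with a relative phase $M\delta$ because $e^{i\delta\hat n}$ acts as $e^{i\delta j}$ and $e^{i\delta(j+M)}$ on the two basis states.

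It remains to lower-bound the QFI of each $2\times2$ block. Rather than diagonalising $\sigma_j$, I would use a second monotonicity step: measure $\sigma_j$ in the basis $\ket{j,\pm}=(\ket j\pm i e^{i\chi_j}\ket{j+M})/\sqrt2$, where $\chi_j$ is the phase of $\rho_{j+M,j}$, exactly as in Eq.~\eqref{eq:sample-sector-povm}. The outcome probabilities are
\begin{equation*}
P_\pm=\frac{s_j}{2}\pm|\rho_{j,j+M}|\sin(M\delta),\qquad s_j=\rho_{jj}+\rho_{j+M,j+M},
\end{equation*}
and the CFI at $\delta=0$ is $\sum_\pm (\partial_\delta P_\pm)^2/P_\pm = 4M^2|\rho_{j,j+M}|^2/s_j$. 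The QFI at a general $\delta$ is the same, since we may shift the operating point by the unitary covariance. The CFI lower-bounds the block QFI, and summing over $j$ gives the claimed bound; blocks with $s_j=0$ have $\rho_{j,j+M}=0$ by positivity and contribute nothing.

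The main obstacle is bookkeeping rather than analysis. One has to justify QFI additivity for unnormalised direct-sum blocks, including handling zero-weight blocks and possible rank deficiency of $\sigma_j$. Alternatively, one can avoid this step entirely by going directly to the global measurement $\{\Pi_{j,\pm}\}\cup\{\Pi_\perp\}$ on $\rho(\delta)$. Its CFI is the sum above, so the whole lemma follows from the single inequality CFI $\le$ QFI, and I would present it that way.
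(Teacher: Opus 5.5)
Your proposal is correct, and the version you say you would present is exactly the paper's proof: measure in the basis $\ket{j,\pm}=(\ket{j}\pm i e^{i\chi_j}\ket{j+M})/\sqrt{2}$ together with $\Pi_\perp$, obtain $P_\pm=s_j/2\pm|\rho_{j,j+M}|\sin(M\delta)$, and invoke CFI $\le$ QFI. The pinching-channel and block-additivity route you sketch first is also valid, since the block traces are $\delta$-independent, but it is an unnecessary detour.
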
 
\begin{proof} 
For each $j=0,\ldots,M-1$, write 
\begin{equation} \rho_{j+M,j} = |\rho_{j,j+M}|e^{i\chi_j}, \end{equation} 
and define
\begin{equation}
\ket{\pm_j} := \frac{ \ket{j}\pm i e^{i\chi_j}\ket{j+M} }{\sqrt{2}}. \label{eq:pairwise-measurement-states}
\end{equation}
The subspaces $\operatorname{span}\{\ket{j},\ket{j+M}\}$ are mutually orthogonal for $0\le j<M$. Hence the projectors $\Pi_{\pm,j}:=\ket{\pm_j}\!\bra{\pm_j}$ together with $\Pi_\perp := I-\sum_{j=0}^{M-1}(\Pi_{+,j}+\Pi_{-,j})$ define a valid POVM (for the squeezed vacuum this is the measurement of Eq.~\eqref{eq:sample-sector-povm}).

 \begin{equation}
\rho_{j,j+M}(\delta) = e^{-iM\delta}\rho_{j,j+M}, \end{equation}
since $\bra j e^{i\delta\hat n}\rho e^{-i\delta\hat n}\ket{j+M}=e^{i\delta j}\rho_{j,j+M}e^{-i\delta(j+M)}$. The probabilities of the two outcomes associated with residue class $j$ are are therefore
\begin{equation} p_{\pm,j}(\delta) = \frac{\rho_{jj}+\rho_{j+M,j+M}}{2} \pm |\rho_{j,j+M}|\sin(M\delta). \label{eq:pairwise-outcome-probabilities} 
\end{equation} 
At $\delta=0$, \begin{equation} 
p_{\pm,j}(0) = \frac{\rho_{jj}+\rho_{j+M,j+M}}{2}, 
\end{equation} 
while 
\begin{equation} 
\left. \frac{\partial p_{\pm,j}}{\partial\delta} \right|_{\delta=0} = \pm M|\rho_{j,j+M}|. \end{equation} 
The complementary outcome has probability independent of $\delta$, because \begin{equation} 
p_{+,j}(\delta)+p_{-,j}(\delta) = \rho_{jj}+\rho_{j+M,j+M}. \end{equation} 
Hence it contributes no Fisher information. The classical Fisher information of this POVM at $\delta=0$ is therefore 
\begin{align} \mathcal J &= \sum_{j=0}^{M-1} \sum_{\sigma=\pm} \frac{ \left[ \partial_\delta p_{\sigma,j}(\delta) \right]^2_{\delta=0} }{ p_{\sigma,j}(0) } \nonumber\\ &= 4M^2 \sum_{j=0}^{M-1} \frac{|\rho_{j,j+M}|^2} {\rho_{jj}+\rho_{j+M,j+M}}. \end{align} 
Since the QFI is the supremum of the classical Fisher information over all POVMs, \begin{equation} 
\QFI (\rho) \ge \mathcal J, \end{equation} 
which proves Eq.~\eqref{eq:coherence-QFI-lower}. \end{proof}

\subsection{Classical Benchmark Under the Thermal-Loss Channel}
\label{apen:classicalnoisybound}

We now upper bound the QFI attainable by the classical family defined in Eq.~\eqref{eq:classicaldef}. For a coherent state of energy \(x=|\alpha|^2\), recall from
Eq.~\eqref{eq:qfiubCoh2} that, whenever \(0\le x<M^2/(M+1)\),
\begin{equation}
\QFI\!\left[\mathcal E_{\delta,M}
(\ket{\alpha})\right]
\le j_{M,x},    
\end{equation}
where $j_{M,x}$ is defined in Eq.~\eqref{eq:sample-kappa-j}.

The thermal attenuator \(\mathcal G_{\eta,\bar n_{\rm th}}\) is
phase covariant and hence commutes with both the uniform
\(Z_M\) twirl and the unknown phase rotation. Moreover, it may be
decomposed as a pure-loss channel \(\mathcal L_\eta\), followed by
a parameter-independent phase-insensitive additive-noise channel
\(\mathcal N_{\eta,\bar n_{\rm th}}\): $\mathcal G_{\eta,\bar n_{\rm th}}
=\mathcal N_{\eta,\bar n_{\rm th}}\circ\mathcal L_\eta$. Since the QFI cannot increase under parameter-independent completely positive trace-preserving maps, we have
\begin{equation}
\QFI\!\left[
\mathcal G_{\eta,\bar n_{\rm th}}
\circ\mathcal E_{\delta,M}
(\ket{\alpha})
\right]
\le
\QFI\!\left[
\mathcal E_{\delta,M}
(\ket{\sqrt{\eta}\alpha})
\right].    
\end{equation}
Pure loss therefore reduces the coherent-state energy from
\(|\alpha|^2\) to \(\eta|\alpha|^2\le\eta \bar{n}\). Hence, provided
\(\eta \bar{n}<M^2/(M+1)\),
\begin{equation}
\QFI\!\left[
\mathcal G_{\eta,\bar n_{\rm th}}
\circ\mathcal E_{\delta,M}
(\ket{\alpha})
\right]
\le j_{M,\eta \bar{n}}.    
\end{equation}

Finally, using linearity of the channel and convexity and monotonicity of the QFI,
\begin{equation}
\begin{aligned}
\QFI\!\left[
\mathcal G_{\eta,\bar n_{\rm th}}
\circ\mathcal E_{\delta,M}(\rho_{\rm cl})
\right]
&\le
\QFI\!\left[
\mathcal G_{\eta,\bar n_{\rm th}}
\circ\mathcal E_{\delta,M}(\ket{\alpha})
\right] \\
&\leq j_{M,\eta \bar{n}}.
\end{aligned}    
\end{equation}
Therefore,
\begin{equation}
\sup_{\rho_{\rm cl}}
\QFI\!\left[
\mathcal G_{\eta,\bar n_{\rm th}}
\circ\mathcal E_{\delta,M}(\rho_{\rm cl})
\right]
\le j_{M,\eta \bar{n}}.    
\end{equation}

Notice that the bound is independent of
\(\bar n_{\rm th}\). The additional thermal
noise is parameter independent and can only decrease the QFI, so
discarding it produces a conservative upper bound on the classical family of states considered here.

\subsection{Squeezed Vacuum QFI under Noise Channel}
\label{apen:QFIsqnoisybound}

We now wish to lower bound the QFI of squeezed vacuum under the same noise channel. As above, since the thermal-loss channel is phase covariant, it may be
applied before the $\mathbb Z_M$ twirl. Let $\rho^\text{SV}=\mathcal G_{\eta,\bar n_{\rm th}}
(\ket{r})$ denote the density matrix obtained after a squeezed vacuum passes through the lossy channel. Then, from Eq.~\eqref{eq:coherence-QFI-lower}, we can consider only a single term to arrive at the simple bound
\begin{equation}
\QFI^{\text{SV}}
\geq
4M^2
\frac{|\rho^\text{SV}_{0,0+M}|^2}
{\rho^\text{SV}_{00}+\rho^\text{SV}_{M,M}}
\ge
4M^2|\rho_{0M}|^2\;.
\end{equation}

The coherent-state overlap is
\begin{equation}
    \bra{\alpha}\rho^\text{SV}\ket{\alpha}
    =
    \frac{2}{\sqrt D}
    \exp\left[
        -\frac{2(\operatorname{Re}\alpha)^2}{1+v_1}
        -\frac{2(\operatorname{Im}\alpha)^2}{1+v_2}
    \right].
\label{eq:noisy-sv-coherent-overlap}
\end{equation}
On the other hand,
\begin{equation}
    e^{|\alpha|^2}\bra{\alpha}\rho^\text{SV}\ket{\alpha}
    =
    \sum_{m,n=0}^{\infty}
    \frac{\rho^\text{SV}_{mn}}{\sqrt{m!n!}}
    (\alpha^*)^m\alpha^n.
\label{eq:fock-generating-identity}
\end{equation}
Writing Eq.~\eqref{eq:noisy-sv-coherent-overlap} in terms of
$\alpha$ and $\alpha^*$, the coefficient of $\alpha^2$ in the
exponent of Eq.~\eqref{eq:fock-generating-identity} is
\begin{equation}
    \frac{v_1-v_2}{2D}\;,
\end{equation}
where $D=(1+v_1)(1+v_2)$. Therefore, for even $M$,
\begin{equation}
    \rho_{0M}^\text{SV}
    =
    \frac{2}{\sqrt D}
    \frac{\sqrt{M!}}{(M/2)!}
    \left(
        \frac{v_1-v_2}{2D}
    \right)^{M/2}.
\label{eq:noisy-rho0M}
\end{equation}
The $\mathbb Z_M$ twirl leaves this matrix element unchanged,
since the two Fock levels differ by exactly $M$ photons.

Hence
\begin{equation}
    |\rho_{0M}^\text{SV}|^2
    =
    \frac{4}{D}
    \binom{M}{M/2}
    \left(
        \frac{v_2-v_1}{2D}
    \right)^M.
\end{equation}
Therefore
\begin{equation}
    \QFI
    \!\left[
        \mathcal G_{\eta,\bar n_{\rm th}}
        \circ\mathcal E_{\delta,M}
        (\ket r\!\bra r)
    \right]
    \ge
    \frac{16M^2}{D}
    \binom{M}{M/2}
    \left(
        \frac{v_2-v_1}{2D}
    \right)^M\;.
\label{eq:noisy-sv-coherence-lower}
\end{equation}

Observe that
\[
v_2-v_1
=
\eta(\mathrm{e}^{2r}-\mathrm{e}^{-2r})
=
2\eta\sinh(2r)
=
4\eta\sqrt{\bar n(1+\bar n)}.
\]
Thus
\begin{equation}
    \QFI
    \!\left[
        \mathcal G_{\eta,\bar n_{\rm th}}
        \circ\mathcal E_{\delta,M}
        (\ket r\!\bra r)
    \right]
    \ge
    \frac{16M^2}{D}
    \binom{M}{M/2}
    \left(
        \frac{2\eta\sqrt{\bar n(1+\bar n)}}{D}
    \right)^M\;.
\end{equation}

Define
\begin{equation}
    A:=1+(1-\eta)(1+2\bar n_{\rm th}),
\label{eq:sampleloss-A}
\end{equation}
which allows $D$ to be rewritten as
\begin{equation}
    D=4\eta A\bar n+(A+\eta)^2.
\end{equation}

Now take $\bar n=M/\log M$. For even $M$,
\begin{equation}
    \binom{M}{M/2}\geq\frac{2^M}{\sqrt{2M}},
\end{equation}
and hence
\begin{equation}
    \binom{M}{M/2}
    \left(
        \frac{2\eta\sqrt{\bar n(1+\bar n)}}{D}
    \right)^M
    \geq
    \frac{1}{\sqrt{2M}}
    \left(
        \frac{4\eta\sqrt{\bar n(1+\bar n)}}{D}
    \right)^M\;.
\end{equation}
For fixed $\eta$ and $\bar n_{\rm th}$, as $\bar n\to\infty$,
\begin{equation}
    D
    =
    4\eta A\bar n
    \left(1+O(\bar n^{-1})\right),
    \qquad
    \sqrt{\bar n(1+\bar n)}
    =
    \bar n
    \left(1+O(\bar n^{-1})\right).
\end{equation}
We therefore have
\begin{equation}
    \left(
        \frac{4\eta\sqrt{\bar n(1+\bar n)}}{D}
    \right)^M
    =
    A^{-M}
    \exp\left(
        O\left(\frac{M}{\bar n}\right)
    \right)
    =
    A^{-M}\exp\left(O(\log M)\right)\;.
\end{equation}
Putting this all together,
\begin{equation}
\label{eq:apensqexpdecay}
    \QFI
    \!\left[
        \mathcal G_{\eta,\bar n_{\rm th}}
        \circ\mathcal E_{\delta,M}
        (\ket r\!\bra r)
    \right]
    \ge
    \exp\left(
        -M\log A-O(\log M)
    \right)\;.
\end{equation}
Thus, for $\bar n=M/\log M$, the single-coherence bound derived here
provides only an exponentially decaying lower bound. In particular,
this argument no longer certifies the non-vanishing squeezed-vacuum
QFI obtained in the lossless setting of Theorem~\ref{th1}.

For fixed $\bar n$ we have
\begin{equation}
    \QFI
    \!\left[
        \mathcal G_{\eta,\bar n_{\rm th}}
        \circ\mathcal E_{\delta,M}
        (\ket r\!\bra r)
    \right]
    \ge
    \frac{16M^2}{D}
    \frac{1}{\sqrt{2M}}
    \left(
        \frac{4\eta\sqrt{\bar n(1+\bar n)}}{D}
    \right)^M\;.
\end{equation}
Hence, for fixed $\eta$, $\bar n_{\rm th}$, and $\bar n$, this lower bound scales as
\begin{equation}
    \QFI
    \!\left[
        \mathcal G_{\eta,\bar n_{\rm th}}
        \circ\mathcal E_{\delta,M}
        (\ket r\!\bra r)
    \right]
    =
    \Omega\!\left(M^{3/2}B^M\right),
\end{equation}
for some constant $0<B<1$.

Now consider the ratio to the classical benchmark:
\begin{equation}
\begin{split}
\frac{
\QFI
\!\left[
    \mathcal G_{\eta,\bar n_{\rm th}}
    \circ\mathcal E_{\delta,M}
    (\ket r\!\bra r)
\right]
}{
\displaystyle
\sup_{\rho_{\rm cl}}
\QFI
\!\left[
    \mathcal G_{\eta,\bar n_{\rm th}}
    \circ\mathcal E_{\delta,M}(\rho_{\rm cl})
\right]
}
&\geq
\frac{
\displaystyle
\frac{16M^2}{D}
\frac{1}{\sqrt{2M}}
\left(
    \frac{4\eta\sqrt{\bar n(1+\bar n)}}{D}
\right)^M
}{
\displaystyle
\frac{
4M^2\mathrm{e}^{-\eta\bar n}(\eta\bar n)^M
}{
M!\left(
1-\eta\bar n\frac{M+1}{M^2}
\right)
}
}
\\[1mm]
&=
\frac{4}{D}
\frac{\mathrm{e}^{\eta\bar n}M!}{\sqrt{2M}}
\left(
1-\eta\bar n\frac{M+1}{M^2}
\right)
\left(
    \frac{4\sqrt{\bar n(1+\bar n)}}{D\bar n}
\right)^M .
\end{split}
\label{eq:noisy-fixed-nbar-ratio}
\end{equation}

For fixed $\eta$ and $\bar n$,
\begin{equation}
    \eta\bar n\frac{M+1}{M^2}\longrightarrow0,
\end{equation}
and therefore, for all sufficiently large $M$,
\begin{equation}
    1-\eta\bar n\frac{M+1}{M^2}\geq\frac12.
\end{equation}
Consequently,
\begin{equation}
\begin{split}
\frac{
\QFI
\!\left[
    \mathcal G_{\eta,\bar n_{\rm th}}
    \circ\mathcal E_{\delta,M}
    (\ket r\!\bra r)
\right]
}{
\displaystyle
\sup_{\rho_{\rm cl}}
\QFI
\!\left[
    \mathcal G_{\eta,\bar n_{\rm th}}
    \circ\mathcal E_{\delta,M}(\rho_{\rm cl})
\right]
}
&\geq
\frac{2}{D}
\frac{\mathrm{e}^{\eta\bar n}M!}{\sqrt{2M}}
\left(
    \frac{4\sqrt{\bar n(1+\bar n)}}{D\bar n}
\right)^M
\\
&=
\Omega\!\left[
\left(
    \frac{4}{D}
    \frac{M}{\mathrm e}
    \sqrt{\frac{1+\bar n}{\bar n}}
\right)^M
\right],
\end{split}
\end{equation}
where we used
\begin{equation}
    \frac{M!}{\sqrt{2M}}
    =
    \sqrt{\pi}
    \left(\frac{M}{\mathrm e}\right)^M
    \left[1+o(1)\right].
\end{equation}
For fixed $D$ and $\bar n$, the $M^M$ factor eventually dominates,
yielding the desired factorial separation.

\subsection{A Scalable Separation Under Fixed Loss and Thermal Noise}
\label{sec:high-energy-loss-separation}

We now consider an energy regime in which a squeezed-vacuum probe
retains non-vanishing Fisher information under fixed thermal loss, whereas
the QFI of every state in the classical family of Eq.~\eqref{eq:classicaldef} is
super-polynomially small. Let $M\to\infty$ be even and fix a transmissivity
$0<\eta\le1$ and thermal noise $\bar{n}_\text{th}\geq0$ independent of $M$. Let $\bar{n}$ simultaneously satisfy
\begin{equation}
    \frac{\bar{n}(\log M)^2}{M^2}\longrightarrow\infty,
    \qquad
    \frac{\bar{n}\log M}{M^2}\longrightarrow0.
\label{eq:loss-energy-window}
\end{equation}

\subsubsection{Squeezed Vacuum Lower Bound}
We shall first compute the CFI of heterodyne detection for the squeezed vacuum state, before proceeding to the classical states defined in Eq.~\eqref{eq:classicaldef}. Let $\ket{r_M}=S(r_M)\ket0$ be the squeezed vacuum satisfying $\sinh^2r_M=\bar{n}$. We consider the thermal loss channel described in Appendix~\ref{apen:noisespecifics}.

Recalling the definition of $A$ from Eq.~\eqref{eq:sampleloss-A} we have
\begin{equation}
    a_M:=1+v_1
    =A+\eta e^{-2r_M},
    \qquad
    b_M:=1+v_2
    =A+\eta e^{2r_M}\;.
\end{equation}
The quantities $a_M$ and $b_M$ have a simple physical
interpretation. Heterodyne detection realizes the POVM
\begin{equation}
    \Pi(\alpha)=\frac{1}{\pi}\ket{\alpha}\!\bra{\alpha},
\end{equation}
A heterodyne outcome is therefore a complex amplitude
$\alpha=re^{i\theta}$~\cite{Leonhardt1997,DArianoParisSacchi2004}.

For the lossy squeezed vacuum, the heterodyne probability density
before the phase twirl is
\begin{equation}
    p_{\rm het}(r,\theta)
    =
    \frac{2}{\pi\sqrt{a_Mb_M}}
    \exp\left[
        -2r^2\left(
            \frac{\cos^2\theta}{a_M}
            +
            \frac{\sin^2\theta}{b_M}
        \right)
    \right]\;.
\end{equation}
Thus the heterodyne outcomes form an elliptical Gaussian cloud in
phase space. The added unity in $a_M$ and $b_M$ is the vacuum noise
inherent to heterodyne detection.

We retain only the polar angle of each heterodyne outcome. Its
probability density is
\begin{equation}
\begin{split}
    h_M(\theta)
    &:=
    \int_0^\infty
    p_{\rm het}(r,\theta)\,r\,\mathrm dr
\\
    &=
    \frac{\sqrt{a_Mb_M}}
    {2\pi\left(
        b_M\cos^2\theta+a_M\sin^2\theta
    \right)}.
\end{split}
\label{eq:heterodyne-angular-marginal}
\end{equation}
Hence $h_M(\theta)$ is directly the experimentally observable
probability distribution of the angle of the heterodyne outcome,
after its radial coordinate has been discarded.

Defining
\begin{equation}
\label{eq:heterodyne-xi}
    \xi_M
    :=
    \frac{\sqrt{b_M}-\sqrt{a_M}}
         {\sqrt{b_M}+\sqrt{a_M}},
\end{equation}
the angular distribution has Fourier expansion
\begin{equation}
    h_M(\theta)
    =
    \frac{1}{2\pi}
    \left[
        1+
        2\sum_{k=1}^{\infty}
        (-\xi_M)^k\cos(2k\theta)
    \right].
\end{equation}
Thus $\xi_M$ quantifies the angular anisotropy of the heterodyne
cloud: $\xi_M=0$ for a circular distribution, while
$\xi_M\rightarrow1$ for an increasingly elongated distribution.

After the unknown phase rotation $\delta$ and the uniform
$\mathbb Z_M$ twirl, the observed angular distribution is
\begin{equation}
    \overline h_{M,\delta}(\theta)
    =
    \frac1M
    \sum_{j=0}^{M-1}
    h_M\left(
        \theta-\delta-\frac{2\pi j}{M}
    \right).
\label{eq:heterodyne-twirled-angle}
\end{equation}
For even $M$, the first Fourier component surviving this average has
angular frequency $M$ and amplitude
\begin{equation}
    R_M=\xi_M^{M/2}.
\end{equation}
 Up to an irrelevant fixed shift of the
angular origin, the twirled angular distribution is therefore
\begin{equation}
    h_{M,\delta}(\theta)
    =
    \frac{1-R_M^2}
    {2\pi\left[
        1-2R_M\cos\big(M(\theta-\delta)\big)+R_M^2
    \right]}\;.
\label{eq:wrapped-Cauchy-angle}
\end{equation}
Equation~\eqref{eq:wrapped-Cauchy-angle} is a wrapped-Cauchy
location family. We use only the angular coordinate of the heterodyne outcome. The resulting quantity is therefore a classical Fisher information, and hence a lower bound on the QFI. For the location family in
Eq.~\eqref{eq:wrapped-Cauchy-angle},
\begin{equation}
    \partial_\delta\log h_{M,\delta}(\theta)
    =
    \frac{
        2MR_M\sin[M(\theta-\delta)]
    }{
        1-2R_M\cos[M(\theta-\delta)]+R_M^2
    }.
\end{equation}
Direct integration gives the CFI for $\delta$ as
\begin{equation}
\begin{split}
    \mathcal J_{\rm het}
    &:=
    \int_0^{2\pi}
    h_{M,\delta}(\theta)
    \left[
        \partial_\delta\log h_{M,\delta}(\theta)
    \right]^2
    \mathrm d\theta
\\
    &=
    \frac{2M^2R_M^2}{(1-R_M^2)^2}
    =
    \frac{2M^2\xi_M^M}{(1-\xi_M^M)^2}.
\end{split}
\label{eq:exact-heterodyne-CFI}
\end{equation}
Since this is the Fisher information of an explicit measurement,
\begin{equation}
    \QFI^{\rm SV}
    \ge
    \mathcal J_{\rm het}.
\end{equation}

For $\bar{n}=\sinh^2r_M\to\infty$,
\begin{equation}
    a_M=A+O(\bar{n}^{-1}),
    \qquad
    b_M=4\eta \bar{n}+O(1).
\end{equation}
From Eq.~\eqref{eq:heterodyne-xi},
\begin{equation}
    -\log\xi_M
    =
    \log\left(
        \frac{\sqrt{b_M}+\sqrt{a_M}}
             {\sqrt{b_M}-\sqrt{a_M}}
    \right)
    =
    2\,\operatorname{arctanh}
    \sqrt{\frac{a_M}{b_M}}.
\label{eq:xi-arctanh}
\end{equation}
and so (using $\operatorname{arctanh}x=x+O(x^3)$)
\begin{equation}
    -\log\xi_M
    =
    \sqrt{\frac{A}{\eta \bar{n}}}
    +O(\bar{n}^{-3/2}).
\label{eq:xi-large-energy}
\end{equation}
Consequently,
\begin{equation}
    \xi_M^M
    =
    \exp\left[
        -\sqrt{\frac{A}{\eta}}\,
        \frac{M}{\sqrt{\bar{n}}}
        +O\left(\frac{M}{\bar{n}^{3/2}}\right)
    \right].
\label{eq:xiM-general-scaling}
\end{equation}

The two assumptions in Eq.~\eqref{eq:loss-energy-window} imply
\begin{equation}
    \sqrt{\log M}
    \ll
    \frac{M}{\sqrt{\bar{n}}}
    \ll
    \log M.
\end{equation}
Hence
\begin{equation}
    \xi_M^M
    =
    \exp\left[
        -\sqrt{\frac{A}{\eta}}\,
        \frac{M}{\sqrt{\bar n}}
        (1+o(1))
    \right].
\end{equation}
In particular,
\begin{equation}
    \xi_M^M\longrightarrow0,
    \qquad
    -\log(\xi_M^M)=o(\log M).
\end{equation}
Thus $\xi_M^M$ vanishes subpolynomially in $M$: it tends to zero,
but more slowly than $M^{-k}$ for every fixed $k>0$.
Substitution into Eq.~\eqref{eq:exact-heterodyne-CFI} gives
\begin{equation}
\label{eq:sqvachetLB}
    \mathcal J_{\rm het,M}^{\rm SV}
    =
    2M^2\xi_M^M(1+o(1))
    =
    M^{2-o(1)}.
\end{equation}

\subsubsection{Classical Upper Bound}

For the classical upper bound, we consider coherent states subject only to pure loss. This is because the thermal-loss channel can be
decomposed as a pure-loss channel \(\mathcal L_\eta\), followed by
a parameter-independent phase-insensitive additive-noise channel
\(\mathcal N_{\eta,\bar n_{\rm th}}\): $\mathcal G_{\eta,\bar n_{\rm th}}
=\mathcal N_{\eta,\bar n_{\rm th}}\circ\mathcal L_\eta$ as in Appendix~\ref{apen:classicalnoisybound}. As the QFI is non-increasing under $\mathcal N_{\eta,\bar n_{\rm th}}$, the QFI after $\mathcal G_{\eta,\bar n_{\rm th}}$ is at most the QFI after $L_\eta$, and so it suffices to drop $\mathcal N_{\eta,\bar n_{\rm th}}$ here.

After pure loss, a coherent state of input energy $\bar{n}$ is a
coherent state whose photon-number distribution is Poisson with mean
\begin{equation}
    \mu_M:=\eta \bar{n}.
\end{equation}
Let $N\sim{\rm Pois}(\mu_M)$ and let
$R:=N\bmod M$. For a twirled coherent state, Eq.~\eqref{eq:QGIgeneralVar}
\begin{equation}
    \QFI^{\rm C}
    =
    4\mathbb E\left[
        \operatorname{Var}(N\mid R)
    \right].
\label{eq:coherent-conditional-variance-high-energy}
\end{equation}

For every residue $r\in\{0,\ldots,M-1\}$, let $a_r$ be an integer
congruent to $r$ modulo $M$ that is closest to $\mu_M$. Since the
conditional mean minimizes conditional mean-squared error,
\begin{equation}
\begin{split}
    \frac{\QFI^{\rm C}}{4}
    &\le
    \mathbb E\left[
        \big(N-a_{N\bmod M}\big)^2
    \right].
\end{split}
\end{equation}
If $N\neq a_{N\bmod M}$, then
\begin{equation}
    |N-\mu_M|\ge\frac M2,
    \qquad
    |N-a_{N\bmod M}|
    \le2|N-\mu_M|.
\end{equation}
Therefore
\begin{equation}
    \QFI^{\rm C}
    \le
    16\,
    \mathbb E\left[
        (N-\mu_M)^2
        \mathbf 1_{\{|N-\mu_M|\ge M/2\}}
    \right].
\label{eq:coherent-tail-second-moment}
\end{equation}

For sufficiently large $M$, the assumed energy window gives
$\mu_M\ge M/2$. The Poisson Chernoff bound then yields
\begin{equation}
    \Pr\left(|N-\mu_M|\ge\frac M2\right)
    \le
    2\exp\left(-\frac{M^2}{12\mu_M}\right).
\end{equation}
Furthermore,
\begin{equation}
    \mathbb E[(N-\mu_M)^4]
    =
    \mu_M+3\mu_M^2
    \le4\mu_M^2
\end{equation}
for all sufficiently large $M$. By Cauchy--Schwarz,
Eq.~\eqref{eq:coherent-tail-second-moment} gives
\begin{equation}
    \QFI^{\rm C}
    \le
    32\sqrt2\,\mu_M
    \exp\left(-\frac{M^2}{24\mu_M}\right)\;.
\label{eq:coherent-high-energy-upper}
\end{equation}

Every coherent state with input energy at most $\bar{n}$ is obtained
from the coherent state of input energy $\bar{n}$ by an additional
parameter-independent pure-loss channel. QFI monotonicity and
convexity therefore extend Eq.~\eqref{eq:coherent-high-energy-upper}
to the full family of classical states defined in the main text:
\begin{equation}
    \sup_{\rho_{\rm cl}}
    \QFI( \mathcal G_{\eta,\bar n_{\rm th}}
        \circ\mathcal E_{\delta,M}
        (\rho_{\rm cl}))
    \le
    32\sqrt2\,\eta \bar{n}
    \exp\left(-\frac{M^2}{24\eta \bar{n}}\right).
\label{eq:classical-high-energy-upper}
\end{equation}
Since
\begin{equation}
    \bar{n}=o\left(\frac{M^2}{\log M}\right),
\end{equation}
we have
\begin{equation}
    \frac{M^2}{\bar{n}}=\omega(\log M).
\end{equation}
The polynomial prefactor in
Eq.~\eqref{eq:classical-high-energy-upper} is therefore dominated by
the exponential term, proving
\begin{equation}
    \sup_{\rho_{\rm cl}}
    \QFI(G_{\eta,\bar n_{\rm th}}
        \circ\mathcal E_{\delta,M}
        (\rho_{\rm cl}))
    =
    M^{-\omega(1)}.
\end{equation}

Combining with Eq.~\eqref{eq:sqvachetLB} we arrive at
\begin{equation}
    \frac{
        \mathcal J_{\rm het,M}^{\rm SV}
    }{
        \displaystyle
        \sup_{\rho_{\rm cl}}
        \QFI(G_{\eta,\bar n_{\rm th}}
        \circ\mathcal E_{\delta,M}
        (\rho_{\rm cl}))
    }
   \ge
    \exp\left[
        \Omega\left(\frac{M^2}{\bar n}\right)
    \right]
    = 
    M^{\omega(1)}.
\end{equation}

\subsection{A State-Independent No-Go Theorem for Phase Diffusion}
\label{sec:phase-diffusion-no-go}

Finally we consider a phase diffusion channel, $\mathcal D_{\sigma}$ that acts on a density matrix as
\begin{equation}
    \mathcal D_\sigma(\ket n\!\bra m)
    =
    e^{-\sigma^2(n-m)^2/2}
    \ket n\!\bra m.
\end{equation}
In a completely general model $\sigma$ may be a function of $M$, which we denote as $\sigma_M$. Assuming $\sigma_M>0$, we now bound the QFI for any input state under this channel.

Let $g_\sigma(\phi)$ denote the wrapped Gaussian phase distribution,
\begin{equation}
    g_\sigma(\phi)
    =
    \frac{1}{2\pi}
    \sum_{k\in\mathbb Z}
    e^{-k^2\sigma^2/2}e^{ik\phi}.
\label{eq:wrapped-Gaussian}
\end{equation}
The uniform phase twirl corresponds to an additional
random phase taking the values $2\pi j/M$, $j=0,\ldots,M-1$.
Hence the probability density of the total random phase is
\begin{equation}
    f_{M,\sigma}(\phi)
    :=
    \frac{1}{M}
    \sum_{j=0}^{M-1}
    g_\sigma\left(\phi-\frac{2\pi j}{M}\right).
\label{eq:twirled-phase-density-def}
\end{equation}
Substituting Eq.~\eqref{eq:wrapped-Gaussian} gives
\begin{equation}
\begin{split}
    f_{M,\sigma}(\phi)
    &=
    \frac{1}{2\pi}
    \sum_{k\in\mathbb Z}
    e^{-k^2\sigma^2/2}e^{ik\phi}
    \left[
        \frac{1}{M}
        \sum_{j=0}^{M-1}
        e^{-2\pi i jk/M}
    \right].
\end{split}
\end{equation}
The term in square brackets is equal to one when $k$ is a multiple
of $M$ and vanishes otherwise. Writing $k=\ell M$, we therefore obtain
\begin{equation}
\begin{split}
    f_{M,\sigma}(\phi)
    &=
    \frac{1}{2\pi}
    \sum_{\ell\in\mathbb Z}
    e^{-\ell^2M^2\sigma^2/2}
    e^{i\ell M\phi}
\\
    &=
    \frac{1}{2\pi}
    \left[
        1+
        2\sum_{\ell=1}^{\infty}
        e^{-\ell^2M^2\sigma^2/2}
        \cos(\ell M\phi)
    \right].
\end{split}
\label{eq:wrapped-twirled-phase-density}
\end{equation}

Since the discrete twirl, Gaussian phase diffusion, and unknown phase
rotation are all rotation operations they commute. Their
composition can therefore be written as
\begin{equation}
    \Phi_{\delta,M,\sigma}(\rho)
    =
    \int_0^{2\pi}
    f_{M,\sigma}(\phi-\delta)
    e^{-i\phi\hat n}\rho e^{i\phi\hat n}
    \,\mathrm d\phi.
\label{eq:phase-channel-random-unitary}
\end{equation}

Next introduce a classical register that stores the random phase:
\begin{equation}
    \widetilde\rho_\delta
    :=
    \int_0^{2\pi}
    f_{M,\sigma}(\phi-\delta)
    \ket{\phi}\!\bra{\phi}_{\rm C}
    \otimes
    e^{-i\phi\hat n}\rho e^{i\phi\hat n}
    \,\mathrm d\phi.
\label{eq:phase-labelled-extension}
\end{equation}
Tracing out the classical register gives
$\Phi_{\delta,M,\sigma}(\rho)$. The monotonicity of the QFI under partial trace
therefore implies
\begin{equation}
    \QFI
    \left[
        \Phi_{\delta,M,\sigma}(\rho)
    \right]
    \le
    \QFI(\widetilde\rho_\delta).
\end{equation}
Conditioned on $\phi$, the quantum state in
Eq.~\eqref{eq:phase-labelled-extension} is independent of $\delta$.
The QFI of the classical--quantum state is consequently just the
classical Fisher information of the phase label:
\begin{equation}
    \QFI(\widetilde\rho_\delta)
    =
    \int_0^{2\pi}
    \frac{
        [\partial_\delta
        f_{M,\sigma}(\phi-\delta)]^2
    }{
        f_{M,\sigma}(\phi-\delta)
    }
    \,\mathrm d\phi
    =
    \mathcal I_{\rm cl}[f_{M,\sigma}].
\end{equation}
This is
independent of $\rho$.

Equation~\eqref{eq:wrapped-twirled-phase-density} gives
\begin{equation}
    f_{M,\sigma}(\phi)
    \ge
    \frac{1-2S_{M,\sigma}}{2\pi}\;,
\end{equation}
where
\begin{equation}
    S_{M,\sigma}
    :=
    \sum_{\ell=1}^{\infty}
    e^{-\ell^2M^2\sigma^2/2}\;.
\end{equation}
Moreover, orthogonality of the Fourier modes gives
\begin{equation}
\begin{split}
    \int_0^{2\pi}
    [f_{M,\sigma}'(\phi)]^2
    \,\mathrm d\phi
    &=
    \frac{M^2}{\pi}
    \sum_{\ell=1}^{\infty}
    \ell^2e^{-\ell^2M^2\sigma^2}.
\end{split}
\end{equation}
Therefore
\begin{equation}
\begin{split}
    \mathcal I_{\rm cl}[f_{M,\sigma}]
    &=
    \int_0^{2\pi}
    \frac{[f_{M,\sigma}'(\phi)]^2}
         {f_{M,\sigma}(\phi)}
    \,\mathrm d\phi
\\
    &\le
    \frac{
        2M^2
        \displaystyle\sum_{\ell=1}^{\infty}
        \ell^2e^{-\ell^2M^2\sigma^2}
    }{
        1-
        2\displaystyle\sum_{\ell=1}^{\infty}
        e^{-\ell^2M^2\sigma^2/2}
    }\;.
\end{split}
\end{equation}
If $M\sigma_M\to\infty$, both sums are dominated by their
$\ell=1$ terms:
\begin{equation}
    \sum_{\ell=1}^{\infty}
    e^{-\ell^2M^2\sigma_M^2/2}
    =
    e^{-M^2\sigma_M^2/2}(1+o(1)),
\end{equation}
and
\begin{equation}
    \sum_{\ell=1}^{\infty}
    \ell^2e^{-\ell^2M^2\sigma_M^2}
    =
    e^{-M^2\sigma_M^2}(1+o(1)).
\end{equation}
If $M\sigma_M\to\infty$, substitution gives
\begin{equation}
    \sup_\rho
    \QFI
    \left[
        \Phi_{\delta,M,\sigma_M}(\rho)
    \right]
    \le
    2M^2e^{-M^2\sigma_M^2}
    (1+o(1)).
\end{equation}
Consequently,
\begin{equation}
    M^2\sigma_M^2-2\log M\longrightarrow+\infty
\end{equation}
implies
\begin{equation}
    \sup_\rho
    \QFI
    \left[
        \Phi_{\delta,M,\sigma_M}(\rho)
    \right]
    \longrightarrow0.
\end{equation}
In particular, this holds for every fixed $\sigma_M=\sigma>0$,
independently of the input state's mean photon number or photon-number
variance.

Note that if
\begin{equation}
    \sigma_M^2=O(M^{-2}),
\end{equation}
the QFI can remain constant.

\section{Finite-sample separation under optical loss and thermal noise}
\label{apen:samplecomplexityloss}

Appendix~\ref{apen:samplecomplexity} established a finite-sample separation for the ideal phase-twirling channel. We now prove that the separation survives the experimentally relevant thermal-loss model studied in Appendix~\ref{apen:qfiloss}. The statistical task is unchanged: throughout this appendix we use the same constants $a,b,\zeta$ and local parameter set $\Theta_M=[-a/M,a/M]$ from Eq.~\eqref{eq:sample-constants}, and success means Eq.~\eqref{eq:sample-success}. We assume that the thermal-loss channel acts independently on each use and that the $\mathbb Z_M$ phase randomisation is also independent across uses.

Define the noisy single-use channel
\begin{equation}
    \Phi_{\delta,M}^{\eta,\bar n_{\rm th}}
    :=
    \mathcal G_{\eta,\bar n_{\rm th}}
    \circ
    \mathcal E_{\delta,M},
\label{eq:sampleloss-channel}
\end{equation}
where $0<\eta\le1$ is the transmissivity and $\bar n_{\rm th}\ge0$ is the mean thermal photon number in the unused port. Since $\mathcal G_{\eta,\bar n_{\rm th}}$ is phase covariant, its position relative to the phase twirl and the unknown phase rotation is immaterial. Let
\begin{equation}
    \nu_{\rm cl}^{\star,\eta,\bar n_{\rm th}}(M,\bar{n};a,b,\zeta)
\end{equation}
denote the minimum number of noisy channel uses required by the classical family of Eq.~\eqref{eq:sample-correlated-classical}, allowing arbitrary classical correlations and arbitrary collective output measurements. As in Appendix~\ref{apen:samplecomplexity}, the same notation also includes adaptive classical measure-and-prepare protocols with the constraint that each state is the family of classical states defined by Eq.~\eqref{eq:classicaldef}. Let
\begin{equation}
    \nu_{\rm SV}^{\star,\eta,\bar n_{\rm th}}(M,\bar{n};a,b,\zeta)
\end{equation}
be the corresponding minimum when each probe is a squeezed vacuum with input mean photon number $\bar{n}$.

In Appendix~\ref{apen:samplethermal1} we establish general bounds when including thermal loss. In Appendices~\ref{apen:samplenoisyfixedenergy} and \ref{sec:samplesthermalloss} we specialise this to constant mean photon number and mean photon number of $\bar{n}=M/\log(M)$ respectively. Finally in Appendix~\ref{sec:sampleloss-high-energy} we present a scalable advantage in the presence of loss.

\subsection{Nonasymptotic thermal-loss bounds}
\label{apen:samplethermal1}

It is useful to isolate a single quantity that controls a constructive noisy squeezed-vacuum receiver. Let
\begin{equation}
    \sigma_{\bar{n}}^{\eta,\bar n_{\rm th}}
    :=\mathcal G_{\eta,\bar n_{\rm th}}
      (\ket{r_{\bar{n}}}\!\bra{r_{\bar{n}}}),
    \qquad
    \sinh^2 r_{\bar{n}}=\bar{n},
\label{eq:sampleloss-sigma}
\end{equation}
and define
\begin{equation}
    c_{M,\bar{n}}^{\eta,\bar n_{\rm th}}
    :=\left|\bra0\sigma_{\bar{n}}^{\eta,\bar n_{\rm th}}\ket M\right|,
    \qquad
    s_{M,\bar{n}}^{\eta,\bar n_{\rm th}}
    :=\bra0\sigma_{\bar{n}}^{\eta,\bar n_{\rm th}}\ket0
      +\bra M\sigma_{\bar{n}}^{\eta,\bar n_{\rm th}}\ket M,
\label{eq:sampleloss-c-s}
\end{equation}
and the visibility (i.e. the thermal-loss analogue of Eq.~\eqref{eq:sample-sector-visibility})
\begin{equation}
    B_{M,\bar{n}}^{\eta,\bar n_{\rm th}}
    :=
    \frac{4\left(c_{M,\bar{n}}^{\eta,\bar n_{\rm th}}\right)^2}
         {s_{M,\bar{n}}^{\eta,\bar n_{\rm th}}}.
\label{eq:sampleloss-visibility}
\end{equation}
Positivity of the $\{\ket0,\ket M\}$ submatrix gives
$c_{M,\bar{n}}^{\eta,\bar n_{\rm th}}\le s_{M,\bar{n}}^{\eta,\bar n_{\rm th}}/2$, and therefore $0<B_{M,\bar{n}}^{\eta,\bar n_{\rm th}}\le1$ whenever $\bar{n}>0$ and $\eta>0$.

\begin{theorem}
\label{thm:sampleloss-finite}
Let $M\ge2$ be even, $\bar{n}>0$, $0<\eta\le1$, and $\bar n_{\rm th}\ge0$. Assume $\eta \bar{n}<\frac{M^2}{M+1}$, and define
\begin{equation}
    C_{a,b,\zeta}
    :=
    \frac{2+\frac43 b\cos a}{b^2\cos^2 a}
    \log\!\left(\frac2\zeta\right).
\label{eq:sampleloss-C}
\end{equation}
Then
\begin{equation}
    \nu_{\rm cl}^{\star,\eta,\bar n_{\rm th}}
    \ge
    \frac{(1-2\zeta)^2}{16b^2}
    \left(1-\eta \bar{n}\frac{M+1}{M^2}\right)
    e^{\eta \bar{n}}\frac{M!}{(\eta \bar{n})^M}
\label{eq:sampleloss-classical-lower}
\end{equation}
and
\begin{equation}
    \nu_{\rm SV}^{\star,\eta,\bar n_{\rm th}}
    \le
    \left\lceil
        \frac{C_{a,b,\zeta}}
        {B_{M,\bar{n}}^{\eta,\bar n_{\rm th}}}
    \right\rceil .
\label{eq:sampleloss-sv-upper-exact}
\end{equation}
Moreover, with $v_1,v_2$ as in Eq.~\eqref{eq:v_1v_2def} and
\begin{equation}
    D
    :=(1+v_1)(1+v_2),
\label{eq:sampleloss-D}
\end{equation}
the exact noisy coherence derived in Appendix~\ref{apen:QFIsqnoisybound} gives
\begin{equation}
\begin{split}
    \left(c_{M,\bar{n}}^{\eta,\bar n_{\rm th}}\right)^2
    & =
    \frac{4}{D}
    \binom{M}{M/2}
    \left(
        \frac{2\eta\sqrt{\bar{n}(1+\bar{n})}}
             {D}
    \right)^M,\\
    B_{M,\bar{n}}^{\eta,\bar n_{\rm th}}
    &\ge
    \frac{16}{D}
    \binom{M}{M/2}
    \left(
        \frac{2\eta\sqrt{\bar{n}(1+\bar{n})}}
             {D}
    \right)^M.
\end{split}
\label{eq:sampleloss-coherence-visibility}
\end{equation}
Consequently Eq.~\eqref{eq:sampleloss-sv-upper-exact} has the fully explicit form
\begin{equation}
    \nu_{\rm SV}^{\star,\eta,\bar n_{\rm th}}
    \le
    \left\lceil
    \frac{C_{a,b,\zeta}D}
         {16\binom{M}{M/2}}
    \left(
        \frac{D}
             {2\eta\sqrt{\bar{n}(1+\bar{n})}}
    \right)^M
    \right\rceil .
\label{eq:sampleloss-sv-upper-explicit}
\end{equation}
\end{theorem}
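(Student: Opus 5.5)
The plan mirrors the proof of Theorem~\ref{thm:sample-finite}, replacing the ideal channel by $\Phi_{\delta,M}^{\eta,\bar n_{\rm th}}$ and using the results of Appendix~\ref{apen:qfiloss}.

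\textbf{Classical lower bound.} Appendix~\ref{apen:classicalnoisybound} gives, for every single-use probe in the classical family (provided $\eta\bar n<M^2/(M+1)$),
\[
\QFI\bigl[\Phi^{\eta,\bar n_{\rm th}}_{\delta,M}(\rho_{\rm cl})\bigr]\le j_{M,\eta\bar n}.
\]
This uses the decomposition $\mathcal G_{\eta,\bar n_{\rm th}}=\mathcal N_{\eta,\bar n_{\rm th}}\circ\mathcal L_\eta$, monotonicity under the parameter-independent map $\mathcal N$, and the fact that coherent energies $x\le\eta\bar n$ give $j_{M,x}\le j_{M,\eta\bar n}$. For classically correlated $\nu$-use probes we then apply convexity and additivity to get a total QFI of at most $\nu j_{M,\eta\bar n}$. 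For adaptive measure-and-prepare strategies we use the conditional Fisher information chain rule of Eq.~\eqref{eq:sample-adaptive-fisher}. Next we take the two hypotheses $\delta_\pm=\pm2b/M$. Helstrom's bound forces trace distance (respectively total variation) at least $1-2\zeta$, while the Bures-length bound gives at most $(2b/M)\sqrt{\nu j_{M,\eta\bar n}}$. Rearranging and substituting $j_{M,\eta\bar n}$ yields Eq.~\eqref{eq:sampleloss-classical-lower}.

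\textbf{Squeezed-vacuum upper bound.} We use the single pair $\{\ket0,\ket M\}$ with the two projectors $\ket{\pm_0}$ from Eq.~\eqref{eq:pairwise-measurement-states}, where $\chi_0$ is the phase of $\bra M\sigma\ket0$, together with the complementary projector $\Pi_\perp$. Because the twirl preserves $\rho_{0M}$ up to the factor $e^{-iM\delta}$, the outcome probabilities are
\[
P_\delta(\pm)=\frac{s}{2}\pm c\sin(M\delta),
\]
exactly as in Eq.~\eqref{eq:pairwise-outcome-probabilities}. Define the statistic $Z=\pm g$ with $g=2c/s\in[0,1]$ (by positivity), and $Z=0$ for $\perp$. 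Then $\mathbb E_\delta Z=B\sin(M\delta)$ and $\mathbb E_\delta Z^2=B$, where
\[
B=\frac{4c^2}{s}
\]
is the visibility of Eq.~\eqref{eq:sampleloss-visibility}. We then reuse verbatim the clipped arcsine estimator, the Lipschitz and non-expansiveness argument, and Bernstein's inequality with $t=bB\cos a$. This gives failure probability at most $2\exp[-\nu b^2\cos^2a\,B/(2+\tfrac43 b\cos a)]$, hence Eq.~\eqref{eq:sampleloss-sv-upper-exact}.

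\textbf{Explicit form.} Eq.~\eqref{eq:noisy-rho0M} gives $c^2=|\rho^{\rm SV}_{0M}|^2$, and $v_2-v_1=4\eta\sqrt{\bar n(1+\bar n)}$ gives the first line of Eq.~\eqref{eq:sampleloss-coherence-visibility}. For the visibility, $s$ is a sum of two diagonal entries of a density matrix, so $s\le1$ and therefore $B\ge4c^2$, which is the stated bound. Inverting this gives Eq.~\eqref{eq:sampleloss-sv-upper-explicit}. The main obstacle is checking that the noisy Gaussian overlap formula, Eq.~\eqref{eq:noisy-sv-coherent-overlap}, gives the coefficient of $\alpha^M$ exactly, with no cross terms involving $\alpha^*$ that could contaminate $\rho_{0M}$. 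This requires $m=0$, so only pure $\alpha^n$ powers contribute; the expansion of $\exp[\tfrac{v_1-v_2}{2D}\alpha^2]$ then produces the $\sqrt{M!}/(M/2)!$ factor. Since this was already established in Appendix~\ref{apen:QFIsqnoisybound}, I would invoke it directly.
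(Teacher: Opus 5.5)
Your proposal is correct and follows the paper's proof essentially step for step. The classical bound goes through the pure-loss reduction to $j_{M,\eta\bar n}$ plus the two-point Helstrom/Bures-length argument (with the martingale-score version for adaptive protocols). The squeezed-vacuum bound uses the single $\{\ket0,\ket M\}$ receiver, the statistic $Z=\pm 2c/s$, the clipped-arcsine estimator with Bernstein's inequality, and $s\le1$ to reach the explicit form. Your phase convention differs from the paper's only by relabelling, and your observation that only the $(\alpha^*)^0$ terms of the Gaussian generating function contribute is exactly what justifies Eq.~\eqref{eq:noisy-rho0M}.
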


\begin{proof}
We begin with the classical lower bound. Appendix~\ref{apen:qfiloss} proves that a coherent input with $|\alpha|^2\le \bar{n}$ obeys
\begin{equation}
    \QFI\!\left[
        \Phi_{\delta,M}^{\eta,\bar n_{\rm th}}
        (\ket\alpha\!\bra\alpha)
    \right]
    \le j_{M,\eta \bar{n}}:=
    \frac{4M^2e^{-\eta \bar{n}}(\eta \bar{n})^M}
    {M!\left[1-\eta \bar{n}(M+1)/M^2\right]}\;,
\label{eq:sampleloss-single-classical-qfi}
\end{equation}
where $j_{M,\eta \bar{n}}$ is introduced in Eq.~\eqref{eq:sample-kappa-j}.
The thermal attenuator can be written as pure loss followed by parameter-independent phase-insensitive additive noise. Hence the bound is independent of $\bar n_{\rm th}$: dropping the additive-noise stage can only increase the QFI. Eq.~\eqref{eq:sampleloss-classical-lower} then follows in an identical manner to the proof of the classical bound in Theorem~\ref{thm:sample-finite}. For adaptive classical measure-and-prepare protocols, the conditional Fisher information in every round is at most $j_{M,\eta \bar{n}}$; the martingale-score argument of Eq.~\eqref{eq:sample-adaptive-fisher} therefore gives a total transcript Fisher information at most $\nu j_{M,\eta \bar{n}}$, and the same two-point testing argument applies.

We now construct the noisy squeezed-vacuum protocol. Write
\begin{equation}
    \bra0\sigma_{\bar{n}}^{\eta,\bar n_{\rm th}}\ket M
    =c_{M,\bar{n}}^{\eta,\bar n_{\rm th}}e^{i\chi_M}.
\end{equation}
Because the two Fock levels differ by exactly $M$, the $\mathbb Z_M$ twirl leaves this coherence unchanged. Define
\begin{equation}
    \ket{\pm_M}
    :=
    \frac{\ket0\pm i e^{-i\chi_M}\ket M}{\sqrt2},
    \qquad
    \Pi_\pm:=\ket{\pm_M}\!\bra{\pm_M},
    \qquad
    \Pi_\perp:=I-\Pi_+-\Pi_-.
\label{eq:sampleloss-povm}
\end{equation}
Up to interchanging the labels $+$ and $-$, the outcome probabilities are
\begin{equation}
    P_\delta(\pm)
    =
    \frac{s_{M,\bar{n}}^{\eta,\bar n_{\rm th}}}{2}
    \pm
    c_{M,\bar{n}}^{\eta,\bar n_{\rm th}}\sin(M\delta),
\label{eq:sampleloss-probabilities}
\end{equation}
whereas $P_\delta(\perp)=1-s_{M,\bar{n}}^{\eta,\bar n_{\rm th}}$ is independent of $\delta$. Put
\begin{equation}
    g_M:=
    \frac{2c_{M,\bar{n}}^{\eta,\bar n_{\rm th}}}
         {s_{M,\bar{n}}^{\eta,\bar n_{\rm th}}}
    \in[0,1]
\end{equation}
and assign the single-shot statistic
\begin{equation}
    Z:=
    \begin{cases}
        g_M,&+,\\
        -g_M,&-,\\
        0,&\perp.
    \end{cases}
\label{eq:sampleloss-statistic}
\end{equation}
Equations~\eqref{eq:sampleloss-visibility} and~\eqref{eq:sampleloss-probabilities} give
\begin{equation}
    \mathbb E_\delta[Z]
    =B_{M,\bar{n}}^{\eta,\bar n_{\rm th}}\sin(M\delta),
    \qquad
    \mathbb E_\delta[Z^2]
    =B_{M,\bar{n}}^{\eta,\bar n_{\rm th}},
    \qquad
    \operatorname{Var}_\delta(Z)
    \le B_{M,\bar{n}}^{\eta,\bar n_{\rm th}}.
\label{eq:sampleloss-moments}
\end{equation}
For $\overline Z_\nu=\nu^{-1}\sum_{k=1}^{\nu}Z_k$, use the estimator
\begin{equation}
    \widehat\delta_\nu
    :=
    \frac1M\arcsin\!\left[
        \operatorname{clip}\!\left(
            \frac{\overline Z_\nu}
                   {B_{M,\bar{n}}^{\eta,\bar n_{\rm th}}},
            -\sin a,\sin a
        \right)
    \right].
\label{eq:sampleloss-estimator}
\end{equation}
Exactly as in Eq.~\eqref{eq:sample-estimator-error}, projection is nonexpansive and $\arcsin$ is $(\cos a)^{-1}$-Lipschitz on the clipped interval, so
\begin{equation}
    |\widehat\delta_\nu-\delta|
    \le
    \frac{|\overline Z_\nu-\mathbb E_\delta Z|}
    {M B_{M,\bar{n}}^{\eta,\bar n_{\rm th}}\cos a}.
\end{equation}
Since $|Z-\mathbb E_\delta Z|\le2$, Bernstein's inequality with $t=bB_{M,\bar{n}}^{\eta,\bar n_{\rm th}}\cos a$ yields
\begin{equation}
    \sup_{\delta\in\Theta_M}
    \Pr_\delta\!\left(
        |\widehat\delta_\nu-\delta|>\frac bM
    \right)
    \le
    2\exp\!\left[
        -\frac{\nu b^2\cos^2a\,B_{M,\bar{n}}^{\eta,\bar n_{\rm th}}}
        {2+\frac43b\cos a}
    \right].
\label{eq:sampleloss-confidence}
\end{equation}
This proves Eq.~\eqref{eq:sampleloss-sv-upper-exact}. Finally, $s_{M,\bar{n}}^{\eta,\bar n_{\rm th}}\le1$ implies $B_{M,\bar{n}}^{\eta,\bar n_{\rm th}}\ge4(c_{M,\bar{n}}^{\eta,\bar n_{\rm th}})^2$. Substituting the exact Gaussian coherence from Eq.~\eqref{eq:noisy-rho0M} proves Eqs.~\eqref{eq:sampleloss-coherence-visibility} and~\eqref{eq:sampleloss-sv-upper-explicit}.
\end{proof}

\subsection{Fixed-energy factorial separation}
\label{apen:samplenoisyfixedenergy}

We recall the definition of $A$ from Eq.~\eqref{eq:sampleloss-A}.
 Using $\cosh(2r_{\bar{n}})=2\bar{n}+1$, Eq.~\eqref{eq:sampleloss-D} becomes
\begin{equation}
    D
    =
    4\eta A\bar{n}
    +(A+\eta)^2.
\label{eq:sampleloss-D-closed}
\end{equation}

\begin{corollary}
\label{cor:sampleloss-fixed-energy}
Fix $\bar{n}>0$, $0<\eta\le1$, and $\bar n_{\rm th}\ge0$ independently of $M$. Then, for fixed $a,b,\zeta$ satisfying Eq.~\eqref{eq:sample-constants},
\begin{equation}
    \nu_{\rm SV}^{\star,\eta,\bar n_{\rm th}}
    =
    O_{a,b,\zeta,\bar{n},\eta,\bar n_{\rm th}}\!\left[
        \sqrt M
        \left(
            \frac{D}
                 {4\eta\sqrt{\bar{n}(1+\bar{n})}}
        \right)^M
    \right],
\label{eq:sampleloss-fixed-sv}
\end{equation}
whereas
\begin{equation}
    \nu_{\rm cl}^{\star,\eta,\bar n_{\rm th}}
    =
    \Omega_{b,\zeta,\bar{n},\eta}\!\left[
        \sqrt M
        \left(
            \frac{M}{e\eta \bar{n}}
        \right)^M
    \right].
\label{eq:sampleloss-fixed-classical}
\end{equation}
Consequently,
\begin{equation}
    \frac{\nu_{\rm cl}^{\star,\eta,\bar n_{\rm th}}}
         {\nu_{\rm SV}^{\star,\eta,\bar n_{\rm th}}}
    =
    \Omega_{a,b,\zeta,\bar{n},\eta,\bar n_{\rm th}}\!\left[
        \left(
            \frac{4M}{eD}
            \sqrt{\frac{1+\bar{n}}{\bar{n}}}
        \right)^M
    \right].
\label{eq:sampleloss-fixed-ratio}
\end{equation}
Thus any fixed nonzero transmissivity and any fixed thermal occupation preserve a factorial, hence super-exponential, sampling separation. For $\eta=1$ and $\bar n_{\rm th}=0$, $D_{\bar{n}}^{1,0}=4(1+\bar{n})$ and Eq.~\eqref{eq:sampleloss-fixed-ratio} reduces to the ideal result in Eq.~\eqref{eq:sample-fixed-ratio}.
\end{corollary}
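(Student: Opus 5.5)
The corollary follows from Theorem~\ref{thm:sampleloss-finite} by asymptotic evaluation of its two explicit bounds at fixed $\bar n,\eta,\bar n_{\rm th}$; it parallels the ideal-case Corollary~\ref{cor:sample-fixed-energy}.

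\emph{Squeezed-vacuum bound.} Start from Eq.~\eqref{eq:sampleloss-sv-upper-explicit}. Since $\bar n,\eta,\bar n_{\rm th}$ are fixed, $D$ is the constant in Eq.~\eqref{eq:sampleloss-D-closed}, and $C_{a,b,\zeta}$ depends only on $(a,b,\zeta)$. Use the central binomial estimate $\binom{M}{M/2}\ge 2^M/\sqrt{2M}$, already used in Appendix~\ref{apen:QFIsqnoisybound}. It gives
\[
\frac{1}{\binom{M}{M/2}}\left(\frac{D}{2\eta\sqrt{\bar n(1+\bar n)}}\right)^M\le\sqrt{2M}\left(\frac{D}{4\eta\sqrt{\bar n(1+\bar n)}}\right)^M .
\]
The ceiling adds at most one, which is absorbed into the $O$, so Eq.~\eqref{eq:sampleloss-fixed-sv} follows. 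If the base happens to be below one, the bound is $O(1)$, which is still consistent with the stated $O(\cdot)$.

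\emph{Classical bound.} Start from Eq.~\eqref{eq:sampleloss-classical-lower}. For fixed $\eta\bar n$ and large $M$ we have $1-\eta\bar n(M+1)/M^2\ge 1/2$, and $e^{\eta\bar n}$ is a constant. Stirling's formula $M!=\sqrt{2\pi M}(M/e)^M(1+o(1))$ then gives $\nu_{\rm cl}^{\star}=\Omega[\sqrt M (M/(e\eta\bar n))^M]$. The hypothesis $\eta\bar n<M^2/(M+1)$ of the theorem holds automatically for large $M$.

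\emph{Ratio.} Divide the two bounds. The $\sqrt M$ factors cancel, and
\[
\frac{M}{e\eta\bar n}\cdot\frac{4\eta\sqrt{\bar n(1+\bar n)}}{D}=\frac{4M}{eD}\sqrt{\frac{1+\bar n}{\bar n}} ,
\]
which is Eq.~\eqref{eq:sampleloss-fixed-ratio}. Because $D$ is constant, the base grows linearly in $M$, so the ratio is factorial.

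\emph{Ideal-case check.} At $\eta=1$, $\bar n_{\rm th}=0$ we have $A=1$, so $D=4\bar n+4=4(1+\bar n)$. The base becomes $M/\big(e\sqrt{\bar n(1+\bar n)}\big)$, matching Eq.~\eqref{eq:sample-fixed-ratio}.

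The only delicate points are bookkeeping: the ceiling and the possibility that the squeezed-vacuum base is below one, and checking that the implicit constants depend only on the listed parameters. The binomial lower bound and Stirling's formula are routine.
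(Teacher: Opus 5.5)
Your proposal is correct and follows the paper's proof essentially step for step. You apply a central-binomial estimate to the squeezed-vacuum bound of Theorem~\ref{thm:sampleloss-finite}, apply Stirling's formula to the classical bound, and divide; using the non-asymptotic $\binom{M}{M/2}\ge 2^M/\sqrt{2M}$ in place of the paper's asymptotic $2^M\sqrt{2/(\pi M)}$ only changes constants.

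One small correction concerns the case you hedge against. It cannot occur, since $D=(1+v_1)(1+v_2)>v_2-v_1=4\eta\sqrt{\bar n(1+\bar n)}$, so the base always exceeds one and the ceiling is genuinely absorbed. Your claim that an $O(1)$ bound would be ``consistent'' in that case is in fact false: if the base were below one, then $\sqrt M(\cdot)^M\to0$ while $\nu\ge1$.
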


\begin{proof}
For fixed parameters, the central-binomial asymptotic
\begin{equation}
    \binom{M}{M/2}
    =2^M\sqrt{\frac{2}{\pi M}}\,[1+O(M^{-1})]
\end{equation}
turns Eq.~\eqref{eq:sampleloss-coherence-visibility} into
\begin{equation}
    B_{M,\bar{n}}^{\eta,\bar n_{\rm th}}
    \ge
    \Omega_{\bar{n},\eta,\bar n_{\rm th}}\!\left[
        M^{-1/2}
        \left(
            \frac{4\eta\sqrt{\bar{n}(1+\bar{n})}}
                 {D}
        \right)^M
    \right].
\end{equation}
Equation~\eqref{eq:sampleloss-sv-upper-exact} gives Eq.~\eqref{eq:sampleloss-fixed-sv}. Since $\eta \bar{n}(M+1)/M^2=O(M^{-1})$, Eq.~\eqref{eq:sampleloss-classical-lower} and Stirling's formula give Eq.~\eqref{eq:sampleloss-fixed-classical}. Dividing the lower classical bound by the constructive quantum upper bound proves Eq.~\eqref{eq:sampleloss-fixed-ratio}.
\end{proof}

\subsection{The $\bar{n}=M/\log M$ regime under fixed thermal loss}
\label{sec:samplesthermalloss}
The energy scaling used for the strongest ideal separation is also robust in an operational sense, although the simple single-coherence receiver no longer certifies a polynomial quantum cost when there is fixed attenuation.

\begin{corollary}
\label{cor:sampleloss-MlogM}
Fix $0<\eta\le1$ and $\bar n_{\rm th}\ge0$, and let
\begin{equation}
    \bar{n}_M=\frac{M}{\log M}.
\end{equation}
Then the explicit $\ket0\leftrightarrow\ket M$ receiver of Theorem~\ref{thm:sampleloss-finite} obeys
\begin{equation}
    \log \nu_{\rm SV}^{\star,\eta,\bar n_{\rm th}}
    \le
    M\log A
    +O_{a,b,\zeta,\eta,\bar n_{\rm th}}(\log M),
\label{eq:sampleloss-MlogM-sv}
\end{equation}
while every strategy admissible under Eq.~\eqref{eq:classicaldef} satisfies
\begin{equation}
\begin{split}
    \log \nu_{\rm cl}^{\star,\eta,\bar n_{\rm th}}
    \ge{}&
    M\log\log M
    -M(1+\log\eta)
    +\frac{\eta M}{\log M}\\
    &+\frac12\log M
    +O_{b,\zeta,\eta}(1).
\end{split}
\label{eq:sampleloss-MlogM-classical}
\end{equation}
Consequently,
\begin{equation}
    \frac{\nu_{\rm cl}^{\star,\eta,\bar n_{\rm th}}}
         {\nu_{\rm SV}^{\star,\eta,\bar n_{\rm th}}}
    \ge
    \exp\!\left[M\log\log M-O(M)\right].
\label{eq:sampleloss-MlogM-ratio}
\end{equation}
For any fixed nontrivial attenuation $\eta<1$, one has $A>1$, so Eq.~\eqref{eq:sampleloss-MlogM-sv} is an exponential \emph{upper bound for this explicit receiver}; it should not be interpreted as a lower bound on the optimal squeezed-vacuum sample complexity. The conclusion needed here is stronger operationally: even this simple receiver is enough to retain the super-exponential separation between squeezed vacuum and the classical family in Eq.~\eqref{eq:classicaldef}.
\end{corollary}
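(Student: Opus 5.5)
The plan is to specialise Theorem~\ref{thm:sampleloss-finite} to $\bar n_M=M/\log M$, treating the quantum upper bound and the classical lower bound separately, and then to divide them.

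\textbf{Squeezed-vacuum upper bound.} We start from the explicit bound Eq.~\eqref{eq:sampleloss-sv-upper-explicit}. Taking logarithms and using $\binom{M}{M/2}\ge 2^M/\sqrt{2M}$ gives
\[
\log\nu_{\rm SV}^{\star,\eta,\bar n_{\rm th}}\le \log C_{a,b,\zeta}+\log D+\tfrac12\log(2M)+M\log\!\left(\frac{D}{4\eta\sqrt{\bar n(1+\bar n)}}\right)+O(1).
\]
We then insert the closed form Eq.~\eqref{eq:sampleloss-D-closed}, $D=4\eta A\bar n+(A+\eta)^2$. Together with $\sqrt{\bar n(1+\bar n)}=\bar n(1+O(\bar n^{-1}))$, this shows that the ratio inside the last logarithm equals $A\,(1+O(1/\bar n))$. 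Since $M/\bar n=\log M$, we get $M\log(1+O(1/\bar n))=O(\log M)$. The remaining terms are each $O(\log M)$, because $\log D=O(\log M)$. This yields Eq.~\eqref{eq:sampleloss-MlogM-sv}; it is essentially the same computation as Eq.~\eqref{eq:apensqexpdecay}, with the inequality reversed.

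\textbf{Classical lower bound.} We take logarithms in Eq.~\eqref{eq:sampleloss-classical-lower} with $\eta\bar n=\eta M/\log M$. Since $\eta\bar n(M+1)/M^2=O(1/\log M)\to0$, the factor $(1-\eta\bar n(M+1)/M^2)$ contributes only $O(1)$. By Stirling's formula, $\log M!=M\log M-M+\tfrac12\log M+O(1)$. The remaining terms give $-M\log(\eta M/\log M)=-M\log M-M\log\eta+M\log\log M$, plus $\eta\bar n=\eta M/\log M$. Summing these pieces gives exactly Eq.~\eqref{eq:sampleloss-MlogM-classical}, with the constant $\log[(1-2\zeta)^2/(16b^2)]$ absorbed into $O_{b,\zeta,\eta}(1)$. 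The hypothesis $\eta\bar n<M^2/(M+1)$ of Theorem~\ref{thm:sampleloss-finite} holds trivially for large $M$.

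\textbf{Ratio.} Subtracting the two estimates gives
\[
\log\frac{\nu_{\rm cl}^{\star}}{\nu_{\rm SV}^{\star}}\ge M\log\log M-M(1+\log\eta+\log A)-O(\log M)=M\log\log M-O(M),
\]
since $\eta$, $\bar n_{\rm th}$ and hence $A$ are fixed. The final remark, that $A=1+(1-\eta)(1+2\bar n_{\rm th})>1$ for $\eta<1$, is immediate from Eq.~\eqref{eq:sampleloss-A}.

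The only step needing care is the uniformity of the $O(\log M)$ error in the quantum bound. Two things must be checked: that the $(A+\eta)^2$ correction in $D$ and the $1+\bar n$ correction combine into a relative error of order $1/\bar n$, and that this error, multiplied by $M$, is indeed $O(\log M)$. Both follow because $M/\bar n=\log M$ exactly; I expect no real obstacle beyond this bookkeeping.
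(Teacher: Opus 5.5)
Your proposal is correct and follows the paper's own route: you specialise the explicit receiver bound of Theorem~\ref{thm:sampleloss-finite} using $D=4\eta A\bar n+(A+\eta)^2$, so that the base of the $M$-th power is $A\,(1+O(1/\bar n))$ and $M/\bar n=\log M$ makes the error $O(\log M)$; you then apply Stirling's formula to Eq.~\eqref{eq:sampleloss-classical-lower} and subtract the two estimates. The only cosmetic difference is that you apply the inequality $\binom{M}{M/2}\ge 2^M/\sqrt{2M}$ directly to Eq.~\eqref{eq:sampleloss-sv-upper-explicit}, whereas the paper first lower-bounds $\log B_{M,\bar n}^{\eta,\bar n_{\rm th}}$ using the central-binomial asymptotic.
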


\begin{proof}
Set $L=\log M$ and $\bar{n}=M/L$. From Eq.~\eqref{eq:sampleloss-D-closed},
\begin{equation}
    D
    =4\eta A\bar{n}\,[1+O(\bar{n}^{-1})],
\end{equation}
while
\begin{equation}
    \sqrt{\bar{n}(1+\bar{n})}=\bar{n}[1+O(\bar{n}^{-1})].
\end{equation}
Hence
\begin{equation}
    \frac{4\eta\sqrt{\bar{n}(1+\bar{n})}}
         {D}
    =
   A^{-1}
    [1+O(\bar{n}^{-1})].
\label{eq:sampleloss-MlogM-base}
\end{equation}
Using the central-binomial asymptotic in Eq.~\eqref{eq:sampleloss-coherence-visibility} therefore gives
\begin{equation}
    \log B_{M,\bar{n}}^{\eta,\bar n_{\rm th}}
    \ge
    -M\log A-O(\log M),
\end{equation}
where $M/\bar{n}=\log M$. Equation~\eqref{eq:sampleloss-sv-upper-exact} proves Eq.~\eqref{eq:sampleloss-MlogM-sv}.

For the classical family, Eq.~\eqref{eq:sampleloss-classical-lower} gives
\begin{equation}
    \nu_{\rm cl}^{\star,\eta,\bar n_{\rm th}}
    \ge
    \frac{(1-2\zeta)^2}{16b^2}
    \left(1-O(L^{-1})\right)
    e^{\eta M/L}
    M!
    \left(\frac{L}{\eta M}\right)^M.
\end{equation}
Stirling's formula yields Eq.~\eqref{eq:sampleloss-MlogM-classical}. Subtracting Eq.~\eqref{eq:sampleloss-MlogM-sv} proves Eq.~\eqref{eq:sampleloss-MlogM-ratio}.
\end{proof}

\subsection{Subpolynomial squeezed vacuum samples versus superpolynomial coherent state samples}
\label{sec:sampleloss-high-energy}

We finally give a stronger scalable statement using a standard heterodyne receiver. This is the finite-sample counterpart of the high-energy pure-loss calculation in Appendix~\ref{sec:high-energy-loss-separation}, and the argument extends directly to fixed thermal noise.

\begin{theorem}
\label{thm:sampleloss-high-energy}
Let $M=2^m\to\infty$, fix $0<\eta\le1$ and $\bar n_{\rm th}\ge0$, and let the squeezed-vacuum input energy $\bar{n}={\bar{n}}_M$ satisfy
\begin{equation}
    \frac{\bar{n}_M(\log M)^2}{M^2}\longrightarrow\infty,
    \qquad
    \frac{\bar{n}_M\log M}{M^2}\longrightarrow0.
\label{eq:sampleloss-high-energy-window}
\end{equation}
Then there is an explicit heterodyne-angle estimator satisfying Eq.~\eqref{eq:sample-success} with
\begin{equation}
    \nu_{\rm SV}^{\star,\eta,\bar n_{\rm th}}
    \le
    \exp\!\left[
        \sqrt{\frac{A}{\eta}}
        \frac{M}{\sqrt{\bar{n}_M}}
        (1+o(1))
    \right]
    =M^{o(1)}.
\label{eq:sampleloss-high-energy-sv}
\end{equation}
Every protocol from the family of Eq.~\eqref{eq:classicaldef}, including correlated probes with collective measurements and adaptive measure-and-prepare strategies, satisfies
\begin{equation}
    \nu_{\rm cl}^{\star,\eta,\bar n_{\rm th}}
    \ge
    \frac{(1-2\zeta)^2}{128\sqrt2\,b^2}
    \frac{M^2}{\eta \bar{n}_M}
    \exp\!\left(
        \frac{M^2}{24\eta \bar{n}_M}
    \right)
    =M^{\omega(1)}.
\label{eq:sampleloss-high-energy-classical}
\end{equation}
In particular,
\begin{equation}
    \frac{\nu_{\rm cl}^{\star,\eta,\bar n_{\rm th}}}
         {\nu_{\rm SV}^{\star,\eta,\bar n_{\rm th}}}
    \ge
\exp\!\left[\Omega\!\left(\frac{M^2}{\bar{n}_M}\right)\right]
    =M^{\omega(1)}.
\label{eq:sampleloss-high-energy-ratio}
\end{equation}
\end{theorem}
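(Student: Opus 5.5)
The proof splits into a constructive squeezed-vacuum upper bound, obtained from the heterodyne-angle statistics of Appendix~\ref{sec:high-energy-loss-separation}, and a classical lower bound, obtained from the QFI bound Eq.~\eqref{eq:classical-high-energy-upper} combined with the two-point testing argument of Theorem~\ref{thm:sample-finite}.

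\textbf{Classical lower bound.} The proof of Theorem~\ref{thm:sample-finite} uses only a single-use QFI bound $j$, which it turns into
\[
\nu\ge (1-2\zeta)^2M^2/(4b^2 j).
\]
This holds for correlated probes via convexity and additivity, and for adaptive strategies via the martingale-score bound Eq.~\eqref{eq:sample-adaptive-fisher}. I will reuse that argument verbatim with
\[
j=32\sqrt2\,\eta\bar n_M\exp\!\left(-\frac{M^2}{24\eta\bar n_M}\right)
\]
from Eq.~\eqref{eq:classical-high-energy-upper}. That bound holds for every coherent state of energy at most $\bar n_M$ after the thermal attenuator, and hence for the whole family in Eq.~\eqref{eq:classicaldef}. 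Substituting gives exactly Eq.~\eqref{eq:sampleloss-high-energy-classical}. Since $M^2/\bar n_M=\omega(\log M)$ by the upper end of the window in Eq.~\eqref{eq:sampleloss-high-energy-window}, the exponential factor dominates and the bound is $M^{\omega(1)}$.

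\textbf{Squeezed-vacuum receiver.} Each channel use outputs a heterodyne angle $\theta_k$ drawn from the wrapped-Cauchy family Eq.~\eqref{eq:wrapped-Cauchy-angle}, with $R_M=\xi_M^{M/2}$ and an angular origin fixed by the known squeezing phase. I will use the statistic
\[
Z_k:=\sin\!\big(M\theta_k\big).
\]
For a wrapped Cauchy distribution, the first Fourier coefficient gives $\mathbb E_\delta[e^{iM\theta}]=R_Me^{iM\delta}$, so $\mathbb E_\delta Z=R_M\sin(M\delta)$ and $|Z|\le1$. The estimator mirrors Eq.~\eqref{eq:sample-explicit-estimator}:
\[
\widehat\delta=M^{-1}\arcsin\big[\operatorname{clip}(\overline Z_\nu/R_M,-\sin a,\sin a)\big].
\]
Nonexpansiveness of the clip and the $1/\cos a$ Lipschitz bound on $\arcsin$ then reduce success to $|\overline Z_\nu-\mathbb E Z|\le bR_M\cos a$. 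Hoeffding's inequality (or Bernstein with variance at most $1$) gives a failure probability at most $\zeta$ once
\[
\nu\ge 2\log(2/\zeta)\big/\big(b^2\cos^2a\,R_M^2\big).
\]
This sample count is $O(\xi_M^{-M})$.

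\textbf{Asymptotics.} Eq.~\eqref{eq:xiM-general-scaling} gives
\[
\xi_M^{-M}=\exp\!\left[\sqrt{A/\eta}\,\frac{M}{\sqrt{\bar n_M}}(1+o(1))\right].
\]
The lower end of the window gives $M/\sqrt{\bar n_M}=o(\log M)$, so this is $M^{o(1)}$. It also gives $M/\sqrt{\bar n_M}\to\infty$ (from $\sqrt{\log M}\ll M/\sqrt{\bar n_M}$), so the constant prefactor $\log(2/\zeta)/(b^2\cos^2 a)$ is absorbed into the $(1+o(1))$ in the exponent. This yields Eq.~\eqref{eq:sampleloss-high-energy-sv}.

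\textbf{Ratio.} Dividing the two bounds, the quantum cost is $\exp[o(\log M)]$ and the classical cost is $\exp[\Omega(M^2/\bar n_M)]$ with $M^2/\bar n_M=\omega(\log M)$, which gives Eq.~\eqref{eq:sampleloss-high-energy-ratio}.

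\textbf{Main obstacle.} I expect the main obstacle to be the estimator bookkeeping. The quantum sample count scales as $R_M^{-2}=\xi_M^{-M}$ rather than $R_M^{-1}$, and the factor of two must be checked against the stated exponent in Eq.~\eqref{eq:sampleloss-high-energy-sv}. I also need to verify that the phase origin of $h_{M,\delta}$ is $\delta$-independent, so that the calibration of $Z$ is exact.
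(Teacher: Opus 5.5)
Your proposal is correct and follows the paper's proof essentially step for step. On the quantum side you use the same wrapped-Cauchy heterodyne-angle statistic $\sin(M\theta)$ (the paper writes it with a known offset $\theta_0$), clipped-arcsin estimator and Hoeffding bound; on the classical side you use the same Poisson-tail QFI bound Eq.~\eqref{eq:classical-high-energy-upper} fed into the two-point testing argument. The bookkeeping worries you flag resolve cleanly: $R_M^{-2}=\xi_M^{-M}=\exp[\sqrt{A/\eta}\,M/\sqrt{\bar n_M}\,(1+o(1))]$ is exactly the stated exponent, and for $M=2^m$ with $m\ge2$ the sign $(-1)^{M/2}$ of the surviving Fourier coefficient is $+1$, so the calibration $\mathbb E_\delta[\sin(M\theta)]=R_M\sin(M\delta)$ is $\delta$-independent.
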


\begin{proof}
Let $\sinh^2r_M=\bar{n}_M$. After thermal loss the squeezed vacuum is a centered Gaussian state with covariance entries given by Eq.~\eqref{eq:v_1v_2def}. Recall the definition of $a_M$, $b_M$, $h_M(\theta)$, $\xi_M$, and $R_M$ from Appendix~\ref{sec:high-energy-loss-separation}. We use the same definition here.

The $\mathbb Z_M$ twirl removes all Fourier components below angular frequency $M$. Up to a known, parameter-independent shift $\theta_0$ of the angular origin that absorbs the sign of the surviving Fourier coefficient, the twirled angular density is the Poisson kernel
\begin{equation}
    h_{M,\delta}(\theta)
    =
    \frac{1-R_M^2}
    {2\pi\left[
        1-2R_M\cos(M(\theta-\delta-\theta_0))+R_M^2
    \right]}.
\label{eq:sampleloss-high-energy-poisson}
\end{equation}
The first trigonometric moment is therefore
\begin{equation}
    \mathbb E_\delta\!\left[
        \sin\big(M(\Theta-\theta_0)\big)
    \right]
    =R_M\sin(M\delta).
\label{eq:sampleloss-high-energy-moment}
\end{equation}
For independent heterodyne angles $\Theta_1,\ldots,\Theta_\nu$, define
\begin{equation}
    Y_k:=\sin\big(M(\Theta_k-\theta_0)\big),
    \qquad
    \overline Y_\nu:=\frac1\nu\sum_{k=1}^{\nu}Y_k,
\end{equation}
and use the clipped estimator
\begin{equation}
    \widehat\delta_\nu
    :=
    \frac1M\arcsin\!\left[
        \operatorname{clip}\!\left(
            \frac{\overline Y_\nu}{R_M},
            -\sin a,\sin a
        \right)
    \right].
\label{eq:sampleloss-high-energy-estimator}
\end{equation}
Since $|Y_k|\le1$, Hoeffding's inequality gives
\begin{equation}
    \sup_{\delta\in\Theta_M}
    \Pr_\delta\!\left(
        |\widehat\delta_\nu-\delta|>\frac bM
    \right)
    \le
    2\exp\!\left(
        -\frac{\nu b^2R_M^2\cos^2a}{2}
    \right).
\label{eq:sampleloss-high-energy-hoeffding}
\end{equation}
Hence
\begin{equation}
    \nu_{\rm SV}^{\star,\eta,\bar n_{\rm th}}
    \le
    \left\lceil
        \frac{2}{b^2\cos^2a}\,
        \xi_M^{-M}
        \log\!\left(\frac2\zeta\right)
    \right\rceil.
\label{eq:sampleloss-high-energy-sv-exact}
\end{equation}

For $\bar{n}_M\to\infty$,
\begin{equation}
    a_M=A+O(\bar{n}_M^{-1}),
    \qquad
    b_M=4\eta \bar{n}_M+O(1),
\end{equation}
so
\begin{equation}
\begin{split}
    -\log\xi_M
    &=2\,\operatorname{arctanh}\sqrt{\frac{a_M}{b_M}}\\
    &=\sqrt{\frac{A}{\eta \bar{n}_M}}
      \,[1+o(1)].
\end{split}
\label{eq:sampleloss-high-energy-xi-asymptotic}
\end{equation}
The window~\eqref{eq:sampleloss-high-energy-window} implies
\begin{equation}
    \sqrt{\log M}\ll\frac{M}{\sqrt{\bar{n}_M}}\ll\log M.
\end{equation}
Substitution into Eq.~\eqref{eq:sampleloss-high-energy-sv-exact} proves Eq.~\eqref{eq:sampleloss-high-energy-sv}.

For the classical family, the additional thermal noise is parameter independent, so its QFI is no larger than after pure loss alone. The Poisson-tail argument of Eq.~\eqref{eq:classical-high-energy-upper} therefore applies unchanged and gives, for all sufficiently large $M$,
\begin{equation}
    \sup_{\rho_{\rm cl}}
    \QFI\!\left[
        \Phi_{\delta,M}^{\eta,\bar n_{\rm th}}(\rho_{\rm cl})
    \right]
    \le
    32\sqrt2\,\eta \bar{n}_M
    \exp\!\left(-\frac{M^2}{24\eta \bar{n}_M}\right).
\label{eq:sampleloss-high-energy-qfi-classical}
\end{equation}
Applying the same correlated-probe and two-point testing argument as in Eqs.~\eqref{eq:sample-coherent-qfi-upper}--\eqref{eq:sample-trace-distance-upper} yields Eq.~\eqref{eq:sampleloss-high-energy-classical}. Finally, writing $x_M=M/\sqrt{\bar{n}_M}$, the logarithm of the ratio between the classical lower bound and the heterodyne upper bound is at least
\begin{equation}
    \frac{x_M^2}{24\eta}
    -\sqrt{\frac{A}{\eta}}\,x_M[1+o(1)]
    +O(\log x_M).
\end{equation}
Since $x_M\to\infty$, the quadratic term dominates, proving Eq.~\eqref{eq:sampleloss-high-energy-ratio}.
\end{proof}

\section{Exact local SLD measurement for squeezed vacuum}
\label{sec:additional-null-sv}

Let
\begin{equation}
    \rho_{\rm SV}
    =
    S(re^{\mathrm i\phi_s})\ket0\!\bra0
    S^\dagger(re^{\mathrm i\phi_s}),
    \qquad r>0,
\end{equation}
and let
\begin{equation}
    \rho_0:=\mathcal T_M(\rho_{\rm SV}),
    \qquad
    \rho_\delta
    :=
    e^{\mathrm i\delta\hat n}
    \rho_0
    e^{-\mathrm i\delta\hat n}.
\end{equation}
As shown in Appendix~\ref{apen:mary1}, for even $M$ the twirled
state has the rank-one direct-sum decomposition
\begin{equation}
    \rho_0
    =
    \bigoplus_{r_0\in\mathcal R_M}
    \lambda_{r_0}
    \ket{\widetilde u_{r_0}}
    \!\bra{\widetilde u_{r_0}},
    \qquad
    \mathcal R_M:=\{0,2,\ldots,M-2\},
\label{eq:null-direct-sum}
\end{equation}
where
\begin{equation}
    \ket{\widetilde u_{r_0}}
    =
    \frac{1}{\sqrt{\lambda_{r_0}}}
    \sum_{p=0}^{\infty}
    c_{r_0+pM}\ket{r_0+pM},
    \qquad
    \lambda_{r_0}
    =
    \sum_{p=0}^{\infty}|c_{r_0+pM}|^2.
\end{equation}
For $r>0$, every sector in $\mathcal R_M$ is occupied.

For each occupied sector define
\begin{equation}
    \mu_{r_0}
    :=
    \bra{\widetilde u_{r_0}}
    \hat n
    \ket{\widetilde u_{r_0}},
    \qquad
    \sigma_{r_0}^2
    :=
    \operatorname{Var}_{\widetilde u_{r_0}}(\hat n),
\label{eq:sld-sector-moments}
\end{equation}
and
\begin{equation}
    \ket{v_{r_0}}
    :=
    \frac{
        (\hat n-\mu_{r_0})
        \ket{\widetilde u_{r_0}}
    }{
        \sigma_{r_0}
    }.
\label{eq:sld-tangent-vector}
\end{equation}
Then
\begin{equation}
 \langle\widetilde u_{r_0}|v_{r_0}\rangle=0,
    \qquad
    \langle v_{r_0}|v_{r_0}\rangle=1.
\end{equation}
Moreover, $\ket{v_{r_0}}$ belongs to the same residue sector as
$\ket{\widetilde u_{r_0}}$ and is orthogonal to the rank-one support
of that sector. Hence
\begin{equation}
    \rho_0\ket{v_{r_0}}=0.
\end{equation}
Equation~\eqref{eq:sld-tangent-vector} gives
\begin{equation}
    \hat n\ket{\widetilde u_{r_0}}
    =
    \mu_{r_0}\ket{\widetilde u_{r_0}}
    +
    \sigma_{r_0}\ket{v_{r_0}}.
\end{equation}
Therefore
\begin{equation}
\begin{split}
    \dot\rho_0
    &:=
    \left.
    \partial_\delta\rho_\delta
    \right|_{\delta=0}
\\
    &=
    \mathrm i[\hat n,\rho_0]
\\
    &=
    \mathrm i
    \sum_{r_0\in\mathcal R_M}
    \lambda_{r_0}\sigma_{r_0}
    \left(
        \ket{v_{r_0}}\!\bra{\widetilde u_{r_0}}
        -
        \ket{\widetilde u_{r_0}}\!\bra{v_{r_0}}
    \right).
\end{split}
\label{eq:sld-state-derivative}
\end{equation}
Using
\begin{equation}
    \rho_0\ket{\widetilde u_{r_0}}
    =
    \lambda_{r_0}\ket{\widetilde u_{r_0}},
    \qquad
    \rho_0\ket{v_{r_0}}=0,
\end{equation}
one verifies directly that the SLD-defining equation
\begin{equation}
    \dot\rho_0
    =
    \frac12
    \left(
        \mathcal L_0\rho_0+
        \rho_0\mathcal L_0
    \right).
\end{equation}
is satisfied by
\begin{equation}
    \mathcal L_0
    =
    2\mathrm i
    \sum_{r_0\in\mathcal R_M}
    \sigma_{r_0}
    \left(
        \ket{v_{r_0}}\!\bra{\widetilde u_{r_0}}
        -
        \ket{\widetilde u_{r_0}}\!\bra{v_{r_0}}
    \right).
\label{eq:canonical-sld}
\end{equation}
The nonzero-eigenvalue eigenvectors of this SLD operator are
\begin{equation}
    \ket{w_{r_0,\pm}}
    :=
    \frac{
        \ket{\widetilde u_{r_0}}
        \pm\mathrm i\ket{v_{r_0}}
    }{\sqrt2},
    \qquad
    \mathcal L_0\ket{w_{r_0,\pm}}
    =
    \pm2\sigma_{r_0}\ket{w_{r_0,\pm}}.
\label{eq:sld-eigenvectors}
\end{equation}
Thus Eq.~\eqref{eq:canonical-sld} is an SLD operator.

Accordingly, we can define the following projective measurement
\begin{equation}
    \mathsf M_{\rm SLD}
    :=
    \left\{
        \Pi_{r_0,+}^{\rm SLD},
        \Pi_{r_0,-}^{\rm SLD}
    \right\}_{r_0\in\mathcal R_M}
    \cup
    \left\{
        \Pi_0^{\rm SLD}
    \right\},
\label{eq:sld-povm}
\end{equation}
where
\begin{equation}
    \Pi_{r_0,\pm}^{\rm SLD}
    :=
    \ket{w_{r_0,\pm}}\!\bra{w_{r_0,\pm}},
    \qquad
    \Pi_0^{\rm SLD}
    :=
    I-
    \sum_{r_0\in\mathcal R_M}
    \left(
        \Pi_{r_0,+}^{\rm SLD}
        +
        \Pi_{r_0,-}^{\rm SLD}
    \right),
\end{equation}
 At $\delta=0$,
\begin{equation}
    p_{r_0,\pm}(0)
    :=
    \operatorname{Tr}
    \left[
        \Pi_{r_0,\pm}^{\rm SLD}\rho_0
    \right]
    =
    \frac{\lambda_{r_0}}{2},
\end{equation}
and
\begin{equation}
    \left.
    \partial_\delta
    p_{r_0,\pm}(\delta)
    \right|_{\delta=0}
    =
    \pm\lambda_{r_0}\sigma_{r_0}.
\end{equation}
Hence
\begin{equation}
\begin{split}
    \mathcal J
    \!\left(
        \mathsf M_{\rm SLD};0
    \right)
    &=
    \sum_{r_0\in\mathcal R_M}
    \sum_{\pm}
    \frac{
        \lambda_{r_0}^2\sigma_{r_0}^2
    }{
        \lambda_{r_0}/2
    }
\\
    &=
    4
    \sum_{r_0\in\mathcal R_M}
    \lambda_{r_0}\sigma_{r_0}^2.
\end{split}
\end{equation}
The complementary outcome has probability $O(\delta^4)$ because
the complete first-order tangent space is already contained in the
span of the vectors
$\ket{\widetilde u_{r_0}}$ and $\ket{v_{r_0}}$.
It therefore contributes zero Fisher information at the operating
point.

Finally,
\begin{equation}
    \operatorname{Tr}
    \left(
        \rho_0\mathcal L_0^2
    \right)
    =
    4
    \sum_{r_0\in\mathcal R_M}
    \lambda_{r_0}\sigma_{r_0}^2,
\end{equation}
so the classical Fisher information of the stated measurement equals
the QFI.

\section{Experimentally feasible receivers}
\label{apen:feasiblemeas}

 This appendix analyses in detail the two receivers used in Sec.~\ref{sec:neartermprocessor}: balanced homodyne
detection (Appendix~\ref{sec:additional-homodyne-sv}), and a variational Gaussian circuit followed by finite-resolution
photon counting (Appendix~\ref{apen:variational-receiver}). We analyse both receivers in the presence of the noise models studied in Appendix~\ref{apen:qfiloss}.

\subsection{Balanced homodyne detection}
\label{sec:additional-homodyne-sv}

Balanced homodyne detection mixes the signal with a strong coherent local
oscillator on a $50{:}50$ beam splitter and records the difference
photocurrent. In the strong-local-oscillator limit, one measures the single
quadrature
\begin{equation}
     X_\vartheta
    :=
    \frac{
        \hat a e^{-\mathrm i\vartheta}
        +\hat a^\dagger e^{\mathrm i\vartheta}
    }{\sqrt2},
\end{equation}
where $\vartheta$ is the local-oscillator phase. Only one quadrature is
measured on each run. This is sufficient for local estimation because the
unknown phase rotates the quadrature distribution relative to the fixed
measurement axis. Choosing the phase $\vartheta_\star=\pi/(2M)$ places the first
angular harmonic surviving the $\mathbb Z_M$ twirl at maximal slope around
$\delta=0$.

Let $\rho_G$ be a centered single-mode Gaussian state whose covariance
eigenvalues are $v_1<v_2$. Equation~\eqref{eq:v_1v_2def} includes
the ideal squeezed vacuum, pure loss, and thermal loss as special cases. When the twirling channel imparts a rotation $\frac{2\pi k}{M}$, $X_\vartheta$ is a centered Gaussian with variance
\begin{equation}
    V_k(\delta)
    =\frac14\left\{
        v_2+v_1
        -(v_2-v_1)
        \cos\left[2\left(
        \vartheta-\delta-\frac{2\pi k}{M}
        \right)\right]
    \right\}.
\end{equation}
Define 
\begin{equation}
    A(t):=\frac{v_2+v_1}{8}t^2,
    \qquad
    z(t):=\frac{v_2-v_1}{8}t^2\;,
\label{eq:homodyne-A-z}
\end{equation}
such that the characteristic function conditioned on $k$ is 
\begin{equation}
\begin{split}
    \Phi_k(t;\delta)
    &:=
    \mathbb E[e^{itX_\vartheta}\mid k]
\\
    &=
    \exp\left[-\frac{t^2V_k(\delta)}{2}\right]
\\
    &=
    e^{-A(t)}
    \exp\left\{
        z(t)
        \cos\left[
        2\left(
        \vartheta-\delta-\frac{2\pi k}{M}
        \right)\right]
    \right\}\;.
\end{split}
\end{equation}

Since the rotation label $k$ is uniformly random and unobserved,
the characteristic function of the actual homodyne distribution is
\begin{equation}
    \Phi_\delta(t)
    =
    \frac1M
    \sum_{k=0}^{M-1}\Phi_k(t;\delta).
\end{equation}
Using
\begin{equation}
    e^{z\cos u}
    =
    I_0(z)+
    2\sum_{n=1}^\infty I_n(z)\cos(nu),
\end{equation}
where $I_k$ is the modified Bessel function, and the discrete Fourier identity
\begin{equation}
    \frac1M
    \sum_{k=0}^{M-1}
    e^{-4\pi i nk/M}
    =
    \begin{cases}
        1,&2n\equiv0\pmod M,\\
        0,&\text{otherwise},
    \end{cases}
\end{equation}
we find, for even $M$,
\begin{equation}
    \Phi_\delta(t)
    =
    e^{-A(t)}
    \left[
        I_0(z(t))
        +
        2\sum_{\ell=1}^\infty
        I_{\ell M/2}(z(t))
        \cos\bigl(
            \ell M(\vartheta-\delta)
        \bigr)
    \right].
\end{equation}

Phase diffusion convolves the orientation with a wrapped Gaussian
and therefore multiplies angular order $\ell M$ by
$e^{-\ell^2M^2\sigma_M^2/2}$ giving the final characteristic function as
\begin{equation}
\begin{split}
    \Phi_{\delta,\sigma_M}(t)
    &:={}
    \mathbb E_{\delta,\sigma_M}
    [e^{\mathrm i tX_\vartheta}]
\notag\\
    &={}
    e^{-A(t)}\left[
       I_0(z(t))
       +2\sum_{\ell=1}^{\infty}
       e^{-\ell^2M^2\sigma_M^2/2}
       I_{\ell M/2}(z(t))
       \cos\!\bigl(\ell M(\vartheta-\delta)\bigr)
       \right],
\end{split}
\label{eq:homodyne-characteristic}
\end{equation}
Since the homodyne distribution is even in $x$,
$\mathbb E[\sin(tX_\vartheta)]=0$, and hence
\begin{equation}
    \Phi_{\delta,\sigma_M}(t)
    =
    \mathbb E_{\delta,\sigma_M}
    [\cos(tX_\vartheta)].
\end{equation}
Differentiation at
$\vartheta_\star=\pi/(2M)$ leaves only odd values of $\ell$ and gives
\begin{equation}
    \left|
    \left.\partial_\delta
    \mathbb E_{\delta,\sigma_M}
    [\cos(tX_{\vartheta_\star})]
    \right|_{\delta=0}
    \right|
    =
    2M e^{-A(t)}
    |\Delta_{M/2,\sigma_M}(z(t))|\;.
\end{equation}
where
\begin{equation}
    \Delta_{M/2,\sigma_M}(z)
    :=
    \sum_{j=0}^{\infty}
    (-1)^j(2j+1)
    e^{-(2j+1)^2M^2\sigma_M^2/2}
    I_{(2j+1)M/2}(z).
\label{eq:homodyne-Delta-noisy}
\end{equation}

For any square-integrable statistic $T$ and score
$\ell_\delta=\partial_\delta\log p_\delta$,
\begin{equation}
    \partial_\delta\mathbb E_\delta[T]
    =
    \mathbb E_\delta
    [(T-\mathbb E_\delta[T])\ell_\delta].
\end{equation}
Cauchy--Schwarz therefore gives
\begin{equation}
    \mathcal J_{\rm hom}(\delta)
    \ge
    \frac{
        (\partial_\delta\mathbb E_\delta[T])^2
    }{
        \operatorname{Var}_\delta(T)
    }.
\end{equation}
Choosing $T(x)=\cos(tx)$ and using $\operatorname{Var}(T)\le1$ proves

\begin{equation}
    \mathcal J_{\rm hom}^{(\eta,\bar n_{\rm th},\sigma_M)}(0)
    \ge
    4M^2e^{-2A(t)}
    \Delta_{M/2,\sigma_M}(z(t))^2\;.
\label{eq:homodyne-CFI-lower}
\end{equation}

The bounded statistic $\cos(tX)$ is used only to certify a rigorous CFI
lower bound. Experimentally, every homodyne outcome $X$ may be retained and
the full one-dimensional likelihood may be used. Setting $\sigma_M=0$
recovers the loss and thermal-noise formulas without phase diffusion.

\subsubsection{Fixed mean photon number}

For fixed $v_1<v_2$, define
\begin{equation}
    d:=\frac{v_2-v_1}{2},
    \qquad
    \Psi(d):=
    \sqrt{1+d^2}
    +\log\!\left(
        \frac{d}{1+\sqrt{1+d^2}}
    \right),
\end{equation}
and
\begin{equation}
    B(v_1,v_2)
    :=
    \exp\!\left[
        \Psi(d)-\frac{v_2+v_1}{2}
    \right].
\label{eq:homodyne-fixed-B}
\end{equation}
The Debye expansion of $I_{M/2}(Md/2)$, together with exponential suppression of
the higher odd harmonics in Eq.~\eqref{eq:homodyne-Delta-noisy}, gives, for
$t_M=\sqrt{2M}$,
\begin{equation}
    \mathcal J_{\rm hom}^{(\eta,\bar n_{\rm th},\sigma_M)}(0)
    \ge
    e^{-M^2\sigma_M^2}
    \frac{4M}{\pi\sqrt{1+d^2}}
    B(v_1,v_2)^M[1+o(1)]\;.
\label{eq:homodyne-fixed-general}
\end{equation}
For ideal squeezed vacuum,
\begin{equation}
    d=\sinh(2r),
    \qquad
    B(e^{-2r},e^{2r})=\tanh r,
\end{equation}
so
\begin{equation}
    \mathcal J_{\rm hom}^{\rm SV,\sigma_M}(0)
    \ge
    e^{-M^2\sigma_M^2}
    \frac{4M}{\pi\cosh(2r)}
    \tanh^M r\,[1+o(1)]\;.
\label{eq:homodyne-fixed-cfi-improved}
\end{equation}
For fixed $\bar n=\sinh^2r>0$, fixed thermal-loss parameters with
$\eta>0$, and $\sigma_M=0$, the factor $B(v_1,v_2)$ is a positive constant
independent of $M$. Combining Eq.~\eqref{eq:homodyne-fixed-general} with the
classical upper bound gives
\begin{equation}
    \log
    \frac{\mathcal J_{\rm hom}}
    {\displaystyle
    \sup_{\rho_{\rm cl}}
    \QFI}
    \ge
    M\log M-O(M)-M^2\sigma_M^2.
\label{eq:homodyne-fixed-ratio-asymptotic}
\end{equation}
Thus $\sigma_M^2=O(\log M/M^2)$ changes the factorial separation by at
most a polynomial factor.

\subsection{Variational Gaussian receiver with photon counting}
\label{apen:variational-receiver}

Let $\rho_\delta$ be the output state to be measured. We consider an
$L$-layer Gaussian unitary
\begin{equation}
    U(\boldsymbol\theta)
    :=
    \prod_{\ell=1}^{L}
    R(\varphi_\ell)D(\alpha_\ell)S(\zeta_\ell),
\end{equation}
where
\begin{equation}
    R(\varphi)=e^{-\mathrm i\varphi\hat n},
    \qquad
    D(\alpha)=e^{\alpha\hat a^\dagger-\alpha^*\hat a},
    \qquad
    S(\zeta)=e^{(\zeta^*\hat a^2-\zeta\hat a^{\dagger2})/2}.
\end{equation}
The ordering convention is fixed by the circuit in
Fig.~\ref{fig:optmeas}; changing it only changes the parameterization of the
same variational family.

A photon counter with maximum resolved value $n_{\max}$ is described by
\begin{equation}
\begin{split}
    \Pi_n(\boldsymbol\theta)
    &=
    U(\boldsymbol\theta)^\dagger
    \ket n\!\bra n
    U(\boldsymbol\theta),
    \qquad 0\le n<n_{\max},
\notag\\
    \Pi_{n_{\max}}(\boldsymbol\theta)
    &=
    U(\boldsymbol\theta)^\dagger
    \left(
        \sum_{n=n_{\max}}^\infty\ket n\!\bra n
    \right)
    U(\boldsymbol\theta).
\end{split}
\end{equation}
The outcome probabilities and local CFI are
\begin{equation}
    p_n(\delta;\boldsymbol\theta)
    =
    \operatorname{Tr}
    [\rho_\delta\Pi_n(\boldsymbol\theta)],
\end{equation}
\begin{equation}
    \mathcal J_{\rm var}(\boldsymbol\theta)
    =
    \sum_{n=0}^{n_{\max}}
    \frac{
        [\partial_\delta
        p_n(\delta;\boldsymbol\theta)]^2
    }{
        p_n(\delta;\boldsymbol\theta)
    }
    \bigg|_{\delta=0},
\label{eq:variational-CFI}
\end{equation}
with zero-probability terms interpreted by continuity. Since
$\dot\rho_0=\mathrm i[\hat n,\rho_0]$,
\begin{equation}
    \left.\partial_\delta p_n\right|_{\delta=0}
    =
    \mathrm i\operatorname{Tr}
    \left(
        [\hat n,\rho_0]
        \Pi_n(\boldsymbol\theta)
    \right),
\end{equation}
so Eq.~\eqref{eq:variational-CFI} can be evaluated without finite
differencing. The reported receiver is
\begin{equation}
    \mathcal J_{\rm var}^{(L,n_{\max})}
    :=
    \max_{\boldsymbol\theta}
    \mathcal J_{\rm var}(\boldsymbol\theta).
\end{equation}

Optical loss, thermal noise, and phase diffusion are included simply by
taking $\rho_0$ in the above expressions to be the $\delta=0$ state in:
    \begin{equation}
    \rho_\delta
    :=
    \mathcal D_{\sigma_M}
    \circ\mathcal G_{\eta,\bar n_{\rm th}}
    \circ\mathcal E_{\delta,M}
    (\ket{r}\!\bra{r}),
\end{equation}
 In the Fock basis, phase diffusion is
implemented exactly as
\begin{equation}
    (\rho_0^{\sigma_M})_{nm}
    =
    e^{-\sigma_M^2(n-m)^2/2}(\rho_0)_{nm}.
\end{equation}
The numerical results in Fig.~\ref{fig:optmeas} already include fixed optical
loss and thermal noise. The same optimization can be repeated with phase
diffusion, but no analytic asymptotic guarantee is claimed for this
variational family.

Unlike homodyne detection, this receiver requires active Gaussian control
and photon-number resolution, but it can also approach the QFI closely at the
small photon numbers considered in Sec.~\ref{sec:neartermprocessor}.

\end{widetext}

\bibliography{Reference}
\bibliographystyle{naturemag}

\end{document}